 \newcommand{\ms}{\medskip}
 \newcommand{\n}{\noindent}
 \newcommand{\s}{\smallskip}
 \newcommand{\hs}[1]{\hspace*{ #1 mm}}
 \newcommand{\vs}[1]{\vspace*{ #1 mm}}
 \newcommand{\setempty}{\varnothing}
 \newcommand{\nat}{\mathbb{N}}
 \newcommand{\integer}{\mathbb{Z}}
 \newcommand{\co}{\mathrm{co}\mbox{-}}
 \newcommand{\AAA}{{\cal A}}
 \newcommand{\FF}{{\cal F}}
 \newcommand{\DD}{{\cal D}}
 \newcommand{\MM}{{\cal M}}
 \newcommand{\PP}{{\cal P}}
 \newcommand{\dl}{\mathrm{L}}
 \newcommand{\nl}{\mathrm{NL}}
 \newcommand{\p}{\mathrm{P}}
 \newcommand{\np}{\mathrm{NP}}
 \newcommand{\bpp}{\mathrm{BPP}}
 \newcommand{\fl}{\mathrm{FL}}
 \newcommand{\cfl}{\mathrm{CFL}}
 \newcommand{\dcfl}{\mathrm{DCFL}}
\theoremstyle{plain}
 \newtheorem{theorem}{Theorem}[section]
 \newtheorem{lemma}[theorem]{Lemma}
 \newtheorem{corollary}[theorem]{Corollary}
 \newtheorem{claim}{Claim}
 \newenvironment{proof}{\par \noindent
            {\bf Proof. \hs{2}}}{\hfill$\Box$ \vspace*{3mm}}
 \newenvironment{proofof}[1]{\vspace*{5mm} \par \noindent
         {\bf Proof of #1.\hs{2}}}{\hfill$\Box$ \vspace*{3mm}}
 \newcommand{\ceilings}[1]{\lceil #1 \rceil}
 \newcommand{\floors}[1]{\lfloor #1 \rfloor}
 \newcommand{\pair}[1]{\langle #1 \rangle}
\newcommand{\ignore}[1]{}
\newcommand{\track}[2]{[\: \begin{subarray}{c} #1 \\%
      #2 \end{subarray} ]}
\newcommand{\ac}[1]{\mathrm{AC}^{ #1 }}
\newcommand{\logcfl}{\mathrm{LOGCFL}}
\newcommand{\logdcfl}{\mathrm{LOGDCFL}}
\newcommand{\stevec}{\mathrm{SC}}
\newcommand{\Lmreduces}{\leq^{\mathrm{L}}_{m}}
\begin{document}
\pagestyle{plain}
\setcounter{page}{1}

\begin{center}
{\Large {\bf Between SC and LOGDCFL: Families of Languages Accepted by Logarithmic-Space Deterministic Auxiliary Depth-$k$ Storage Automata}}\footnote{This exposition corrects and expands its preliminary report, which appeared in the Proceedings of the 27th International Conference on Computing and Combinatorics  (COCOON 2021), Tainan, Taiwan, October 24--26, 2021, Lecture Notes in Computer Science, Springer, vol.  13025, pp. 164--175, 2021. An oral presentation was given online due to the coronavirus pandemic.}
\ms\\

{\sc Tomoyuki Yamakami}\footnote{Affiliation: Faculty of Engineering, University of Fukui, 3-9-1 Bunkyo, Fukui 910-8507, Japan} \ms\\
\end{center}


\begin{abstract}
The closure of deterministic context-free languages under logarithmic-space many-one reductions ($\dl$-m-reductions), known as LOGDCFL, has been studied in depth from an aspect of parallel computability because it is nicely situated between $\dl$ and $\ac{1}\cap\mathrm{SC}^2$.
By replacing a memory device from pushdown stacks with access-controlled storage tapes, we introduce a computational model of one-way deterministic depth-$k$ storage automata ($k$-sda's) whose tape cells are freely modified during the first $k$ accesses and then become blank forever.
These $k$-sda's naturally induce the language family $k\mathrm{SDA}$. Similarly to $\mathrm{LOGDCFL}$, we study the closure $\mathrm{LOG}k\mathrm{SDA}$ of all languages in $k\mathrm{SDA}$ under $\dl$-m-reductions.
We demonstrate that $\dcfl\subseteq k\mathrm{SDA}\subseteq \mathrm{SC}^k$ by significantly extending Cook's early result (1979) of $\dcfl\subseteq \mathrm{SC}^2$.
The entire hierarch of $\mathrm{LOG}k\mathrm{SDA}$ for all $k\geq1$ therefore  lies between $\mathrm{LOGDCFL}$ and $\mathrm{SC}$.
As an immediate consequence, we obtain the same simulation bounds for Hibbard's limited automata.
We further characterize $\mathrm{LOG}k\mathrm{SDA}$ in terms of a new machine model, called  logarithmic-space deterministic auxiliary depth-$k$ storage automata that run  in polynomial time. These machines are  as powerful as a polynomial-time two-way multi-head deterministic depth-$k$ storage automata. We also provide a ``generic'' $\mathrm{LOG}k\mathrm{SDA}$-complete language under $\dl$-m-reductions by constructing a two-way universal simulator working for all $k$-sda's.

\s
\n{\bf Keywords.}
{parallel computation, deterministic context-free language, logarithmic-space many-one reduction, storage automata, LOGDCFL, SC, auxiliary  storage automata, multi-head storage automata, limited automata}
\end{abstract}

\sloppy

\section{DCFL, LOGDCFL, and Beyond}\label{sec:LOGDCFL-beyond}

In the literature, numerous computational models and associated language families have been proposed to capture various aspects of \emph{parallel computation}. Of those language families, we wish to pay special attention to the family known as $\mathrm{LOGDCFL}$, which is obtained from $\mathrm{DCFL}$, the family of all \emph{deterministic context-free (dcf) languages}, by taking the closure under logarithmic-space many-one reductions (or $\dl$-m-reductions, for short) \cite{Coo71,Sud78}.
These dcf languages were first defined in 1966 by Ginsburg and Greibach \cite{GG66} and their fundamental properties were studied intensively since then.
It is well known that $\dcfl$ is a proper subfamily of $\cfl$, the family of all context-free languages, because the context-free language $\{ww^R\mid w\in\{0,1\}^*\}$ (where $w^R$ means the reverse  of $w$), for instance,  does not belong to $\dcfl$. The dcf languages in general behave quite differently from the context-free languages. As an example of such differences, $\dcfl$ is closed under complementation, while $\cfl$ is not. This fact structurally distinguishes between $\dcfl$ and $\cfl$. Moreover, dcf languages require  computational resources of polynomial time and $O(\log^2{n})$ space simultaneously \cite{Coo79}; however, we do not know the same statement holds for context-free languages.
Although dcf languages are accepted by \emph{one-way deterministic pushdown automata} (or 1dpda's), these languages have a close connection to highly parallelized computation because of the nice inclusions
$\mathrm{NC}^1 \subseteq \dl \subseteq \nl \subseteq \logcfl \subseteq \mathrm{AC}^1$ and
$\dl\subseteq \logdcfl\subseteq \stevec^2$, and thus $\mathrm{LOGDCFL}$ has played a key role in discussing parallel complexity issues within $\p$.

It is known that $\mathrm{LOGDCFL}$ can be characterized  without using $\dl$-m-reductions by several other intriguing machine models, which include:
Cook's polynomial-time logarithmic-space \emph{deterministic auxiliary pushdown automata} \cite{Coo71},  \emph{two-way multi-head deterministic pushdown automata} running in polynomial time,
logarithmic-time CROW-PRAMs with polynomially many processors \cite{DR86},  and circuits made up of polynomially many multiplex select gates having logarithmic depth \cite{FLR96} or having polynomial proof-tree size \cite{MRV99}.
Such a variety of characterizations prove $\mathrm{LOGDCFL}$ to be a robust and fully-applicable notion in computer science.

Another important feature of $\logdcfl$ (as well as its underlying $\dcfl$) is the existence of ``complete'' languages, which are practically the most difficult languages in $\dcfl$ to recognize. Notice that a language $L$ is said to be \emph{$\dl$-m-complete} for a family $\FF$ of languages (or $L$ is \emph{$\FF$-complete}, for short) if $L$ belongs to $\FF$ and every language in $\FF$ is $\dl$-m-reducible to $L$.
Sudborough \cite{Sud78} first constructed such a language, called  $L_0^{(m)}$, which possesses the highest complexity (which he quoted as ``tape hardest'') among all dcf languages under $\dl$-m-reductions; therefore, $L_0^{(m)}$ is $\dl$-m-complete for $\dcfl$ and also for $\mathrm{LOGDCFL}$.
Using Sudborough's hardest languages, Lohrey \cite{Loh05} lately  presented another $\mathrm{LOGDCFL}$-complete problem based on semi-Thue systems.
Nonetheless, only a few languages are known today to be complete for $\dcfl$ as well as $\mathrm{LOGDCFL}$.

A large void seems to lie between $\dcfl$ and $\cfl$ (as well as $\co\cfl$). This void has been filled with, for example, the union hierarchy $\{\dcfl[k]\mid k\geq1\}$ and the intersection hierarchy $\{\dcfl(k)\mid k\geq1\}$ over $\dcfl$, where $\dcfl[k]$ (resp., $\dcfl(k)$) is composed of all unions (resp., intersections) of $k$ dcf languages. They truly form distinctive infinite hierarchies \cite{LW73,Yam20}.
Taking a quite different approach, Hibbard \cite{Hib67} devised specific rewriting systems, known as \emph{deterministic scan limited automata}. Those rewriting system were lately remodeled in \cite{PP14,PP15} as
single input/storage-tape 2-way deterministic linear-bounded automata that can modify the contents of their tape cells whenever the associated tape heads access the tape cells (when a tape head makes a turn, however, we count it twice); however, such  modifications are limited to only the first $k$ accesses and then the tape cells are \emph{frozen} forever.
Those machines are dubbed as \emph{deterministic $k$-limited  automata} (or $k$-lda's, for short).
Numerous followup studies, including Pighizzini and Prigioniero \cite{PP19}, Kutrib and Wendlandt \cite{KW17}, and Yamakami \cite{Yam19}, have lately revitalized an old study on $k$-lda's.
It is possible to impose on $k$-lda's the so-called \emph{blank-skipping property} \cite{Yam19}, by which inner states of $k$-sda's cannot be changed while reading any blank symbol.
A drawback of Hibbard's model is that the use of a single tape prohibits us from accessing input and memory simultaneously.
Seemingly, this drawback makes it difficult to construct a language that is ``hard'' for the family of all languages recognized by $k$-lda's in a way  similar to the construction of Sudborough's hardest languages.

It seems quite natural to seek out a reasonable extension of $\mathrm{LOGDCFL}$ by generalizing its underlying machines in a simple way.
A basis of $\mathrm{LOGDCFL}$ is of course 1dpda's, each of which is equipped with a read-once\footnote{A read-only tape is called \emph{read once} if,  whenever it reads a tape symbol (except for $\varepsilon$-moves, if any), it must move to the next unread cell.} input tape together with a storage device called a \emph{stack}.
Each stack allows two major operations. A pop operation is a deletion of a symbol and a push operation is a rewriting of a symbol on the topmost stack cell.
However, the usage of pushdown storage seems too restrictive in practice, and thus
various extensions of such pushdown automata have been sought in the past literature.
For instance, a \emph{stack automaton} of Ginsburg, Greibach, and Harrison \cite{GGH67a,GGH67b} is capable of freely traversing the inside of the stack to access each stored item but it is disallowed to  modify them unless the scanning stack head eventually comes to the top of the stack.
Thus, each cell of the stack could be accessed a number of times.
Meduna's \emph{deep pushdown automata} \cite{Med06} also allow stack heads to move deeper into the content of the stacks and to replace some symbols by appropriate strings. Other extensions of pushdown automata include \cite{Cha88,Iba71}.
To argue parallel computations, we intend to seek for a reasonable restriction of stack automata by replacing stacks with access-controlled storage devices. Each cell content of such a storage device is modified by its own tape head, which moves sequentially back and forth along the storage tape. This special tape and its tape head naturally allow  underlying machines to make more flexible memory manipulations.

In real-life circumstances, it seems reasonable to limit the number of times to access data sets stored in the storage device. For instance, rewriting data items into blocks of a memory device, such as external hard drives or rewritable DVDs, is usually costly and it may be restricted during each execution of a computer program.
We thus need to demand that every memory cell on this device can be modified only during the first few accesses and, in case of exceeding the intended access limit,
the storage cell turns unusable and no more rewriting is possible.
For simplicity, we refer to the number of times that the content of a storage cell is modified as ``depth''. We need to distinguish two circumstances depending on whether we allow or disallow a free access to ``new'' input symbols while scanning such unusable data sets. We later use the terms of
``immunity'' or ``susceptibility'' to the depth of storage devices.
We leave a further discussion on this issue to Section \ref{sec:TMs}.

To understand the role of depth limit for an underlying machines, let us consider how to recognize the non-context-free language $L_{abc}=\{a^nb^nc^{2n}\mid n\geq0\}$ under an additional requirement that new input symbols are only read while scanning storage cells are not yet frozen.
Given an input of the form $a^lb^mc^n$, we first write $a^l$ into the first $l$ cells of the storage device, check if $l=m$ by simultaneously reading $b^m$ and traversing the storage device  backward by changing $a$ to $b$, and then check if $l+m=n$ by simultaneously reading $c^n$ together with moving the device's scanning head back and forth by changing $b$ to $c$ and then $c$ to $B$ (frozen blank symbol). This procedure requires its depth limit to be $4$.

A storage device whose cells have depth at most $k$ is called a \emph{depth-$k$ storage tape} in this exposition and its scanning head is hereafter cited as a \emph{depth-$k$ storage-tape head} for convenience.
The machines equipped with those devices, where each storage-tape cell is initially ``empty'' and turned \emph{frozen blank} after exceeding its depth limit, are succinctly called   \emph{one-way deterministic depth-$k$ storage automata} (or $k$-sda's, for short).
Our $k$-sda's naturally expand Hibbard's $k$-lda's.\footnote{This claim comes from the fact that Hibbard's rewriting systems can be forced to satisfy the blank-skipping property without compromising their computational power \cite{Yam19}.}
The requirement of turning cells into frozen blank is imperative because, without it, the machines become as powerful as polynomial-time Turing machines. This statement directly follows from the fact that non-erasing stack automata can recognize the \emph{circuit value problem}, which is a $\p$-complete problem.

For convenience, we introduce the notation $k\mathrm{SDA}$ for each index $k\geq2$ to express the family of all languages recognized by those ``depth-susceptible'' $k$-sda's (for a more precise definition, see Section \ref{sec:TMs}) whereas the notation $k\mathrm{SDA}_{imm}$ is reserved for the language family induced by ``depth-immune'' $k$-sda's.
As the aforementioned example of $L_{abc}$ shows, $4\mathrm{SDA}$ contains even non-context-free languages.
Analogously to forming $\mathrm{LOGDCFL}$ from $\mathrm{DCFL}$, for any index $k\geq2$, we define  $\mathrm{LOG}k\mathrm{SDA}$ as the closure of $k\mathrm{SDA}$ under $\dl$-m-reductions.
It follows from the definitions that $\mathrm{LOGDCFL}\subseteq \mathrm{LOG}k\mathrm{SDA} \subseteq \mathrm{LOG}(k+1)\mathrm{SDA} \subseteq \p$.
Among many intriguing questions, we wish to raise the following three simple questions regarding our new language family $k\mathrm{SDA}$ as well as its $\dl$-m-closure $\mathrm{LOG}k\mathrm{SDA}$.

(1) What is the computational complexity of language families $k\mathrm{SDA}$ as well as $\mathrm{LOG}k\mathrm{SDA}$?

(2) Is there any natural machine model that can precisely characterize  $\mathrm{LOG}k\mathrm{SDA}$ in order to avoid the use of $\dl$-m-reductions?

(3) Is there any language that is $\dl$-m-complete for  $\mathrm{LOG}k\mathrm{SDA}$?

The sole purpose of this exposition is to partially answer these questions through Sections \ref{sec:auxiliary}--\ref{sec:upper-bounds} after a formal introduction of $k$-sda's in Section \ref{sec:basics}.

\section{Introduction of Storage Automata}\label{sec:basics}

We formally define a new computational model, dubbed as \emph{deterministic storage automata}, and present basic properties of them.

\subsection{Numbers, Sets, Languages, and Turing Machines}\label{sec:numbers-sets}

We begin with fundamental notions and notation necessary to introduce a new computation model of automata with depth-bounded storage tapes.

The two notations  $\integer$ and $\nat$ represent the set of all \emph{integers} and that of all \emph{natural numbers} (i.e., nonnegative integers), respectively. Given two numbers $m,n\in\integer$ with $m\leq n$, $[m,n]_{\integer}$ denotes the \emph{integer interval} $\{m,m+1,m+2,\ldots,n\}$. In particular, when $n\geq1$, we abbreviate $[1,n]_{\integer}$ as $[n]$.
We use the \emph{binary representations} of natural numbers. For such a representation $x$, the notation $(x)_{(2)}$ denotes the corresponding natural number of $x$. For instance, we obtain $(0)_{(2)}=0$, $(1)_{(2)}=1$, $(10)_{(2)}=2$, $(11)_{(2)}=3$, $(100)_{(2)}=4$, etc.
Given a set $S$, $\PP(S)$ denotes the \emph{power set} of $S$, namely, the set of all subsets of $S$.

An \emph{alphabet} is a finite nonempty set of ``symbols'' or ``letters.''
Given any alphabet $\Sigma$, a \emph{string} over $\Sigma$ is a finite sequence of symbols in $\Sigma$. The \emph{length} of a string $x$ is the total number of symbols in $x$ and it is denoted by $|x|$. The special notation $\varepsilon$ is used to express the \emph{empty string} of length $0$.
The notation $\Sigma^*$ denotes the set of all strings over $\Sigma$. A \emph{language} over $\Sigma$ is simply a subset of $\Sigma^*$.
As customarily, we freely identify a \emph{decision problem} with its corresponding language.
Given a string $x=x_1x_2\cdots x_n$ with $x_i\in\Sigma$ for all $i\in[n]$,  the \emph{reverse} of $x$ is $x_nx_{n-1}\cdots x_1$ and is denoted $x^R$.
For two strings $x$ and $y$ over the same alphabet, $x$ is said to be a \emph{prefix} of $y$ if there exists a string $z$ for which $y=xz$. In this case, $z$ is called a \emph{suffix} of $y$. Given a language $A$ over $\Sigma$, $Pref(A)$ denotes the set of all prefixes of any string in $A$, namely, $\{w\mid \exists y\in\Sigma^*[wy\in A]\}$.


We assume the reader's familiarity with multi-tape Turing machines and we abbreviate \emph{deterministic Turing machines} as DTMs. To handle a ``sub-linear'' space DTM, we need to assume that its input tape is read only and an additional rewritable \emph{index tape} is used to access a tape cell whose address is specified by the content of the index tape.
More precisely, a machine wants to access the $i$th input-tape cell, then it must write the binary expression of $i$ onto this index tape and then enters a designated ``query'' state, which triggers to retrieve the content of the $i$th input-tape cell and copy it into the last blank cell of the index tape so that the machine can freely read it.
Since the index tape requires $\ceilings{\log{|x|}}$ bits to specify each symbol of an input $x$, the space limitation of an underlying machine is thus applied to only work tapes. We assume the reader's familiarity with fundamental complexity classes, such as $\dcfl$, $\cfl$, $\dl$, $\nl$, and $\p$.
Another important complexity class is Steve's class $\mathrm{SC}^k$, for each index $k\geq1$, which is the family of all languages recognized by DTMs in polynomial time using $O(\log^k{n})$ space \cite{Coo71}. Let $\mathrm{SC}$ denote the union $\bigcup_{k\in\nat^{+}} \mathrm{SC}^k$. It follows that $\dl = \mathrm{SC}^1 \subseteq \mathrm{SC}^k\subseteq \mathrm{SC}^{k+1} \subseteq \mathrm{SC}\subseteq \p$.

In order to compute a function on strings, we further provide a DTM with an extra write-once output tape so that the machine produces output strings, where a tape is \emph{write-once} if its tape head never moves to the left and,  whenever its tape head writes a nonempty symbol, it must move to the right. All (total) functions computable by such DTMs in polynomial time using only logarithmic space form the function class, known as $\fl$.

Given two languages $L_1$ over alphabet $\Sigma_1$ and $L_2$ over $\Sigma_2$, we say that $L_1$ is \emph{$\dl$-m-reducible} to $L_2$ (denoted by $L_1\Lmreduces L_2$) if there exists a function $f$ computed by an appropriate polynomial-time DTM using only $O(\log{n})$ space such that, for any $x\in\Sigma_1^*$, $x\in L_1$ iff $f(x)\in L_2$. We say that $L_1$ is \emph{inter-reducible} to $L_2$ via $\dl$-m-reductions (denoted by $L_1\equiv^{\mathrm{L}}_{m} L_2$) if both $L_1\Lmreduces L_2$ and $L_2\Lmreduces L_1$ hold.


We write $\dcfl$ for the collection of all deterministic context-free (dcf) languages, which are recognized by one-way deterministic pushdown automata (or 1dpda's, for short). The notation $\logdcfl$ expresses the \emph{$\dl$-m-closure} of $\dcfl$, namely, $\{A\mid \exists B\in\dcfl[A\Lmreduces B]\}$.

In relation to our new machine model, introduced in Section \ref{sec:TMs}, we briefly explain Hibbard's  ``scan-limited automata'' \cite{Hib67}, which were lately reformulated by Pighizzini and Pisoni \cite{PP14,PP15} using a single-tape Turing machine model. This exposition follows their formulation. For any positive integer $k$, a \emph{deterministic $k$-limited automaton} (or a $k$-lda, for short) is a single-tape linear automaton with two endmarkers, where an input/work tape initially holds an input string and its  tape head can modify the content of each tape cell during the first $k$ visits (when the tape head makes a turn, we double count the visit) of the tape cell.

\subsection{Storage Tapes and Storage Automata}\label{sec:TMs}

We expand the standard model of pushdown automata by substituting its stack for a more flexible storage device, called a storage tape. Formally, a \emph{storage tape} is a semi-infinite rewritable tape whose cells are initially blank (filled with a symbol $\Box$) and are accessed sequentially by a tape head that can move back and forth along the tape by changing tape symbols as it passes through.

In what follows, we fix a constant $k\in\nat^{+}$. A \emph{one-way  deterministic depth-$k$ storage automaton} (or a $k$-sda, for short) $M$ is a 2-tape DTM (equipped only with a read-only input tape and a rewritable work tape) of the form $(Q,\Sigma, \{\Gamma^{(e)}\}_{e\in[0,k]_{\integer}}, \{{\triangleright,\triangleleft}\}, \delta,q_0,Q_{acc},Q_{rej})$ with a finite set $Q$ of inner states, an input alphabet $\Sigma$, storage alphabets $\Gamma^{(e)}$ for indices  $e\in[0,k]_{\integer}$ with $\Gamma = \bigcup_{e\in[0,k]_{\integer}} \Gamma^{(e)}$, a transition function $\delta$ from $Q\times \check{\Sigma}\times \Gamma$ to $Q\times \Gamma \times D_1\times D_2$ with $\check{\Sigma} =\Sigma\cup\{{\triangleright,\triangleleft}\}$,  $D_1=\{0,+1\}$,  and $D_2=\{-1,0,+1\}$, an initial state $q_0$ in $Q$, and sets $Q_{acc}$ and $Q_{rej}$ of accepting states and rejecting states, respectively, satisfying both $Q_{acc}\cup Q_{rej}\subseteq Q$ and $Q_{acc}\cap Q_{rej}=\setempty$,
provided that $\Gamma^{(0)} =\{\Box\}$ (where $\Box$ is a distinguished initial blank symbol), $\Gamma^{(k)}=\{{\triangleright},B\}$ (where $B$ is a unique frozen blank symbol) and $\Gamma^{(e_1)}\cap \Gamma^{(e_2)}=\setempty$ for any distinct pair $e_1,e_2\in[0,k]_{\integer}$. The sets $D_1$ and $D_2$ indicate the directions of the input-tape head and those of the storage-tape head, respectively.
The choice of $D_1$ forces the input tape to be read once. We say that the input tape is \emph{read once} if its tape head either moves to the right or stays still with scanning no input symbol.
A single move (or step) of $M$ is dictated by $\delta$. If $M$ is in inner state $q$, scanning $\sigma$ on the input tape and $\tau$ on the storage tape, a transition $\delta(q,\sigma,\gamma) = (p,\xi,d_1,d_2)$ forces $M$ to change $q$ to $p$, write $\xi$ over $\gamma$, and move the input-tape head in direction $d_1$ and the storage-tape head in direction $d_2$.
The \emph{depth value} $dv(\gamma)$ of a symbol $\gamma$ is the index $e$ for which $\gamma\in\Gamma^{(e)}$.

All tape cells are indexed by natural numbers from left to right, where the leftmost tape cell is the \emph{start cell}, which is indexed $0$.
An input tape has endmarkers $\{{\triangleright,\triangleleft}\}$ and a storage tape has only  the left endmarker $\triangleright$.
When an input string $x$ is given to the input tape, it should be surrounded by the two endmarkers as $\triangleright\, x\, \triangleleft$ so that $\triangleright$ is located at the start cell and $\triangleleft$ is at the cell indexed $|x|+1$.
For any index $i\in\nat$, $x_{(i)}$ denotes the tape symbol written on the $i$th input-tape cell, provided that $x_{(0)} ={\triangleright}$ (left endmarker) and $x_{(n+1)}={\triangleleft}$ (right endmarker).
Similarly, when $z$ represents the non-$\Box$ portion of the content of a storage tape, the notation $z_{(i)}$ expresses the symbol written in the $i$th tape cell. In particular, $z_{(0)}={\triangleright}$.

For the storage tape, we apply the following rewriting rules.
Whenever the storage-tape head passes through a tape cell containing a symbol in $\Gamma^{(e)}$ with $e<k$, the machine must replace it by another symbol in $\Gamma^{(e+1)}$ except for the case of the following ``turns''.
Remember that no symbol in $\Gamma^{(k)}$ can be modified at any time.
We distinguish two types of turns.
Although we use various storage alphabets $\Gamma^{(0)},\Gamma^{(1)},\ldots,\Gamma^{(k)}$, since those alphabets are mutually disjoint, we can easily discern from which direction the tape head arrives simply by reading a storage tape symbol written in each tape cell.
As customary, we explicitly demand that, on a storage tape, no machine writes $\triangleright$ over any non-$\triangleright$ symbol.
A \emph{left turn at step $t$} refers to $M$'s step at which, after $M$'s tape head moves to the right at step $t-1$,
it moves to the left at step $t$.
In other words, $M$ takes two transitions $\delta(q,\sigma_1,\gamma_1)=(p,\tau,d,+1)$ at step $t-1$ and $\delta(p,\sigma_2,\gamma_2)=(r,\xi,e,-1)$ at step $t$.
Similarly, we say that $M$ makes a \emph{right turn at step $t$} if $M$'s tape head moves from the left at step $t-1$ and changes its direction to the right at step $t$. Whenever a tape head makes a turn, we treat such a case as ``double accesses.''
More formally, at a turn, any symbol in $\Gamma^{(e)}$ with $e<k$ must be changed to another symbol in $\Gamma^{(\min\{k,e+2\})}$.

The \emph{depth-$k$ requirement} demands that, assuming that $M$ modifies $\sigma\in\Gamma^{(e)}$ written on a storage tape with $e\in[0,k-1]_{\integer}$ to $\tau$ and moves its storage-tape head in direction $d$, if $d= (-1)^e$ (i.e., making no turn), then $\tau$ must belong to $\Gamma^{(\min\{e+1,k\})}$, and, if $d = (-1)^{e+1}$ (i.e., making a turn), then $\tau$ must be in $\Gamma^{(\min\{e+2,k\})}$.
A storage tape that satisfies the depth-$k$ requirement is succinctly called a \emph{depth-$k$ storage tape}.

Instead of making $\varepsilon$-moves (i.e., a tape head neither moves nor reads any tape symbol), we allow a tape head to make a \emph{stationary move},\footnote{The use of stationary move is made in this exposition only for convenience sake. It is also possible to define $k$-sda's using $\varepsilon$-moves in place of stationary moves.}
by which the tape head stays still and the currently scanned symbol is unaltered.
The tape head direction ``$0$'' indicates such a stationary move.
For clarity, if the input-tape (resp., the storage-tape) head makes a stationary move, then we call this move an \emph{input-stationary move} (resp., a \emph{storage-stationary move}).

Let us consider two different models whose input-tape head is either   ``depth-susceptible'' or ``depth-immune'' to the content of each storage-tape cell. Recall the \emph{blank-skipping property} of a $k$-lda \cite{Yam20}, which in essence states that, while the $k$-lda is scanning the blank symbol, it should not change its inner state.
The $k$-sda $M$ is called \emph{depth-susceptible} if, for any storage symbol  $\gamma$ scanned currently, (i) if $\gamma$ is in $\Gamma^{(k-1)}\cup\Gamma^{(k)}$, then the input-tape head must make a stationary move, and (ii) if $\gamma$ is frozen blank, then the current inner state should not change; namely, for any transition $\delta(q,\sigma,\gamma) = (p,\xi,d_1,d_2)$, (i$'$) $\gamma\in\Gamma^{(k-1)}\cup \Gamma^{(k)}$ implies $d_1=0$ and (ii$'$) $\gamma=B$ implies $q=p$.
On the contrary, the machine is \emph{depth-immune} if there is no restriction.

A \emph{surface  configuration} of $M$ on input $x$ is of the form $(q,l_1,l_2,z)$ with $q\in Q$, $l_1\in[0,|x|+1]_{\integer}$, $l_2\in\nat$, and $z\in(\Gamma-\{\Box\})^*$, which indicates the situation where $M$ is in inner state $q$, the storage tape contains $z$ (except for the tape symbol $\Box$), and two tape heads scan the $l_1$th cell of the input tape and the $l_2$th cell of the storage tape.

The \emph{initial surface configuration} has the form $(q_0,\triangleright,0,0)$ and $\delta$ describes how to reach the next surface configuration
in a single step. For convenience, we define the \emph{depth value} $dv(C)$ of a surface configuration $C=(q,l_1,l_2,z)$ to be the depth value $dv(z_{(l_2)})$.
An \emph{accepting surface configuration} (resp., a \emph{rejecting surface configuration}) is of the form $(q,h_1,h_2,w)$ with $q\in Q_{acc}$ (resp., $q\in Q_{rej}$). A \emph{halting configuration} means either an accepting configuration or a rejecting configuration.
A \emph{computation} of $M$ on input $x$ starts with the initial surface  configuration with the input $x$ and either ends with a halting surface configuration or continues forever.
The $k$-sda $M$ \emph{accepts} (resp., \emph{rejects}) $x$ if the computation of $M$ on $x$ reaches an accepting surface configuration (resp., a rejecting surface configuration).
For readability, we will drop the word ``surface'' altogether in the subsequent sections since we use only surface configurations.

For a language $L$ over $\Sigma$, we say that $M$ \emph{recognizes} (accepts or solves) $L$ if, for any input string $x\in\Sigma^*$, (i) if $x\in L$, then $M$ accepts $x$ and (ii) if $x\notin L$, then $M$ rejects $x$.
This implies that $M$ halts within finite steps on all inputs.
For two $k$-sda's $M_1$ and $M_2$ over the same input alphabet $\Sigma$, we say that $M_1$ is \emph{(computationally) equivalent} to $M_2$ if, for any input $x\in\Sigma^*$, $M_1$ accepts (resp., rejects) $x$ iff $M_2$ accepts (resp., rejects) $x$.

For notational convenience, we write $k\mathrm{SDA}$ for the collection of all languages recognized by depth-susceptible  $k$-sda's and $\mathrm{LOG}k\mathrm{SDA}$ for the collection of languages that are $\dl$-m-reducible to certain languages in $k\mathrm{SDA}$.
Moreover, we set  $\omega\mathrm{SDA}$ to be the union $\bigcup_{k\in\nat^{+}} k\mathrm{SDA}$.
With this notation, the non-context-free language $L_{abc}$, discussed in Section \ref{sec:LOGDCFL-beyond}, belongs to $4\mathrm{SDA}$. Thus, $4\mathrm{SDA}\nsubseteq \cfl$ follows instantly.
Based on the depth-immune model of $k$-sda, we similarly define $k\mathrm{SDA}_{imm}$, $\omega\mathrm{SDA}_{imm}$, and  $\mathrm{LOG}k\mathrm{SDA}_{imm}$.


The depth-susceptibility plays a crucial role in showing the following equality between $2\mathrm{SDA}$ and $\mathrm{DCFL}$. This fact supports the claim that $k$-sda's naturally expand 1dpda's. It follows that $\dcfl \subseteq k\mathrm{SDA} \subseteq (k+1)\mathrm{SDA} \subseteq \omega\mathrm{SDA}$ for any $k\geq2$.

\begin{lemma}\label{2SDA=DCFL}
$2\mathrm{SDA} = \mathrm{DCFL}$.
\end{lemma}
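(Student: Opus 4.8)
The plan is to prove the two inclusions $\mathrm{DCFL}\subseteq 2\mathrm{SDA}$ and $2\mathrm{SDA}\subseteq\mathrm{DCFL}$ by mutual simulation between one-way deterministic pushdown automata (1dpda's) and depth-susceptible $2$-sda's. The whole argument rests on one structural observation about depth-$2$ storage tapes. Since $\Gamma^{(2)}=\{\triangleright,B\}$, the depth-$2$ requirement forces any depth-$1$ symbol to be replaced by a member of $\Gamma^{(\min\{1+1,2\})}=\Gamma^{(\min\{1+2,2\})}=\Gamma^{(2)}$ whenever the storage head leaves it in either direction, so (as $\triangleright$ may never be overwritten) a cell that has once been written becomes the frozen blank $B$ the instant the head steps off it. Consequently, at every step the storage tape reads $\triangleright$, then a word over $\Gamma^{(1)}\cup\{B\}$, then an infinite run of $\Box$; the head scans a depth-$1$ cell, the left endmarker $\triangleright$, a $B$, or else the leftmost $\Box$, which we call the \emph{frontier}; and, by depth-susceptibility, the inner state cannot change while the head scans a $B$, so every maximal stretch of steps spent crossing a block of $B$'s is forced, consumes no input, preserves the state, and is therefore wholly unobservable.

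For $\mathrm{DCFL}\subseteq 2\mathrm{SDA}$, I would take a 1dpda $P$, normalized so that each step reads at most one input symbol and replaces its top stack symbol by a string of length at most two, and build a depth-susceptible $2$-sda $M$ whose depth-$1$ cells, listed from left to right, always spell the stack contents of $P$ (the interspersed $B$'s being ignored). A push is the move that takes the head rightward off the frontier $\Box$, which is obliged to write a depth-$1$ symbol; an input symbol may legitimately be consumed on exactly such a step because the head is then reading $\Box$, while a $P$-move that reads nothing is matched with an input-stationary step, so $\varepsilon$-moves cause no trouble. Since the action of $P$ depends on its top symbol, $M$ must first learn it: $M$ begins a left sweep off the frontier, which freezes the blank it leaves, zooms through the $B$-garbage, and lands on the rightmost (hence topmost) depth-$1$ cell; $M$ records that symbol in its finite control and then either continues left (to pop further, recording each symbol read) or turns right, which freezes the just-read cell and zooms $M$ back to the new frontier. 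Thus a non-destructive peek is impossible and must be realized as a pop followed by re-pushing a copy of the popped symbol; with this device, any single $P$-step (pop, swap, or push) is simulated by $O(1)$ steps of $M$, and the empty-stack situation corresponds to reaching $\triangleright$, which can never be modified and hence behaves exactly like a bottom-of-stack marker. It follows that $M$ halts whenever $P$ does and recognizes the same language.

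For $2\mathrm{SDA}\subseteq\mathrm{DCFL}$, given a depth-susceptible $2$-sda $M$ I would build a 1dpda $P$ whose stack, from bottom to top, holds precisely the depth-$1$ symbols currently on $M$'s storage tape, with a bottom marker $Z_0$ standing for $\triangleright$. The finite control of $P$ records whether $M$'s head is at the frontier or is parked on a depth-$1$ cell (equivalently on $Z_0$). When $M$ pushes at the frontier, $P$ pushes; when $M$ makes a state-only or storage-stationary move at the frontier, $P$ copies it; when $M$ begins a left sweep, $P$ moves at once to the parked situation (the intervening crossing of $B$'s being unobservable by the structural observation) and peeks at its stack top to learn the symbol $M$ will read there; when $M$ continues left, $P$ pops and again peeks at the newly exposed symbol, skipping any further invisible $B$-crossing; when $M$ turns right off a depth-$1$ cell, $P$ pops and returns to the at-frontier situation. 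Input is read in lockstep by construction, and because $M$ halts on every input there can be no forced infinite $B$-loop and no unbounded left sweep, so $P$ is loop-free and recognizes the same language as $M$.

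The crux of the lemma --- and essentially the only place anything subtle happens --- is the irreversibility recorded in the structural observation: a depth-$2$ tape permits no non-destructive inspection of stored data, so both simulations must treat every inspection as a pop, every reuse of the inspected datum as an explicit re-push, and the growing blocks of frozen blanks as provably inert, which is exactly what depth-susceptibility buys. Once this is set up, the remaining work --- checking the transition cases, verifying that input reads fall exactly on frontier steps in both directions, and confirming termination --- is routine bookkeeping.
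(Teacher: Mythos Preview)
Your argument is correct. For the inclusion $2\mathrm{SDA}\subseteq\mathrm{DCFL}$ you do exactly what the paper does: identify the depth-$1$ symbols with the stack contents, treat the $B$'s as invisible thanks to depth-susceptibility, and let push/pop correspond to writing $\Box\mapsto\Gamma^{(1)}$ and $\Gamma^{(1)}\mapsto B$.

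For $\mathrm{DCFL}\subseteq 2\mathrm{SDA}$ you take a genuinely different route. The paper does not simulate a 1dpda directly; it quotes the Hibbard--Pighizzini--Pisoni characterization $\mathrm{DCFL}=\{$languages of deterministic $2$-limited automata$\}$ together with the fact (from \cite{Yam19}) that every $2$-lda can be given the blank-skipping property, and then observes that a blank-skipping $2$-lda is immediately simulated by a depth-susceptible $2$-sda. Your direct construction instead builds the $2$-sda from a normalized 1dpda, using the frontier $\Box$ as the only place where input may advance and realizing every peek as a destructive pop followed by an explicit re-push. Both approaches are valid: the paper's is shorter because it delegates the non-trivial work to the $2$-lda literature, whereas yours is self-contained and makes the stack/storage correspondence explicit at the cost of having to manage the bookkeeping (re-pushing after each peek, consuming input only at $\Box$, setting the state so that the forced $B$-transition has the intended direction, and handling $\triangleright$ as the bottom marker).
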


\begin{proof}
It was shown that $\dcfl$ is precisely characterized by deterministic 2-limited automata (or 2-lda's, for short) \cite{Hib67,PP15}. Since any 2-lda can be transformed to another 2-lda with the blank-skipping property \cite{Yam19}, depth-susceptible 2-sda's can simulate 2-lda's; thus, we immediately conclude that $\dcfl\subseteq 2\mathrm{SDA}$.

For the converse, it suffices to simulate depth-susceptible 2-sda's by appropriate 1dpda's.
Given a depth-susceptible 2-sda $M$, we design a one-way deterministic pushdown automaton (or a 1dpda) $N$ that works as follows. We want to treat the storage-tape of $M$ as a stack by ignoring the frozen blank symbol $B$ in $\Gamma^{(2)}$.
When $M$ modifies the initial blank symbol $\Box$ on its storage tape to a new symbol $\sigma$ in $\Gamma^{(1)}$, $N$ pushes $\sigma$ to a stack. In the case where $M$ modifies a storage symbol $\sigma$ to $B$, $N$ pops the same symbol $\sigma$ from the stack.
Note that, since $M$ is depth-susceptible, it cannot move the input-tape head. While $M$ scans $B$, $N$ does nothing.
As for the behavior of $N$'s input-tape head, if $M$ reads an input symbol $\sigma$ and moves to the right, then $N$ does the same.
On a tape cell containing $\sigma\in \Gamma^{(1)}$, since $M$'s input-tape head makes a series of stationary moves, $N$ reads $\sigma$ at the first move, remembers $\sigma$, and makes $\varepsilon$-moves afterwards until $M$ ends its stationary moves.   Obviously, the resulting machine is a 1dpda and it precisely simulates $M$.
\end{proof}

\n{\bf Remark.} Remember that storage-tape cells of $k$-sda's become blank after the first $k$ accesses. If we allow such tape cells to freeze the last written symbols and preserve them forever, instead of erasing them, then the resulting machines get enough power to recognize even the circuit value problem, which is $\p$-complete under $\dl$-m-reductions.
In this exposition, we do not further delve into this topic.


\section{Two Machine Models that Characterize LOG$k$SDA}\label{sec:auxiliary}

We begin with a  study on the structural properties of $\mathrm{LOG}k\mathrm{SDA}$, which is the closure of $k\mathrm{SDA}$ under $\dl$-m-reductions. In particular, we intend to seek out different characterizations of $\mathrm{LOG}k\mathrm{SDA}$ \emph{with no use of $\dl$-m-reductions}. The basic idea of the elimination of such reductions is attributed to Sudborough \cite{Sud78}, who characterized $\logdcfl$ using two machine models: polynomial-time log-space auxiliary deterministic pushdown automata and polynomial-time multi-head deterministic pushdown automata.
Our goal in this section is to expand these machine models to fit into the framework of depth-$k$ storage automata and prove their characterizations of $\mathrm{LOG}k\mathrm{SDA}$.

\subsection{Deterministic Auxiliary Depth-$k$ Storage Automata}\label{sec:aux-k-sda}

We expand deterministic auxiliary pushdown automata to \emph{deterministic auxiliary depth-$k$ storage automata}, each of which is equipped with a two-way read-only input tape, a rewritable auxiliary (work) tape, and a depth-$k$ storage tape.

Let us formally formulate \emph{deterministic auxiliary depth-$k$ storage automata} (or an aux-$k$-sda, for short).
For the description of such a machine, firstly we prepare a two-way read-only input tape and a depth-$k$ storage tape and secondly we supply a new space-bounded rewritable auxiliary tape whose cells are freely modified by a two-way tape head.
An aux-$k$-sda $M$ is a 3-tape DTM $(Q,\Sigma,\Theta, \{\Gamma^{(e)}\}_{e\in[0,k]_{\integer}}, \{\triangleright,\triangleleft\}, \delta,q_0,Q_{acc},Q_{rej})$ with a read-only input tape, an auxiliary  rewritable (work) tape with an alphabet $\Theta$, and a depth-$k$ storage tape.
Initially, the input tape is filled with ${\triangleright\, x \, \triangleleft}$, the auxiliary  tape is blank, and the depth-$k$ storage tape has only the  initial blank symbols $\Box$ except for the left endmarker $\triangleright$.
We set $\Gamma^{(0)}= \{\Box\}$, $\Gamma^{(k)}=\{\triangleright,B\}$, and $\Gamma = \bigcup_{e\in[0,k]_{\integer}}\Gamma^{(e)}$, provided that $\Gamma^{(e_1)}\cap \Gamma^{(e_2)}=\setempty$ for any distinct pair $e_1,e_2\in[0,k]_{\integer}$.
The transition function $\delta$ of $M$ maps $(Q-Q_{halt}) \times \check{\Sigma} \times \Theta \times \Gamma$ to $Q\times \Theta \times \Gamma \times D_1\times D_2\times D_3$, where $Q_{halt} = Q_{acc}\cup Q_{rej}$, $\check{\Sigma} = \Sigma\cup\{\triangleright,\triangleleft\}$, and $D_1=D_2=D_3=\{-1,0,+1\}$.
A transition $\delta(q,\sigma,\tau,\gamma) = (p,\theta,\xi,d_1,d_2,d_3)$ indicates that, on reading input symbol $\sigma$, $M$ changes its inner state $q$ to $p$ by moving an input-tape head in direction $d_1$, changes auxiliary tape symbol $\tau$ to $\theta$ by moving an auxiliary-tape head in direction $d_2$, and changes storage tape symbol $\gamma$ to $\xi$ by moving in direction $d_3$. A string $x$ is \emph{accepted} (resp., \emph{rejected}) if $M$ enters an inner state in $Q_{acc}$ (resp., $Q_{rej}$).

When excluding $(\Theta,D_2)$ from the definition of $M$, the resulting automaton must fulfill the \emph{depth-$k$ requirement} of $k$-sda's given in Section \ref{sec:TMs}. The \emph{depth-susceptibility condition} for an aux-$k$-sda is stated as follows:
For a transition $\delta(q,\sigma,\tau,\gamma) = (p,\theta,\xi,d_1,d_2,d_3)$, (i) if $\gamma\in\Gamma^{(k-1)}\cup\Gamma^{(k)}$, then $d_1=d_2=0$ and (ii) if $\gamma=B$, then $q=p$.
Regarding stationary moves, $M$ should satisfy the \emph{stationary requirement} that, assuming that $\delta(q,\sigma,\tau,\gamma) = (p,\theta,\xi,d_1,d_2,d_3)$, $d_2=0$ implies $\tau=\theta$ and $d_3=0$ implies $\gamma=\xi$.

Let us assume that, for a certain positive integer $c$, $M$'s auxiliary tape uses at most $c\ceilings{\log{n}}$ tape cells on any input of length $n\geq2$. It is then possible to reduce this space bound down to $\floors{\log{n}}$. We first introduce a larger auxiliary tape alphabet $\Theta'$ so that the content of $c$ consecutive tape cells can be expressed as a single symbol in $\Theta'$.
Since a tape head cannot access all $c$ tape cells at once,
we need to specify which tape cell is currently scanned by the tape head.
For this purpose, we use a number $i\in [c]$ to refer to the $i$th tape cell and make an inner state hold the information on $i$. Therefore, without loss of generality, we can assume that $M$ uses at most $\floors{\log{n}}$ tape cells on the auxiliary tape.

\subsection{Multi-Head Deterministic Depth-$k$ Storage Automata}\label{sec:multi-head}

We further introduce another useful machine model by expanding two-way multi-head pushdown automata to \emph{two-way multi-head deterministic depth-$k$ storage  automata}, each of which allows two-way multiple tape heads to read a given input.

For each fixed integer $\ell\geq1$, we define an \emph{$\ell$-head deterministic depth-$k$ storage automaton} as a 2-tape DTM with $\ell$ read-only depth-susceptible tape heads scanning over an input tape and another tape head over a depth-$k$ storage tape.
For convenience, we call such a machine a \emph{$k$-sda$_{2}$($\ell$)}, where the subscript ``$2$'' emphasizes that all tape heads move in both directions (including stationary moves). Remember that each $k$-sda$_{2}$($\ell$) has actually $\ell+1$ tape heads, including one tape head working along the storage tape.

More formally, a $k$-sda$_{2}$($\ell$) is a tuple $(Q,\Sigma, \{\Gamma^{(e)}\}_{e\in[0,k]_{\integer}}, \{\triangleright, \triangleleft\}, \delta,q_0,Q_{acc},Q_{rej})$ with a transition function $\delta$ mapping $Q\times \check{\Sigma}^{\ell}\times \Gamma$ to $(Q-Q_{halt}) \times \Gamma \times D^{\ell}\times D$, where $Q_{halt} = Q_{acc}\cup Q_{rej}$, $\check{\Sigma} = \Sigma\cup\{{\triangleright,\triangleleft}\}$, $D=\{-1,0,+1\}$, $\Gamma^{(0)}=\{\Box\}$, $\Gamma^{(k)}=\{\triangleright,B\}$, and  $\Gamma=\bigcup_{e\in[0,k]_{\integer}} \Gamma^{(e)}$,
provided that $\Gamma^{(e_1)}\cap \Gamma^{(e_2)}=\setempty$ for any distinct pair $e_1,e_2\in[0,k]_{\integer}$.
A transition of the form  $\delta(q,\sigma_1,\ldots,\sigma_{\ell},\gamma) = (p,\xi,d_1,\ldots,d_{\ell},d_{\ell+1})$ means that, if  $M$ is in inner state $q$, scanning a tuple $(\sigma_1,\ldots,\sigma_{\ell})$ of symbols on the input tape by the $\ell$ read-only tape heads as well as symbol $\gamma$ on the depth-$k$ storage tape by the rewritable tape head, then, in a single step, $M$ enters inner state $p$ and writes $\xi$ over $\gamma$ by moving the $i$th input-tape head in direction $d_i$ for every index $i\in[\ell]$ and the storage-tape head in direction $d_{\ell+1}$.
Note that a $k$-sda$_{2}$($1$) and a $k$-sda are similar in their machine structures but the former can move its input-tape head to the left.

The acceptance/rejection criteria are the same as those of underlying $k$-sda's. We demand that all $k$-sda$_{2}$($\ell$) should satisfy the  \emph{depth-susceptibility condition}, which asserts that, for any transition $\delta(q,\sigma_1,\sigma_2,\ldots,\sigma_{\ell},\gamma) =(p,\xi,d_1,d_2,\ldots,d_{\ell},d_{\ell+1})$ of $M$, (i) $\gamma\in \Gamma^{(k-1)}\cup \Gamma^{(k)}$ implies $d_1=d_2=\cdots =d_{\ell}=0$ and (ii) $\gamma=B$ implies $q=p$.

\subsection{Characterization Theorem}

We intend to demonstrate that the two new machine models introduced in Sections \ref{sec:aux-k-sda}--\ref{sec:multi-head} precisely characterize $\mathrm{LOG}k\mathrm{SDA}$. This result can be seen as a natural extension of Sudborough's machine characterization of $\mathrm{LOGDCFL}$ to $\mathrm{LOG}k\mathrm{SDA}$.

\begin{theorem}\label{auxiliary-characterize}
Let $k\geq2$. Let $L$ be any language. The following three statements are logically equivalent.
\renewcommand{\labelitemi}{$\circ$}
\begin{enumerate}\vs{-2}
  \setlength{\topsep}{-2mm}%
  \setlength{\itemsep}{0mm}%
  \setlength{\parskip}{0cm}%

\item $L$ is in $\mathrm{LOG}k\mathrm{SDA}$.

\item There exists an aux-$k$-sda that recognizes $L$ in polynomial time using logarithmic space.

\item There exist a number $\ell\geq2$ and a  $k$-sda$_{2}$($\ell$)  that recognizes $L$ in polynomial time.
 \end{enumerate}
\end{theorem}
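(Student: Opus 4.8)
\medskip
\noindent\emph{Proof plan.}\hs{2}
The plan is to establish the four implications $(1)\THEN(2)$, $(2)\THEN(3)$, $(3)\THEN(2)$, and $(2)\THEN(1)$, which together give all three equivalences, by transplanting to depth-$k$ storage tapes the classical route of Cook \cite{Coo71} relating auxiliary pushdown automata to $\logdcfl$ and of Sudborough \cite{Sud78} relating auxiliary automata to multi-head automata. Throughout I use that every depth-susceptible $k$-sda halts within polynomially many steps (which underlies $k\mathrm{SDA}\subseteq\mathrm{SC}^k$) and the normalization from Section~\ref{sec:aux-k-sda} that an aux-$k$-sda may use at most $\floors{\log n}$ auxiliary cells. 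The implication $(1)\THEN(2)$ is routine once one checks that the class of languages recognized by polynomial-time logarithmic-space aux-$k$-sda's contains $k\mathrm{SDA}$ (a $k$-sda is such an automaton that never uses its auxiliary tape) and is closed under $\Lmreduces$: given $L\Lmreduces L'$ through a logarithmic-space transducer $g$ and an aux-$k$-sda $M'$ for $L'$, one simulates $M'$ on $g(x)$, recomputing the $j$-th symbol of $g(x)$ on demand by re-running $g$ on $x$ (costing $O(\log|x|)$ more auxiliary cells and polynomial time) while copying $M'$'s storage tape and its depth-susceptibility verbatim. Since $\mathrm{LOG}k\mathrm{SDA}$ is the $\Lmreduces$-closure of $k\mathrm{SDA}$, it is contained in this class.

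\medskip
\noindent For $(2)\IFF(3)$ I would use the classical trade-off between logarithmic work space and a constant number of two-way read-only input heads (exploited by Sudborough \cite{Sud78} in the pushdown case), relativized to a depth-$k$ storage tape. From $(3)$ to $(2)$: an aux-$k$-sda records the $\ell$ input-head positions of a given $k$-sda$_{2}$($\ell$) on $\floors{\log n}$ auxiliary cells and performs each step by moving its single real input head to each recorded position in turn, keeping the $\ell$ scanned symbols in its finite control. From $(2)$ to $(3)$: one deletes the logarithmic auxiliary tape of an aux-$k$-sda $M$ in favour of $\ell=O(1)$ extra two-way input heads (with $\ell$ depending on $M$'s space constant), encoding the auxiliary-tape content and head position as a constant number of base-$n$ ``registers'' held in head positions and carrying out the needed arithmetic (bit extraction, increment) by coordinated head moves in polynomial time; the storage tape is untouched. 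In either direction the additional head motion takes place only at a step where the underlying machine would itself move an input or auxiliary head, which depth-susceptibility permits exactly when the scanned storage symbol has depth at most $k-2$; when the scanned storage symbol has depth $\geq k-1$ the encoded data do not change, so no head motion is required and the simulating machine stays depth-susceptible (one keeps the scanned storage symbol in the finite control so that transitions can still be evaluated).

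\medskip
\noindent The substance of the theorem is $(2)\THEN(1)$, the analogue of Cook's $\mathrm{AuxDPDA}\subseteq\logdcfl$. Let $M$ be a polynomial-time logarithmic-space aux-$k$-sda recognizing $L$, with running time bounded by $p(n)$, and call $C=(\text{state},\text{input-head position},\text{auxiliary content},\text{auxiliary-head position})$ --- an $O(\log n)$-bit datum that \emph{omits} the (up to $p(n)$-long) storage tape --- the surface part of a configuration. A logarithmic-space transducer $f$ sends $x$ to $p(|x|)$ consecutive copies, with separators, of the transition table $T(x)$ that lists, in a fixed lexicographic order, for every surface part $C$ and storage symbol $\gamma\in\Gamma$, the portion of $\delta_M(C,\gamma)$ that determines the next surface part and the storage action; each entry is log-space computable because the input symbol scanned in $C$ is a direct lookup into $x$. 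One then takes $L'$ to be the set of strings of the form ``$p(n)$ copies of a table $T$'' for which the unique run from $M$'s initial surface part, steered by $T$ and by a faithfully maintained depth-$k$ storage tape, reaches an accepting surface part, and one builds a one-way $k$-sda $N'$ for $L'$ that keeps a mirror of $M$'s storage tape on its own depth-$k$ storage tape and executes one simulated step of $M$ per copy of $T$, locating within that copy the entry matching its current surface part and scanned storage symbol. Then $x\in L\iff M\text{ accepts }x\iff N'\text{ accepts }f(x)\iff f(x)\in L'$, so $L\in\mathrm{LOG}k\mathrm{SDA}$.

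\medskip
\noindent The hard part, and the step I expect to be the main obstacle, is to realize $N'$ as a genuine depth-susceptible one-way $k$-sda, where three constraints interlock. First, $N'$'s input head is read-once and cannot index into $T(x)$; this is why $T(x)$ is replicated $p(n)$ times, giving $N'$ one left-to-right pass per simulated step, the matching entry being found within a pass by the sorted-table device of Cook and Sudborough. Second, the only unbounded memory available to $N'$ is its depth-$k$ storage tape, which is already committed to mirroring $M$'s storage tape, yet $N'$ must also hold, and update at every step, the $O(\log n)$-bit surface part $C$; since $C$ changes each step it cannot sit in a fixed block of cells, so one must interleave $C$ with the mirror on a single depth-$k$ tape while respecting the depth budget, and this is where the hypothesis $k\geq2$ enters and where the real technical difficulty lies. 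Third, depth-susceptibility of $N'$ must be maintained: whenever the mirrored storage head rests on a cell of depth $\geq k-1$, $M$'s input and auxiliary heads are pinned, so $N'$ may pin its own input head and carry out the entire ensuing ``storage-only'' stretch of $M$ using only its storage tape, reading no further table symbols until the storage head comes back to depth at most $k-2$; since $M$ runs in polynomial time, every such stretch is finite.
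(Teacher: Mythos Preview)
Your implications $(1)\THEN(2)$ and $(2)\IFF(3)$ are essentially the paper's Lemmas~\ref{two-input-model} and~\ref{aux-to-6-heads} (your extra direction $(3)\THEN(2)$ is fine but unused by the paper), and your treatment of depth-susceptibility there is correct.

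The gap is in $(2)\THEN(1)$. You propose a direct Cook--Sudborough simulation: feed a one-way $k$-sda $N'$ the $p(n)$-fold replication of a sorted transition table, and let $N'$ carry both a mirror of $M$'s storage tape \emph{and} the current $O(\log n)$-bit surface datum $C$ on its single depth-$k$ tape. You correctly flag this interleaving as ``the main obstacle'' --- but you do not resolve it, and it is not clear that it can be resolved along these lines. The difficulty is not cosmetic: $C$ changes at essentially every one of the $p(n)$ simulated steps, so any block of cells holding $C$ is exhausted after $k$ updates; writing each new $C$ into fresh cells forces $N'$'s storage head to commute between the ever-advancing $C$-block and the mirrored storage-head position, and every traversal of an intervening cell costs one of its $k$ accesses. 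Moreover, the ``sorted-table'' match itself requires comparing the stored $C$ symbol-by-symbol against the input during each pass, again burning depth on the $C$-cells. The paper explicitly notes that Sudborough's argument (which you are transplanting) does not carry over because stack operations touch only the topmost cell, whereas here a storage-tape head traverses and consumes depth on everything it passes.

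The paper therefore abandons $(2)\THEN(1)$ and instead proves $(3)\THEN(1)$ by a head-elimination cascade: add a dummy \emph{counter head}, then repeatedly apply an $\dl$-m-reduction that merges two input heads into one over a padded input (Lemma~\ref{half-reduction}), then make the remaining input head one-way over a further $\dl$-m-reduced input (Lemma~\ref{reversal-treat}), and finally eliminate the counter head by an intricate region/block layout on the storage tape (Lemma~\ref{counter-head}). The point is that after all the head reductions the only ``auxiliary'' information left is a single unary counter bounded by $n$, and \emph{that} can be absorbed into the depth-$k$ tape by allocating, per storage symbol, a region of $O(kn)$ cells whose blocks are consumed one per counter run. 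Your plan tries to absorb the full $O(\log n)$-bit work tape directly, skipping the reduction to a counter; that shortcut is exactly what the paper's machinery exists to avoid.
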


In the rest of this subsection, we intend to prove Theorem \ref{auxiliary-characterize}.
Firstly, Sudborough's proof \cite[Lemmas 3--6]{Sud78} for $\mathrm{LOGDCFL}$ relies on the heavy use of stack operations, which are applied only to the topmost symbol of the stack but the other symbols in the stack are intact.
In our case, however, we need to deal with the operations of a storage-tape head, which can move back and forth along a storage tape by modifying cell's content as many as $k$ times.
Secondly, Sudborough's characterization utilizes a simulation procedure  \cite[pp.338--339]{Har72} of Hartmanis and a proof argument \cite[Lemma 4.3]{Gal77} of Galil; however, we cannot directly use them, and thus a new idea is needed to establish Theorem \ref{auxiliary-characterize}.
The proof of this theorem therefore requires technically challenging simulations among $\fl$-functions and the other machine models of  aux-$k$-sda and $k$-sda$_2(\ell)$.


\begin{lemma}\label{two-input-model}
Given a function $f:\Sigma_1^*\to\Sigma_2^*$ in $\fl$ for certain alphabets $\Sigma_1$ and $\Sigma_2$ and a depth-susceptible $k$-sda $M$ working over $\Sigma_2$ in polynomial time,  there exists a log-space aux-$k$-sda $N$ that recognizes $L=\{x\in\Sigma_1^*\mid f(x)\in L(M) \}$ in polynomial time.
\end{lemma}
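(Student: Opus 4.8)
The plan is to have $N$ simulate $M$ step by step, using its auxiliary (work) tape to hold the position counter of the "virtual" input head of $M$, and to recompute on demand each symbol $f(x)_{(i)}$ of $M$'s input whenever $M$ wants to read it. Concretely, $N$'s own read-only input tape holds $x$, its depth-$k$ storage tape plays the role of $M$'s storage tape verbatim (the depth-$k$ requirement and the depth-susceptibility condition transfer unchanged, since these concern only the storage tape and the storage head), and on the auxiliary tape $N$ maintains a binary counter $i\in[0,|f(x)|+1]_{\integer}$ recording which cell of $f(x)$ the simulated head of $M$ currently scans. First I would note that $f$ is length-honest up to a polynomial, i.e.\ $|f(x)|\leq p(|x|)$ for some polynomial $p$, so this counter needs only $O(\log|x|)$ bits and fits within the claimed logarithmic space bound after the alphabet-compression trick described at the end of Section \ref{sec:aux-k-sda}.

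The key steps, in order: (1) Fix an $\fl$-transducer $T_f$ computing $f$ in space $c\lceil\log n\rceil$ and polynomial time. (2) Describe $N$'s configuration as carrying, in its finite control together with the auxiliary tape, a full surface-configuration-worth of $M$'s state plus the counter $i$; $M$'s storage tape content is literally $N$'s storage tape content. (3) To execute one step of $M$: $N$ first needs the symbol $\sigma=f(x)_{(i)}$ that $M$ scans. Since $N$ cannot store $f(x)$, it simulates $T_f$ on $x$ within a reserved block of the auxiliary tape, counting output symbols produced until the $i$th one is emitted, reads off that symbol, then erases the simulation block. (Handle $i=0$ and $i=|f(x)|+1$ by returning the endmarkers $\triangleright,\triangleleft$; to know $|f(x)|$, run $T_f$ once to completion counting outputs.) (4) With $\sigma$ in hand, $N$ looks up $\delta_M(q,\sigma,\gamma)$, applies the state change and the storage-tape rewrite/move exactly as $M$ would, and updates the counter $i$ by $+1$ or $0$ according to $M$'s input-head direction $d_1\in\{0,+1\}$ — incrementing a binary counter on the auxiliary tape is a standard log-space, polynomial-time operation. (5) $N$ accepts/rejects exactly when $M$ does; halting is guaranteed because $M$ halts on $f(x)$ within a polynomial number of steps, and each simulated step of $M$ costs $N$ only polynomial time (one or two runs of $T_f$ plus a counter update), so $N$ runs in polynomial time overall. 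Correctness is then immediate: $N$ faithfully reproduces the unique computation of $M$ on $f(x)$, so $N$ accepts $x$ iff $M$ accepts $f(x)$ iff $x\in L$.

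The main obstacle — though it is more bookkeeping than depth — is the re-computation overhead in step (3): a naive implementation recomputes $f(x)$ from scratch for every one of the polynomially many steps of $M$, and moreover re-scans the output prefix of $T_f$ up to position $i$ each time, so one must check that the total time stays polynomial (it does: $\poly(n)$ steps of $M$, each costing $\poly(n)$ for the $T_f$-simulation, gives $\poly(n)$). A secondary point to verify carefully is that $N$ genuinely satisfies the depth-susceptibility condition of an aux-$k$-sda as stated in Section \ref{sec:aux-k-sda}: when $M$'s storage head scans a symbol in $\Gamma^{(k-1)}\cup\Gamma^{(k)}$, depth-susceptibility of $M$ forces $M$'s input head to be stationary ($d_1=0$), but $N$ must additionally freeze its \emph{auxiliary} head ($d_2=0$) and its counter at such steps — this is fine because when $d_1=0$ the counter $i$ is unchanged anyway, so $N$ can simply make the whole auxiliary-tape update trivial on those steps; and when $\gamma=B$, $M$'s state is unchanged, so $N$'s is too. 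The stationary requirement ($d_2=0\Rightarrow\tau=\theta$, $d_3=0\Rightarrow\gamma=\xi$) is arranged by construction. No new idea beyond careful simulation is needed here; the genuinely new difficulties postponed by the excerpt arise in the converse directions (2)$\Rightarrow$(3) and (3)$\Rightarrow$(1), not in this lemma.
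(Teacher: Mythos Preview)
Your simulation is correct and close in spirit to the paper's, but the two differ in how the auxiliary tape is used to supply the virtual input symbols $f(x)_{(i)}$. The paper exploits the fact that $M$'s input head is read-once: it keeps the \emph{suspended state} of the transducer $M_f$ (its work-tape content and head positions) on $N$'s auxiliary tape, and whenever $M$ advances its input head one cell, $N$ simply \emph{resumes} the saved computation of $M_f$ until one more output symbol is emitted. Thus $M_f$ is run exactly once over the whole simulation, in lockstep with $M$'s input consumption. Your version instead stores only the position counter $i$ and re-runs $T_f$ from scratch each time a new symbol is needed; this is conceptually simpler and still polynomial time and log space, but incurs an extra polynomial factor in time. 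Either approach suffices for the lemma.

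One point in your write-up deserves a touch more care. You argue depth-susceptibility by noting that when the currently scanned storage symbol $\gamma$ lies in $\Gamma^{(k-1)}\cup\Gamma^{(k)}$ we have $d_1=0$, so the counter is unchanged and the auxiliary head can stay put. That handles the counter update, but your step (3) as written recomputes $\sigma$ at the \emph{start} of each simulated step, while the storage head already rests on the new cell; if that cell happens to carry a symbol in $\Gamma^{(k-1)}\cup\Gamma^{(k)}$ and the \emph{previous} step had $d_1=+1$, your $N$ would be moving its auxiliary head while scanning such a symbol, violating the aux-$k$-sda depth-susceptibility condition. The easy fix---which is what makes the paper's concluding sentence ``$M_f$ is simulated only when $M$'s storage-tape head reads a symbol not in $\Gamma^{(k-1)}\cup\Gamma^{(k)}$'' true---is to compute the next input symbol $\sigma_{i+1}$ immediately after determining $d_1=+1$, \emph{before} executing the storage-head move, i.e.\ while $N$ is still scanning the old storage symbol, which is guaranteed not to lie in $\Gamma^{(k-1)}\cup\Gamma^{(k)}$ precisely because $d_1\neq 0$. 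With that re-ordering (and caching $\sigma$ in the finite control between recomputations) your argument is complete.
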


\begin{proof}
Let $f:\Sigma_1^*\to\Sigma_2^*$ be any function in $\fl$.
We take a DTM $M_f$, equipped with an read-only input tape, a logarithmic space-bounded rewritable work tape, and a write-once output tape, and assume that $M_f$ computes $f$ in polynomial time. A given depth-susceptible $k$-sda running in polynomial time is denoted $M$. We then set $L$ to be the language $\{x\mid f(x)\in L(M)\}$.

We design the desired aux-$k$-sda $N$ for $L$ as follows.
Given any input $x$ in $\Sigma_1^*$, we repeat the following process until $M$ enters a certain halting state. Using an auxiliary tape of $N$, we keep track of the content on $M_f$'s work tape, two head positions of $M$'s input and storage tapes and $M_f$'s input tape.
Since $N$ shares the same input $x$ with $M$,
$N$ can precisely simulate the movement of $M$'s input-tape head.
We intend to force $N$ to scan the same storage symbol as $M$. Let $\gamma$ denote the storage symbol scanned currently by $M$. Assume that $M$'s input-tape head is located at cell $h\in[0,|f(x)|+1]_{\integer}$.

\s

(1) If $M$ makes an input-stationary move, then we simply simulate one step of the behavior of $M$'s storage-tape head since we can reuse the last produced output symbol of $M_f$. We thus move the storage-tape head of $N$ in the same direction as $M$.

(2) Assume that $M$ moves its input-tape head from the left to cell $h$. We remember the current location of $M$'s input-tape head, return $N$'s input-tape head to the last location of $M_f$'s input-tape head, and resume the simulation of $M_f$ with the use of $N$'s auxiliary tape as a work tape by making a (possible) series of storage-stationary moves of $N$ until $M_f$ produces the $h$th output symbol, say, $\sigma$. This is possible because the  output tape of $M_f$ is write-once, and thus there is no need to recompute any past output symbols.
Once $\sigma$ is obtained, $N$ remembers the positions of $M_f$'s tape heads, moves its input-tape head back to the previous location. We then simulate a single step of $M$ on $\sigma$ together with  the storage symbol.
We then update the tape-head positions.

\s

It is not difficult to show that $N$ eventually reaches the same type (accepting or rejecting) of halting states as $M$ does within polynomially many steps. Notice that the storage-tape head and the auxiliary-tape head do not work simultaneously.
The depth-susceptibility of $N$ comes from that of $M$ since $M_f$ is simulated only when $M$'s storage-tape head reads a symbol not in $\Gamma^{(k-1)}\cup\Gamma^{(k)}$. Thus, $N$ is indeed an aux-$k$-sda.
\end{proof}


Given an aux-$k$-sda $M$, we intend to construct a $k$-sda$_2(\ell)$, which mimics the behavior of a given  aux-$k$-sda, where $\ell$ is an appropriate constant depending only on $M$.

\begin{lemma}\label{aux-to-6-heads}
Let $k\geq2$. Let $M$ denote a polynomial-time log-space aux-$k$-sda, there are a constant $c>0$ and a $k$-sda$_{2}(5c+2)$ $N$ that simulates $M$ in polynomial time.
\end{lemma}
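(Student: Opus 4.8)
The plan is to simulate a log-space aux-$k$-sda $M$ by a two-way multi-head $k$-sda$_2(\ell)$ $N$ that has no auxiliary work tape, using the classical idea (going back to Hartmanis and Galil, adapted by Sudborough) that a logarithmic-space work tape holding a counter can be encoded by a constant number of two-way input-tape heads. Concretely, recall that (as argued at the end of Section~\ref{sec:aux-k-sda}) we may assume $M$'s auxiliary tape uses at most $\floors{\log{n}}$ cells, so that the entire auxiliary-tape content, together with the auxiliary-tape head position, is an integer in $[0, n^{O(1)}]_{\integer}$; such an integer is represented as a fixed-length tuple of positions of input-tape heads (each head encoding a ``digit'' in base roughly $n$ via its cell index $\in[0,n+1]_{\integer}$). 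A constant number $c$ of heads suffices to store the auxiliary configuration, and standard two-way-head arithmetic (increment, decrement, comparison, and base-$n$ carry propagation) is performed by sweeping these heads across the input together with a few scratch heads; each such arithmetic subroutine costs $O(n)$ steps and uses $O(1)$ extra heads, which is where the bound $5c+2$ comes from: $c$ heads for the current encoded auxiliary configuration, up to $4c$ heads for temporary copies needed during carry-aware additions and three-way comparisons, one head to track $M$'s input-tape head position, and one reserved scratch/marker head.

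The key steps, in order, are: (i) Fix the polynomial time bound $p(n)$ and the space bound $\floors{\log n}$ of $M$, and define an encoding $\mathrm{enc}(\cdot)$ of an auxiliary-tape configuration (content plus head position) as a $c$-tuple of indices in $[0,n+1]_{\integer}$; verify that $n^{c}$ exceeds the number of possible auxiliary configurations for all large $n$, and handle the finitely many small inputs by table lookup in the finite control. (ii) Show that each primitive operation on $\mathrm{enc}$ — reading the currently scanned auxiliary symbol, overwriting it, moving the auxiliary head by $\pm1$, and detecting the head position — is computable by a $k$-sda$_2$-style head maneuver that uses $O(1)$ auxiliary scratch heads, runs in $O(n)$ steps, and leaves the storage tape untouched (all these moves are storage-stationary, so the depth-$k$ discipline is trivially respected and, since no storage symbol in $\Gamma^{(k-1)}\cup\Gamma^{(k)}$ is being read during input-head motion, depth-susceptibility is preserved). (iii) Assemble the global simulation loop: $N$ keeps in its heads the encoded auxiliary configuration and the position of $M$'s input-tape head, keeps $M$'s inner state in its own finite control, and faithfully maintains $M$'s storage tape on its own storage tape; to simulate one step of $M$, $N$ decodes the scanned auxiliary symbol and the scanned storage symbol, applies $\delta_M$, then performs the corresponding head update, auxiliary-configuration update, and single storage-tape action. (iv) Account for the running time: $M$ makes at most $p(n)$ steps, each simulated in $O(n)$ steps by $N$, so $N$ runs in polynomial time; the number of heads used is at most $5c+2$ by the bookkeeping above. (v) Check acceptance/rejection is preserved, since $N$ enters an accepting (resp.\ rejecting) state exactly when $M$ does.

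The main obstacle — and the place where the proof genuinely departs from Sudborough's pushdown argument — is that $N$ must perform its multi-head arithmetic \emph{without disturbing} the depth-$k$ storage tape, because any motion of the storage-tape head would consume one of the at most $k$ allowed rewrites of the cells it passes over, corrupting the simulated storage content. The remedy is to insist that every arithmetic subroutine keeps the storage-tape head stationary on the cell it currently occupies (using a storage-stationary move, i.e.\ direction $0$, which by the stationary requirement leaves the symbol unaltered) throughout the entire $O(n)$-step sweep, and to resume the storage-tape head's genuine move only at the single moment when $N$ actually executes $M$'s storage action. One must also verify that during these long stretches of input-head motion the depth-susceptibility condition is not violated: while the storage head sits on a symbol $\gamma$, if $\gamma\in\Gamma^{(k-1)}\cup\Gamma^{(k)}$ then $M$ itself could not have moved its input-tape head, so $N$'s simulated step at that instant likewise involves no input-head motion and there is nothing to simulate with the scratch heads; otherwise $\gamma\notin\Gamma^{(k-1)}\cup\Gamma^{(k)}$ and the condition imposes no constraint, so the free multi-head arithmetic is legitimate. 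A secondary nuisance is the carry propagation in base-$n$ addition/subtraction on the $c$ digit-heads, which requires a careful but routine ordering of the scratch heads so that comparisons and ripple carries each complete in a single left-to-right or right-to-left input sweep; this is where the factor $5c$ (rather than, say, $2c$) in the head count comes from, and I would simply allot generous constant overhead rather than optimize it.
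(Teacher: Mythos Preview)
Your high-level plan is sound and lands on the same classical idea as the paper: encode the $O(\log n)$-cell auxiliary tape by positions of a constant number of two-way input heads, carry out all the bookkeeping with storage-stationary moves, and note that depth-susceptibility is automatically respected because $M$'s own depth-susceptibility forbids any input/auxiliary motion while the storage head sits on a symbol in $\Gamma^{(k-1)}\cup\Gamma^{(k)}$. Where you and the paper diverge is the concrete encoding and hence the arithmetic. You treat the whole auxiliary configuration as one integer in $[0,n^{O(1)}]$, store its base-$n$ digits in $c$ head positions, and plan to implement ``read/overwrite the scanned auxiliary symbol'' and ``move the aux head'' via generic base-$n$ add/subtract with ripple carries, budgeting $4c$ scratch heads for temporaries. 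The paper instead fixes $c=\lceil\log|\Theta|\rceil$, splits the auxiliary tape into $c$ single-\emph{bit} tracks, and stores each track's content $w\in\{0,1\}^{\lfloor\log n\rfloor}$ as the number $(1w^R)_{(2)}\le 2n$ in a single head position; two further heads record $M$'s input-head and aux-head positions. To flip the bit at aux-cell $l$ on a given track, the paper invokes Hartmanis's doubling trick---two heads that alternately double one another's displacement---to shift the track's value-head by exactly $2^l$ cells, giving five heads per track and the stated $5c+2$. This per-track decomposition eliminates carry propagation entirely and makes ``read the scanned auxiliary symbol'' direct (one bit from each track), which is exactly the step your base-$n$ scheme leaves vaguest; your route works, but once you spell it out you will essentially rediscover the bit-plane encoding.
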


\begin{proof}
Let $k\geq2$ and let $M = (Q,\Sigma, \{\Gamma^{(e)}\}_{e\in[0,k]_{\integer}}, \{\triangleright, \triangleleft\}, \delta,q_0,Q_{acc},Q_{rej})$ denote any aux-$k$-sda that runs in polynomial time using logarithmic space on all inputs of length $n\geq2$.

As noted in Section \ref{sec:aux-k-sda}, we can assume that $M$ uses at most $\floors{\log{n}}$ auxiliary-tape cells, where $n$ indicates input length. Let $c=\ceilings{\log|\Theta|}$ and assume that $\Theta=\{\theta_1,\theta_2,\ldots,\theta_{2^c}\}$. We identify each element $\theta_i$ in $\Theta$ with $i$, which is further expressed as its binary string of length $c$. We partition every auxiliary tape cell into $c$ blocks, indexed by $1,2,\ldots,c$. A symbol $\theta_i$ can be stored in those blocks in such a way that, for any $j\in[c]$, the $j$th bit of $\theta_i$ is placed in the $j$th block. By fixing each block index for all tape cells, we can partition an entire tape into $c$ separate ``tapes'', which are customarily called \emph{tracks}.

We want to construct a polynomial-time $k$-sda$_{2}(5c+2)$ $N$ for which $L(N)$ coincides with $L(M)$.
Other than a storage-tape head, we use $M$'s input-tape head as the \emph{principal} tape head of $N$. We further introduce additional $5c$ tape heads to simulate the behavior of an auxiliary-tape head of $M$.

In what follows, we fix one of the $c$ tracks of the auxiliary tape. If each track contains a string $w$ of length $\ceilings{\log{n}}$, we treat it as the binary number $(1w^R)_{(2)}$.
We use two tape heads to remember the positions of the input-tape head and the auxiliary-tape head of $M$. To remember the number $(1w^R)_{(2)}$, we need  additional $5$ tape heads (other than the storage-tape head).

Head 1 keeps the tape head position. Whenever $M$ moves the auxiliary-tape head, $N$ moves head 1 as well. Head 2 moves backward to measure the distance $l$ of the auxiliary-tape head from the left end of the auxiliary tape. Using this information, $N$ moves head 3 as follows. If the tape head changes $0$ to $1$ (resp., $1$ to $0$) on this target track, then $N$ moves the head $2^l$ cells to the right (resp., to the left).
How can we move a tape head to cell $2^{l}$ from $\triangleright\:$? Following  \cite{Har72}, we use three tape heads to achieve this goal in the following way.
We move head 4 one cell to the right. As head 4 takes one step on the way back to $\triangleright$, we move head 5 two cells to the right. We then switch the roles of heads 4 and 5. As head 5 takes one step back toward  $\triangleright$, we move head 4 two cells to the right. If we repeat this process $t$ times, one of the heads indeed reaches cell $2^t$. Hence, for the $l$th run, head 3 reaches cell $2^l$. This process requires 3 tapes. Thus, the total of $5$ tape heads are sufficient to simulate the operation on the track content of the auxiliary tape.

Since there are $c$ tracks, we need the total of $5c+2$ heads (including the input-tape head and the storage-tape head) for our intended simulation. Note that, while the additional $5c$ tape heads are moving, the storage-tape head scans no symbol in $\Gamma^{(k-1)}\cup \Gamma^{(k)}$.
\end{proof}


\begin{figure}[t]
\centering
\includegraphics*[height=4.5cm]{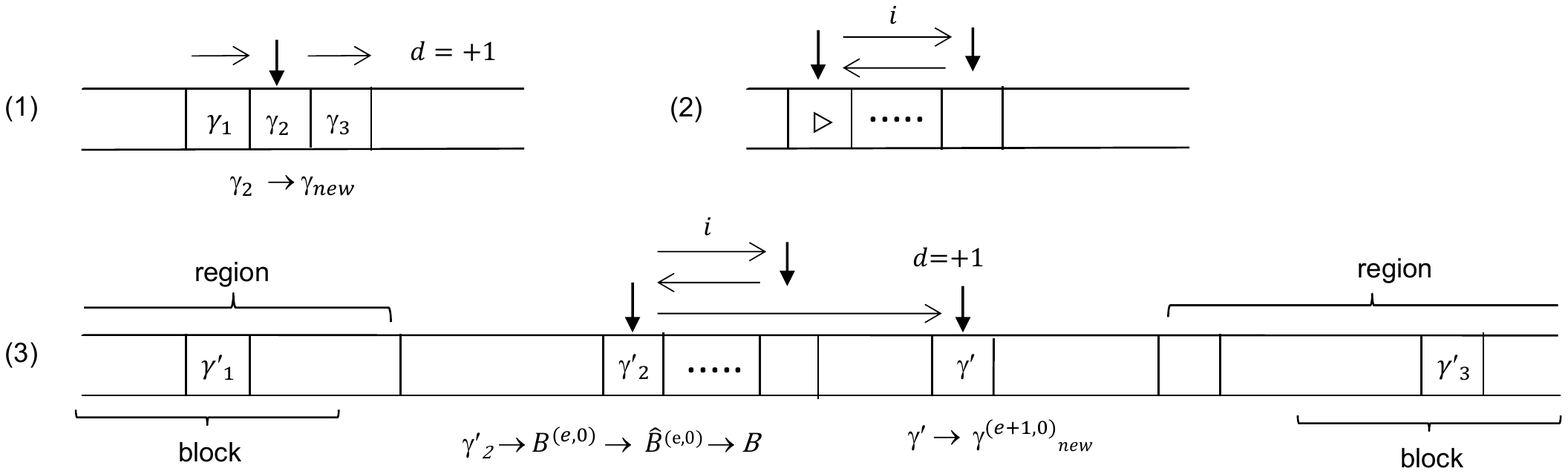}
\caption{A simulation of the movement of the storage-tape head and the counter head of $M$.
(1) After moving from $\gamma_1$ to $\gamma_2$, an input-tape-head of $M$  changes $\gamma_2$ to $\gamma_{new}$ and moves in direction $d=+1$.
(2) The counting head starts at cell $0$, travels through $i$ cells, and returns to cell $0$.
(3) The storage-tape head of $N$ moves as depicted in this figure to simulate (1) and (2).}\label{fig:storage-head-move}
\end{figure}


We lessen the number of input-tape heads from $\ell+2$ to $\ell+1$ for any $\ell\geq1$ by implementing a ``counter head'' to measure how far the tape head is away from a particular tape cell.
A \emph{counter head} is simply a two-way depth-susceptible tape head moving on an input tape in such a way that, once this tape head is activated, it starts moving from $\triangleright$ to the right, stays still for a while,  and comes back to $\triangleright$ with non-stopping movement (except for the period of reaching the rightmost location) by ignoring all input symbols on its way. We also require the storage-tape head to stay still during the activation of this counter head.

We partition every tape cell into two blocks, called the \emph{upper track} and the \emph{lower track}. The content of each tape cell is expressed by the track notation as $\track{\sigma}{\tau}$ if two tape symbols $\sigma$ and $\tau$ are written in those tracks \cite{TYL10}.

\begin{lemma}\label{half-reduction}
Let $k\geq2$ and $\ell\geq1$. Given any $k$-sda$_{2}$($\ell+2$) $M$ with a counter head over input alphabet $\Sigma$ running in polynomial time, there exists a polynomial-time $k$-sda$_{2}$($\ell+1$) with a counter head that recognizes the language $L_{ab}=\{ \tilde{x} \mid x\in L(M)\}$, where $a$ and $b$ are tape symbols not in $\Sigma$,  where
$\tilde{x} = (a\tilde{x}_1b)(a\tilde{x}_2b)\cdots (a\tilde{x}_nb)$ with $\tilde{x}_i = \track{x_1}{x_i}\track{x_2}{x_i}\cdots \track{x_n}{x_i}$ for $x=x_1x_2\cdots x_n$.
\end{lemma}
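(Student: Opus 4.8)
The plan is to reduce the number of genuine input-tape heads by one at the cost of spreading the input across a two-dimensional "table" layout. The encoding $\tilde{x}$ lays out $n$ copies of the whole input $x$, one copy per "block" $a\tilde{x}_ib$, and within the $i$th block the second track of every cell carries the single symbol $x_i$. So block $i$ is simultaneously a full copy of $x$ (on the upper track) and a constant $x_i$ (on the lower track). The idea is that the old $(\ell+2)$-head machine $M$ used two of its heads to simultaneously point at positions $p$ and $q$ of $x$; the new machine will keep only $\ell+1$ heads, and will simulate the pair $(p,q)$ by placing a single head at the cell of the upper track in block $p$ whose column index is $q$ — call this "coordinate" $(p,q)$. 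Reading that one cell reveals $x_q$ from the upper track and $x_p$ from the lower track, which is exactly the pair of symbols $M$ would have read with its two heads. The remaining $\ell$ heads of $M$ (and the counter head, and the storage-tape head) are carried over essentially unchanged, except that each of them now has to ignore the $a$'s, $b$'s and lower-track symbols and translate its own single-coordinate position into a block-plus-column position; since a head that is supposed to be at position $j$ of $x$ can simply sit in block $1$ at column $j$, or in any fixed block, this is a routine bookkeeping change handled by remembering a bounded amount of information in the finite control.

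The key steps, in order, would be: (i) fix which two of $M$'s $\ell+2$ input heads are the "paired" heads to be merged — without loss of generality we may rename them so that heads $1$ and $2$ are merged into a new head $H$; the old counter head and storage head of $M$ become the counter head and storage head of the new machine, and old heads $3,\dots,\ell+2$ become new heads $2,\dots,\ell+1$. (ii) Describe how $H$ represents the pair $(p,q)$: $H$ stands on the cell of block $p$ (the $p$th occurrence of the pattern $a(\cdots)b$) whose column is $q$; reading $\track{x_q}{x_p}$ gives $M$'s pair of scanned symbols. (iii) Simulate one step of $M$: $M$ updates $p$ by some $d\in\{-1,0,+1\}$ and $q$ by some $d'\in\{-1,0,+1\}$; the new machine must move $H$ accordingly. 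Moving $q$ by $\pm1$ is one step within the current block. Moving $p$ by $\pm1$ means jumping to the adjacent block while preserving the column index; this is where the counter head earns its keep — the machine walks $H$ leftward (or rightward) counting cells with the counter head until it has crossed exactly one $b\cdots a$ boundary and recovered the same column offset. (Because blocks all have the same length $n+2$, "move one block and keep the same column" is just "move $n+2$ cells in the appropriate direction", which the counter head measures.) (iv) Handle the endmarkers and the degenerate cases where $p$ or $q$ would leave $[1,n]$, mapping them to $M$'s behaviour on $\triangleright$ or $\triangleleft$. (v) Verify that the depth-susceptibility condition is preserved: all the extra marching of $H$ and of the counter head happens while the storage-tape head is stationary and scanning no symbol in $\Gamma^{(k-1)}\cup\Gamma^{(k)}$, exactly as in the statement's convention for counter heads, so no new violations are introduced. (vi) Bound the running time: each step of $M$ costs $O(n)$ steps of the new machine (one block-hop plus counting), and the counter-head activations are likewise $O(n)$ each, so polynomial time is preserved.

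The main obstacle I expect is step (iii), specifically implementing "move $p$ by $\pm1$ while preserving the column $q$" with only the heads available. The naive approach — remember $q$ in the finite control — fails because $q$ ranges over $[1,n]$ and cannot be stored in $O(1)$ states. The fix is the counter-head trick: before hopping blocks, activate the counter head to run out from $\triangleright$ to the right, and arrange for $H$ to cross into the next block and then re-advance by a number of cells synchronized with the counter head's outbound trip, so that $H$ lands at the correct column without the machine ever "knowing" what that column number is. Getting this synchronization exactly right — including making sure $H$ does not overshoot past $\triangleleft$, does not land on an $a$ or $b$ delimiter, and ends precisely on the upper-track cell of the target column — is the delicate part, and it is the place where an explicit construction (and the $a,b$ delimiters, which make block boundaries detectable in $O(1)$) really matters. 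Once that single gadget is in place, the rest is the routine bookkeeping sketched above, and the correctness claim $\tilde{x}\in L_{ab}\iff x\in L(M)$ follows by induction on the number of simulated steps of $M$.
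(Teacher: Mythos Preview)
Your approach is correct and essentially identical to the paper's: merge two input heads into a single head $H$ placed at column $q$ of block $p$ (reading both $x_q$ and $x_p$ from the two tracks), use the counter head to preserve the column offset when hopping between adjacent blocks, and let the remaining heads work over a fixed block (the paper uses $a\tilde{x}_1b$). The paper's block-hop is precisely the three-phase synchronization gadget you anticipate as the delicate step: walk $H$ leftward to the nearest $a$ while the counter head runs outward from $\triangleright$ (recording the column offset), then move $H$ to the target block's $a$ with the counter head paused, then walk $H$ rightward as the counter head returns to $\triangleright$, so that $H$ lands on the correct column of the new block.
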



\begin{proof}
We fix $k\geq2$ and $\ell\geq1$. Let $M$ denote any polynomial-time $k$-sda$_{2}$($\ell+2$) with a counter head. Among all read-only tape heads indexed from $1$ to $\ell$, we call head 1 the principal-tape head and call head 2 and head 3 subordinate tape heads.
We wish to simulate these three tape heads  by two tape heads, eliminating one subordinate tape head. We leave all the remaining tape heads ``unmodified''  in the sense that they work only over $\tilde{x}_1$ in $a\tilde{x}_1b$ similarly to the original tape heads working over $x$.
This simulation can be carried out on an appropriate polynomial-time $k$-sda$_{2}$($\ell+1$), say, $N$.
Including $a$ and $b$, we set $t=|axb|$. The associated input to $N$ is $\triangleright\, \tilde{x}\, \triangleleft$.
Initially, heads $1$--$3$ of $M$ are all stationed at cell $0$.

We intend to simulate each step of $M$ by conducting a series of steps of $N$.
Assume that head $2$ is currently located at cell $i$ and head $3$ is at cell $j$. Such a location pair is expressed as $(i,j)$.
For any index $i\in[t]$, we call the block $a\tilde{x}_ib$ of the input $\tilde{x}$ as block $i$.
For convenience, $\triangleright$ and $\triangleleft$ are respectively called block 0 and block $n+1$.
We try to ``express'' the pair $(i,j)$ by stationing a designated tape head of $N$, say, head $h$, at the $i$th symbol of the $j$th block of $N$'s input tape.
Hereafter, we assume that head $h$ is currently located at the $i$th symbol of the $j$th block. We force $N$ to remember two input symbols, say, $(\sigma_i,\sigma_j)$ written at the $i$th and the $j$th cells of $M$'s input tape.

We further assume that, in a single step, heads $2$ and $3$ of $M$ move from $(i,j)$ to a new location pair $(i+d_1,j+d_2)$ by the given head directions  $d_1,d_2\in\{-1,0,+1\}$.
To simulate this single step of $M$,
$N$ needs to move head $h$ to a new location and read two symbols $(\sigma_{i+d_1},\sigma_{j+d_2})$.
If $d_2=0$, then $N$ moves head $h$ in direction $d_1$.
Otherwise, $N$ moves head $h$ in direction $d_1$, reads its input symbol $\sigma_{i+d_1}$, and remember it using the inner state.
Next, $N$ moves head $h$ leftward to the nearest $a$. As the head moves, $N$ also moves the counter head to the right (from the start cell) to count the number $i+d_1$.
Furthermore, $N$ moves head $h$ to the symbol $a$ in block $j+d_2$ without moving the counter head.
We restart the counter head backwardly. Finally, $N$ moves head $h$ rightward together with the counter head until the counter head comes back to the start cell.
At the time when the counter head arrives at the start cell,
head $h$ reaches the $(i+d_1)$th cell of block $j+d_2$.
Head $h$ then reads the input symbol $\sigma_{j+d_2}$ and $N$ remembers  $(\sigma_{i+d_1},\sigma_{j+d_2})$ using its inner state.
Note that the tape head on the storage tape never moves during the above process.

After the simulation of $M$, $N$ clearly reaches the same type (accepting or rejecting) of a halting state as $M$ does.
\end{proof}


Notice that a $k$-sda owns a one-way input-tape head whereas a $k$-sda$_{2}$($2$) uses a two-way input-tape head.
In the next lemma, we need to simulate  two-way head moves of a $k$-sda$_{2}$($2$) using one-way head moves of another $k$-sda$_{2}$(2).
For this purpose, we utilize a ``counter'' again together with the use of the reverse of an input.

\begin{lemma}\label{reversal-treat}
Given a polynomial-time $k$-sda$_{2}$($2$) $M$ with a counter head, there exists another $k$-sda$_{2}$($2$) $N$ with a counter head such that (i) $N$'s input-tape head never moves to the left and (ii) $N$ recognizes the language $L_{rev}= \{\hat{x} \mid x\in L(M)\}$ in polynomial time, where $\hat{x} = (a x\#x^Rb)^{|axb|}$ and $\#$ denotes a new separator.
\end{lemma}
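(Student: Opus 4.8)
\medskip
\noindent\textbf{Proof plan.}
The plan is to let $N$ carry out a step-by-step simulation of $M$ on the ``virtual'' string $x$ while its single input head moves monotonically to the right across $\hat{x}$, using the shape of $\hat{x}$ to emulate both directions of $M$'s principal input-tape head. The point of $\hat{x}=(a x\#x^{R}b)^{|axb|}$ is that every factor $a x\#x^{R}b$ displays $x$ twice: forward inside the block $a x\#$ and backward inside the block $\#x^{R}b$. Reading a forward block from left to right, with $a$ playing the role of $\triangleright$ and $\#$ the role of $\triangleleft$, reproduces $M$'s head scanning $\triangleright x\triangleleft$ left to right; reading a reverse block from left to right, with $b$ playing $\triangleright$ and $\#$ playing $\triangleleft$, reproduces $M$'s head scanning $\triangleright x\triangleleft$ right to left. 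Accordingly $N$ maintains the invariant that (i) its depth-$k$ storage tape is an exact copy of $M$'s and is driven by identical storage-head moves; (ii) its finite control records $M$'s current state together with one bit telling whether $N$'s input head currently lies in a forward block or in a reverse block; and (iii) $N$'s input head sits on the cell of $\hat{x}$ that, under the current orientation, represents the cell scanned by $M$'s principal head.

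First I would dispose of the routine case: if $M$ leaves its principal head stationary, or moves it in the direction compatible with $N$'s current orientation (either of these amounting to a physical right-move, or a non-move, on $\hat{x}$), then $N$ copies the move verbatim, also copying $M$'s storage-head move onto its own storage tape, and the invariant is immediately restored. The delicate case is a reversal of $M$'s principal head. Then $N$ flips orientation: it walks its input head forward into the next block of the opposite orientation --- the $\#x^{R}b$ part of the current factor when switching from forward to reverse, or the $a x\#$ part of the next factor when switching from reverse to forward --- and must land precisely on the cell representing $M$'s new head position. A short case check shows this target cell always lies strictly to the right of the present one, so the one-way restriction is never violated. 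To reach it, $N$ reuses the counter-head device of Lemma~\ref{half-reduction}: while advancing its input head to the start of the target block, $N$ drives its counter head rightward from $\triangleright$ to record the offset of the current cell inside its block; then, inside the target block, $N$ advances its input head while bringing the counter head back toward $\triangleright$, halting the input head the instant the counter head returns to $\triangleright$. By the forward/reverse mirror symmetry this is the correct cell (an $O(1)$ off-by-one being absorbed into the finite control). $M$'s own counter head is simulated directly by $N$'s counter head; assuming, by a routine normalization, that $M$ never reverses its principal head during a counter-head activation (counter-head use always driving the principal head monotonically), the counter head is free whenever $N$ needs it for a re-synchronization, so the two uses never collide. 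Lastly $N$ rejects any input not of the form $\hat{y}$: checking the block pattern, the equal block lengths, each reversal $v=u^{R}$, and the equality of all blocks is routine with the counter head and the finite control, and can be interleaved with the simulation. When $M$ halts, $N$ halts in the same kind of state.

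The step I expect to be the main obstacle is the global accounting that keeps $N$'s one-way input head from running off the right end of $\hat{x}$. Each reversal of $M$'s principal head pushes $N$'s input head forward by $\Theta(|x|)$ cells, while $\hat{x}$ supplies only $|axb|=|x|+2$ factors, that is, $O(|x|^{2})$ cells of room; so the simulation is sound precisely when the number of direction reversals of $M$'s principal head over its halting computation is $O(|x|)$. This is exactly where the depth-$k$ discipline is needed: $M$'s principal head may move only while the storage head scans a cell of depth at most $k-2$, and since each storage cell sustains at most $k$ modifications, the windows during which the principal head is even permitted to move --- and hence the reversals it can perform --- are tightly limited; combining this with the time bound on $M$, and, if necessary, a loop-free normal form capping how long any head may oscillate over a fixed storage configuration, one establishes the bound on reversals that makes the $|axb|$ factors of $\hat{x}$ suffice. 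Verifying this count, together with the off-by-one details of the counter-head re-synchronization and the $\hat{x}$-format check, is where essentially all of the work lies; once the invariant above is set up the simulation itself is straightforward, and $N$ runs in polynomial time because each step of $M$ costs $N$ at most $O(|\hat{x}|)$ of its own steps.
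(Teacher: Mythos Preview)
Your core simulation—step-by-step shadowing of $M$'s storage tape, plus counter-head re-synchronization when $M$'s principal head changes direction—matches the paper's approach. The paper, however, does not use your orientation bit: there $N$'s principal head always sits in a forward block $ax\#$, and on \emph{every} leftward step of $M$'s head (not just on a reversal) $N$ walks rightward through the remainder of the current factor and into the next factor's $ax\#$ part, the counter head recording and then replaying the offset so that $N$ lands on the cell holding $\sigma_{i-1}$. So the paper spends one full factor $ax\#x^{R}b$ per left move of $M$, while your orientation scheme spends roughly one per pair of reversals; otherwise the two mechanisms coincide.

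Where you go beyond the paper is in flagging the global budget question: with only $|axb|=|x|+2$ factors available, the simulation needs $M$'s principal head to make $O(|x|)$ reversals (your variant) or $O(|x|)$ left moves (the paper's variant). Your proposed resolution via the depth-$k$ discipline does not deliver this bound. Depth-susceptibility says only that the input head moves while the storage head rests on a cell of depth $\le k-2$; since each storage cell supports $O(k)$ such visits and the storage tape may occupy polynomially many cells, this yields only a polynomial bound on input-head moves (hence on reversals), not a linear one. A loop-free normal form capping storage-stationary runs contributes at most $|Q|\cdot O(|x|)$ further moves per such window and so does not rescue the count either. Nothing in the lemma's hypotheses forces $M$ to have linearly many reversals, so the gap you correctly anticipated is not closed by the argument you sketch; the paper's own proof simply does not take up this counting issue at all.
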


\begin{proof}
Let $M$ be any polynomial-time $k$-sda$_{2}$($2$) with a counter head. We simulate the two-way  movement of $M$'s input-tape head by a one-way tape head in the following way.
Assume that $i$ represents the cell position of $M$'s input-tape head. Let $\sigma_i$ and $\sigma_{i-1}$ respectively denote the tape symbols at cells $i$ and $i-1$. We briefly call by head 1 $M$'s input-tape head other than the counter head.
If head 1 moves to the right or makes a stationary move, then $N$ simulates the step of $M$ exactly.
In what follows, we consider the case where head 1 moves to the left; that is, the new position of head 1 is $i-1$. By the depth-susceptibility condition of $M$, the current storage-tape cell contains no symbol in $\Gamma^{(k-1)}\cup \Gamma^{(k)}$ because, otherwise, head 1 cannot move.
In this case,
we move $N$'s input-tape head and the counter head simultaneously to the right until the tape head reaches the first encounter of $b$ in $ax\# x^Rb$.  We then continue moving the tape head rightward while we move the counter head back to the left endmarker. We finally make the tape head shift to the right cell to reach the symbol $\sigma_{i-1}$ in $ax\#$.
\end{proof}


Next, we show how to eliminate a counter head using the fact that the counter head is depth-susceptible.

\begin{lemma}\label{counter-head}
Let $M$ denote any polynomial-time $k$-sda$_{2}$($2$) $M$ with a one-way input-tape head and a counter head. There exists a depth-susceptible $k$-sda $N$ that recognizes $L' =\{\bar{x} \mid x \in L(M)\}$ in polynomial time, where $\bar{x}= x_1 1^n x_2 1^n \cdots 1^n x_n$ for $x=x_1x_2\cdots x_n$.
\end{lemma}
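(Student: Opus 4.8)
The goal is to eliminate the counter head from a one-way $k$-sda$_{2}$($2$) $M$, producing an ordinary depth-susceptible $k$-sda $N$ working on the padded input $\bar{x}=x_11^nx_21^n\cdots 1^nx_n$. The core idea is that the counter head of $M$ is used only to measure a distance (it runs out from $\triangleright$, stalls, and returns), and this distance is at most $n+2$ since the input tape has $n+2$ cells including endmarkers; a one-way input-tape head of $N$ reading the padded input can ``spend'' the blocks of $1^n$ to carry out the counting physically while advancing monotonically. So I would first analyze exactly what information the counter head conveys: at the moment it is deactivated, $M$ knows (via its inner state) the number $i$ of cells it travelled, and its subsequent computation may branch on $i$ only through a bounded amount of state-encoded data. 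The plan is to have $N$'s single one-way input-tape head simulate head 1 of $M$ while it is scanning genuine symbols $x_j$, and whenever $M$ would activate its counter head to count up to some value $i\le n+2$, $N$ instead walks forward through one (or, if necessary, a preallocated bounded number of) padding block(s) $1^n$, using the $1$'s as tally marks.

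**Key steps, in order.** (1) Set up the correspondence between positions: a ``logical position $j$'' of head 1 of $M$ corresponds to $N$'s input-tape head being parked on the cell holding $x_j$ in $\bar{x}$, with the intervening $1^n$ blocks reserved for counting. Since $M$'s head 1 is one-way, $N$'s head is one-way too; between consecutive $x_j$'s there is a full block $1^n$ of scratch space. (2) Describe how $N$ simulates an ordinary (non-counter) step of $M$: if head 1 of $M$ moves right from $x_j$ to $x_{j+1}$, $N$ walks across the intervening $1^n$ block to the next $x$-symbol, keeping the storage-tape head stationary during the walk (this preserves depth-susceptibility, since we are passing over $1$'s which are input symbols, and the storage tape is untouched). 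If $M$'s head stays still, so does $N$'s. (3) Describe counter emulation: when $M$ activates the counter head and it reports distance $i$, $N$ is currently parked on some $x_j$ with an unused $1^n$ block immediately to its right; $N$ advances its head $i$ steps into that block, using the inner state to count up to $i$ just as $M$'s transition function would branch on the counter value, then completes the block to reach the next $x$-symbol. Because $i\le n+2$ and the block has exactly $n$ symbols, one small adjustment is needed: either pad with a slightly longer block, or note that $M$'s counter actually measures a distance into the $(n+2)$-cell tape but $N$ only ever needs distances into the logical string (which is where head 1 can be), hence $i\le n+1$, still possibly one more than $n$; I would handle the off-by-constant by allowing $N$ to peek one symbol past the block into the next $x_{j+1}$ or by the standard device of bundling. (4) Check depth-susceptibility of $N$: all the extra movement $N$ does is over input symbols ($1$'s and $x$'s) while the storage-tape head is stationary and the storage symbol scanned is unchanged; the only steps where $N$ touches the storage tape are exact copies of genuine storage-moving steps of $M$, so condition (i) and (ii) of depth-susceptibility transfer verbatim from $M$ to $N$. (5) Verify the time bound: each step of $M$ is simulated by $O(n)$ steps of $N$, and $M$ runs in polynomial time, so $N$ runs in polynomial time; and $N$ reaches an accepting (resp.\ rejecting) state exactly when $M$ does, so $L(N)=\{\bar{x}\mid x\in L(M)\}=L'$.

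**Main obstacle.** The delicate point is not the counting itself but keeping the simulation one-way while faithfully tracking \emph{which} padding block is ``fresh.'' Since $M$'s head 1 is one-way, each logical position $j$ is visited in a monotone sweep, so in principle each $1^n$ block is used exactly once; but a single step of $M$ in which it both moves head 1 and uses the counter requires $N$ to consume two distinct resources (the walk to the next $x$-symbol \emph{and} the count), and these must be scheduled into the available scratch space without $N$'s head backing up. The resolution is to observe that the counter head of $M$ always returns to $\triangleright$ with non-stopping movement and the storage-tape head is frozen during the whole counter activation, so $N$ can interleave: first walk the count into the block (tallying in the state), then continue straight through the remainder of that same block to land on $x_{j+1}$, all in one forward pass. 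I expect writing out this interleaving carefully — and pinning down the exact constant so that $n$ scratch symbols always suffice (or adjusting the padding definition by a harmless constant factor, which the lemma statement already permits through the freedom in $\bar{x}$) — to be the part that needs the most attention, while the depth-susceptibility bookkeeping and the polynomial-time bound are routine.
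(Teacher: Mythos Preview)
Your plan has a genuine gap: it keeps $N$'s storage tape identical to $M$'s and tries to do all the counting on the input tape, but a single one-way input head cannot absorb the work of \emph{two} input heads of $M$ that move during a counter activation.

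Concretely, recall the counter-head contract: while the counter is active the \emph{storage} head is frozen, but head~1 is not. In the machines that actually arise (see the proofs of Lemmas~\ref{half-reduction} and~\ref{reversal-treat}), head~1 moves in lockstep with the counter---first rightward together with the outbound counter, then rightward again while the counter returns. Your step~(3) has $N$'s sole input head walk into the $1^n$ block to ``tally $i$ in the state,'' but then $N$ is reading $1$'s, not the $x$-symbols that head~1 of $M$ is simultaneously consuming; the state evolution of $M$ during the activation depends on those $x$-symbols, so $N$ cannot reproduce it. Conversely, if $N$'s head tracks head~1 through the $x$-symbols, it is no longer measuring the counter. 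A single one-way head cannot do both.

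There is a second, independent problem with the budget. You allocate one $1^n$ block per head-1 position, but the number of counter activations while head~1 sits at a fixed $x_j$ is not bounded by a constant: between activations $M$ may move its storage head (changing the scanned storage symbol and state), and this can happen polynomially often before head~1 ever advances. So even in the restricted setting where head~1 stays put during each activation, the single fresh block to the right of $x_j$ is exhausted long before head~1 moves to $x_{j+1}$.

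The paper's proof takes the opposite route: the counter is simulated on the \emph{storage} tape, not the input tape. Each storage cell of $M$ is expanded into a ``region'' of $\Theta(kn)$ cells on $N$'s storage tape, and the counter's out-and-back trip of length $i$ is physically replayed by $N$'s storage-tape head inside the current region (this is where two-way motion is available). Because the storage tape has depth~$k$, each such replay consumes one of the $2k$ blocks in the region, and the representative symbol migrates to a fresh block; this is the new bookkeeping that replaces Galil's stack argument. The $1^n$ padding in~$\bar{x}$ is used not to count~$i$ but only to measure the length~$n$ when $N$ lays out fresh regions and shifts representatives---and this is also why depth-susceptibility is preserved, since $N$'s input head moves only while the storage head is on a low-depth (non-frozen) cell of the current region.
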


Sudborough made a similar claim, whose proof relies on Galil's argument \cite{Gal77}, which uses a stack to store and remove specific symbols in order to remember the distance of a tape-head location from a particular input tape cell. However, since tape cells on the storage tape are not allowed to modify more than $k$ times, we need to develop a different strategy to prove Lemma \ref{counter-head}.
For this purpose, we run an additional procedure of making enough open space on the depth-$k$ storage tape for future recording of the tape-head's movement.


\begin{proofof}{Lemma \ref{counter-head}}
Take any polynomial-time $k$-sda$_{2}$($2$) $M$ of the form $(Q,\Sigma, \{\Gamma^{(e)}\}_{e\in[0,k]_{\integer}}, \{\triangleright, \triangleleft\}, \delta,q_0,Q_{acc},Q_{rej})$ with a counter head.
We wish to simulate the behavior of the counter head using a storage tape of the desired $k$-sda $N$. In what follows, we describe this simulation procedure. Let $x$ denote any input and set $n=|\:{\triangleright}\,{x}\,{\triangleleft}\:|$. To simplify the description of the intended simulation by a $k$-sda $N$,
we deliberately allow $N$ to change any symbol not in $\Gamma^{(k)}$ into $B$ in a single step.

Recall that, whenever the counter head is activated, it starts at $\triangleright$, moves rightward for a certain number of steps, say, $i$, and moves back to the start cell to complete a single round of the ``counting'' process. During this process,
the storage-tape head stays still.
Since we force $N$ to move its  input-tape head exactly in the same direction as $M$'s, in what follows, we focus only on the simulation of $M$'s counter and storage-tape heads.
Using its inner states, $N$ can remember (a) which direction the storage-tape head comes from  and (b) the contents of  the tape cell currently scanned and its left and right adjacent tape cells (if any).

For convenience, we set $\gamma^{(k,0)}=\gamma^{(0,k)}=B$ and $B^{(k)}=\hat{B}^{(k)}=B$. We introduce special storage alphabets $\Gamma^{(j)}_N$ ($j\in[0,k]_{\integer}$) for $N$.
Firstly, we set $\Gamma^{(0)}_N = \Gamma^{(0)}$ and $\Gamma^{(k)}_N = \Gamma^{(k)}$. For any $j\in[k-1]$, the storage alphabet $\Gamma^{(j)}_N$ in part consists of symbols of the form $\gamma^{(0,e)}$ and $\gamma^{(e,0)}$ for $e\in[0,k]_{\integer}$ associated with $\gamma$ in $\Gamma^{(j)}$. Moreover, $\Gamma^{(j)}_N$ contains symbols $B^{(c)}$ and $\hat{B}^{(c)}$ for all $c\in[0,k]_{\integer}$.

We partition the storage tape into a number of ``regions'', where a  \emph{region} consists of $2k$ blocks, each of
which contains $n$ tape cells.
Each region is meant to simulate one run of the counter head and it basically holds the information on one storage symbol. Two regions are separated by one special ``separator block'' of $n$ cells. See an illustration in Figure~\ref{fig:storage-head-move} for blocks and regions.
Each block contains a string, which has one of the following three forms:  $\gamma' B_1B_2\cdots B_{n-1}$ with $\gamma' \in \Gamma_N - \{\Box,\triangleright\}$, $B_1B_2\cdots B_{n-1}\gamma'$ with $\gamma' \in \Gamma_N - \{\triangleright\}$, and $B_1B_2\cdots B_n$, where $B_i \in\{B^{(c)},\hat{B}^{(c)}\mid c\in [0,k-1]_{\integer}\}\cup\{B\}$ for any $i\in[n]$. If $\gamma'\neq B$, then $\gamma'$ is called a \emph{representative} of a block.
A tape cell in a block is also called a \emph{representative} if it contains a representative of the block.
A block is called \emph{active} if it contains a representative, and all other blocks are called \emph{passive}. In particular, we call a block \emph{consumed} if it is filled with $B$, and thus there is no representative. We say that a block is \emph{blank} if the block consists only of $B$.
The parameter $e$ in $\gamma^{(0,e)}$ (resp., $\gamma^{(e,0)}$) indicates the existence of $e$ consumed blocks in the area of the region that are left (resp., right) to the currently scanning cell.

In a run of the procedure described below, we maintain the circumstances, in   which there is at most one active block in each region.
Assume that $\gamma_1\gamma_2\gamma_3$ is the content of three neighboring cells, the middle of which is being scanned by $M$'s storage-tape head, provided that, whenever $\gamma_2$ equals $\triangleright$, we automatically ignore $\gamma_1$.
Assuming that $\gamma'_1$, $\gamma'_2$, and $\gamma'_3$ are three  representatives associated respectively with $\gamma_1$, $\gamma_2$, and $\gamma_3$.
Assume that $M$'s tape head stays over $\gamma_2$ and $N$'s tape head does over $\gamma'_2$.


Let us focus on a computation of $M$ on input $x$ and simulate this  computation on the $k$-sda $N$. To this simulation easier, we partition the computation into a series of ``session steps,'' each of which constitutes a consecutive sequence of $M$'s moves defined as follows. Assume that, at time $t$, the storage-tape head of $M$ is resting on a cell containing a non-$B$ symbol, say, $\gamma_0$.
We consider a consecutive cells holding a string of the form $\gamma_{-t_1}\gamma_{-t_1+1}\cdots \gamma_{-1} \gamma_0 \gamma_1 \cdots \gamma_{t_2-1}\gamma_{t_2}$ satisfying that $t_1,t_2\in\nat$, $\gamma_{-t_1}\notin \{\Box,B\}$, $\gamma_{t_2}\neq B$, and $B_i=B$ for all $i\in[-t_1+1,t_2-1]_{\integer}-\{0\}$.
As an example, at time $0$ when $M$ scans $\triangleright$ at the start cell and the rest of the cells are $\Box$, we obtain $t_1=0$, $t_2=1$, $\gamma_0=\triangleright$, and $\gamma_{t_2}=\Box$.
A \emph{session step} consists of one of the following series of moves of $M$: starting at $\gamma_0$, (1) write $\gamma_{new}$ over $\gamma_0$ and make a stationary move, (2) write $\gamma_{new}$, move in direction $d\in\{-1,+1\}$ passing through $\gamma_i$'s, reach either $\gamma_{-t_1}$ or $\gamma_{t_2}$, and stop, and (3) write $\gamma_{new}$, make one step in direction $d\in\{-1,0,+1\}$, make a turn, pass through $\gamma_i$'s, reach either $\gamma_{-t_1}$ or $\gamma_{t_2}$, and stop.

In the following simulation, each symbol $\gamma_i$ is translated into a region that contains its corresponding symbol $\gamma'_i$.


(*) Meanwhile, we assume that $\gamma'_0$ is of the form $\gamma_0^{(e,0)}$ for a certain $e\in[0,k]_{\integer}$ and that $\gamma'_0$ is written at cell, say, $s\in \nat$. The first assumption implies the existence of $e$ consecutive blank blocks in the left-side area of $\gamma_0^{(e,0)}$.
It also follows that the current block has the form $\gamma'_0 B^{(c)}\cdots B^{(c)}$ for a certain constant $c\in [0,dv(\gamma_0)-1]_{\integer}$.
In what follows, we also assume that the previous session step of $M$ is not a storage-stationary move. We argue two cases (I)--(II) separately.

(I) Consider the first case where the counter head is not activated.
Since we do not need to simulate the behavior of the counter head, it suffices to simulate the next session step of $M$ using $N$'s storage tape.

(1) Consider the case where the storage-tape head of $M$ came from the left. Clearly, $\gamma_0\neq\triangleright$ follows. Assume that we remember the symbol $\gamma'_{-t_1}$ in the form of inner state. In what follows, we examine two cases, depending on whether $\gamma_0$ is $\Box$ or not.


(i) Assume that $\gamma_0\neq \Box$. Note that both separator blocks around $\gamma'_0$ in the current region are not blank. This makes it possible to discern the borders to the neighboring regions.

(a) If $d=0$, then $N$ changes $\gamma_0^{(e,0)}$ to $\gamma^{(e,0)}_{new}$ without further moving its storage-tape head.

(b) Assume that $d=+1$. There are two more cases to consider.
If $M$ moves rightward to $\gamma_{t_2}$, then $N$ overwrites $\gamma_0^{(e,0)}$ by $\gamma_{new}^{(e,0)}$, moves rightward by changing $B^{(c)}$ to $B^{(c+1)}$, and crosses the border to the neighboring region.
From there, $N$ skips blank regions until entering the first non-blank region, which holds $\gamma'_{t_2}$.  There may be a ceratin number of blank blocks in this region before reaching $\gamma'_{t_2}$.
Since $\gamma_{t_2}\neq B$, $N$ can find this representative $\gamma'_{t_2}$ in this neighboring region and stops at the cell containing $\gamma'_{t_2}$.
In contrast, if $M$ steps to $\gamma_1$ ($=B$) and then returns to $\gamma_{new}$, then $N$ writes $\gamma_{new}^{(e,0)}$ over $\gamma_0^{(e,0)}$, moves to the right, crosses the border to the neighboring region, makes a turn at the time of encountering the first $B$, returns to $\gamma_{new}^{(e,0)}$. If $dv(\gamma_0)<k-1$, then $N$ stops here; otherwise, $N$ continues moving leftward until reaching $\gamma'_{-t_1}$, and then stops.

(c) When $d=-1$, $N$ writes $\gamma_{new}^{(e,0)}$ over $\gamma_0^{(e,0)}$, makes a left turn, continues moving leftward until finding a representative $\gamma'_{-t_1}$, and stops.


(ii) Consider the case where $\gamma_0=\Box$. Note that all cells located in the  right-side area  of cell $s$ hold $\Box$. There are three cases (a)--(c) to examine.

(a) If $d=0$, then $N$ writes $\gamma^{(0,0)}_{new}$ over $\Box$ and makes a stationary move. This is possible because $\gamma_{new}^{(0,0)}\neq B$.

(b) In contrast, assume that $d=+1$. We write $\gamma_{new}^{(0,0)}$ over $\Box$ and move to the right. Since we need to secure enough open space for future simulations of the counting head, we wish to generate a new region. In an early simulation, we have already generated the first half of a region, and thus we need to generate the second half of this region at first.
For this purpose, using $1^n$ on the input tape, $N$ moves rightward passing through $(k-e-1)n$ cells by changing $\Box$ to $B^{(1)}$, creates a new border, continues moving for $kn$ cells by changing $\Box$ to $B^{(1)}$ to find the center of the new region, and finally stops. This last process newly generates the first half of a region.

(c) Finally, when $d=-1$, $N$ writes $\gamma_{new}^{(0,0)}$ over $\Box$, makes a left turn, crosses the first border to the neighboring blank region, continues skipping blank regions until entering the region containing $\gamma'_{-t_1}$, and then stops.


(2) Consider the case where the storage-tape head came from the right. The major deviation from (1) is the case of $\gamma_0=\triangleright$. In this case, if $d=0$, then $N$ makes a stationary move. If $d=+1$, then $N$ crosses the border to the right neighboring region, continues skipping blank regions until finding $\gamma'_{t_2}$, and then stops. All the other cases are handled symmetrically to (1).


(II) Next, we want to simulate a single counting process by the counter head of $M$ on the depth-susceptible $k$-sda $N$. Assume that the counting head travels rightward passing through $i$ cells and then returns to the start cell. There are three cases (1)--(3) to consider separately. Note that $\gamma_0\neq \triangleright$ since $\triangleright\in\Gamma^{(k)}$.

(1) Consider the case where the storage-tape head of $M$ came from the left. This implies that $\gamma_0\neq\triangleright$.
In this case, we need to mimic the back-and-forth movement of the counter head as follows.
The machine $N$ remembers $\gamma'_0$ in the form of inner states, modifies it to $\hat{B}^{(a+1)}$ ($\notin \Gamma^{(k)}$) with $a=dv(\gamma_0)$,  and moves its tape head for $i$ steps to the right as the counter head does, by changing every encountered symbol of the form $B^{(c)}$ with $c\in[0,k]_{\integer}$ to $B^{(\min\{c+1,k\})}$ on its way,  provided that $B^{(0)}$ denotes $\Box$.
After making $i$ steps, it makes a left turn, returns to $\hat{B}^{(a+1)}$, and writes $B$ over $\hat{B}^{(a+1)}$.
The storage-tape head of $N$ again starts moving rightward for exactly $n$ steps (by reading $1^n$ on the input tape) by changing each
$B^{(\min\{c+1,k\})}$ to $B^{(\min\{c+2,k\})}$, and it finally writes $\gamma_{new}^{(e+1,0)}$ since we create an additional consumed block in the left-side area of $\gamma^{(e+1,0)}_{new}$.

Finally, if $d=0$, then the tape head stops here. By contrast, when $d=+1$ (resp., $d=-1$),  $N$ moves the storage-tape head rightward (resp., leftward), skipping blank regions until encountering a representative $\gamma'_{t_2}$ (resp, $\gamma'_{-t_1}$).


(2) Assume that $M$ moved to $\gamma_{2}$ from the right. Note that  $\gamma_0\neq\Box$. Symmetrically to (1), we generate a new consumed block in the left-side area of $\gamma'_0$.


(**) To complete the simulation, let us consider the second case where
$\gamma'_0$ has the form $\gamma^{(0,e)}_0$ for a certain number $e\in[0,k]_{\integer}$. The current block has the form $B^{(c)}\cdots B^{(c)}\gamma'_0$ and there are $e$ consumed blocks in the right-side area of $\gamma^{(0,e)}_2$. Nonetheless, this case can be symmetrically treated by skipping all consumed blocks as described above.
\end{proofof}


Finally, we combine all the lemmas (Lemmas \ref{two-input-model}--\ref{reversal-treat}) and  verify Theorem \ref{auxiliary-characterize}.

\vs{-2}
\begin{proofof}{Theorem \ref{auxiliary-characterize}}
Let $k\geq2$.
The implication (1)$\Rightarrow$(2) is shown as follows.
Take any language $L$  over alphabet $\Sigma$ in $\mathrm{LOG}k\mathrm{SDA}$.
There exist a function $f:\Sigma^*\to\Sigma_2^*$ in $\fl$ for an appropriate alphabet $\Sigma_2$ and a depth-susceptible $k$-sda $M$ working over $\Sigma_2$ such that, for any string $x\in\Sigma^*$, if $x\in L$, then $M$ accepts the string $f(x)$; otherwise, $M$ rejects it. By Lemma \ref{two-input-model}, we can obtain a log-space depth-susceptible aux-$k$-sda $N$ that recognizes $\{x\in\Sigma^*\mid f(x)\in L(M)\}$ in polynomial time.

Lemma \ref{aux-to-6-heads} obviously leads to the implication  (2)$\Rightarrow$(3). Finally, we want to show that (3) implies (1).
Given a language $L$, we assume that there is a polynomial-time $k$-sda$_2(\ell)$ $M$ recognizing $L$ for a certain number $\ell\geq2$.
We transform this $k$-sda$_{2}$($\ell$) to another $k$-sda$_{2}$($\ell+1$) $M'$ by providing a (dummy) counter head.
We repeatedly apply Lemma \ref{half-reduction} to reduce the number of input-tape heads down to $2$ by modifying the target language to $L_{ab}$.
Lemma \ref{reversal-treat} then implies the existence of a $k$-sda$_{2}$($1$) $N$ with one-way input-tape and counter heads such that $N$ correctly recognizes $L_{rev}$ in polynomial time.
By Lemma \ref{counter-head}, we further obtain a polynomial-time depth-susceptible $k$-sda $K$ that can recognize $L'$,  which is of the form $\{\bar{x}\mid x\in L_{ab}\}$.
Given an input $z$ to $L$, we define $f(z)$ to be the input $\bar{x}$  obtained by running a series of the processes, of Lemmas \ref{half-reduction}--\ref{reversal-treat}, which reduce the number of input-tape heads.
By the clear description of these reduction processes, this function $f$ is computed in polynomial-time using only log space. Since $L'=\{f(x)\mid x\in L\}$ and $f\in\fl$, we conclude that
$L$ belongs to $\mathrm{LOG}k\mathrm{SDA}$.
\end{proofof}


\section{Universal Simulators and L-m-Hard  Languages}\label{sec:hardest-language}

As a major characteristic feature, we intend to prove the existence of concrete, generic $\dl$-m-hard languages for $\mathrm{LOG}k\mathrm{SDA}$ for each index $k\geq2$. For this purpose, we first construct a universal simulator that has an ability to precisely simulate all depth-susceptible $k$-sda's when  appropriate encodings of both $k$-sda's and inputs are given. We further force this universal simulator to be a ``depth-immune'' $k$-sda$_{2}(4)$.

\subsection{LOG$k$SDA-Hard Languages}

Sudborough \cite{Sud78} earlier proposed, for every number $m\geq2$, the special ``tape-hardest'' language $L_0^{(m)}$, which is $\dl$-m-hard for $\mathrm{DCFL}$. This language literally encodes all transitions of deterministic pushdown automata so that we can simulate these machines step by step using a stack.
Since $L^{(m)}_0$ belongs to $\mathrm{LOGDCFL}$, it is also $\dl$-m-complete for $\mathrm{LOGDCFL}$ because $\mathrm{LOGDCFL}$ is closed under $\dl$-m-reductions.
Sudborough's success comes from the fact that the use of one-way and two-way deterministic pushdown automata makes no difference in formulating  $\mathrm{LOGDCFL}$.
In a similar spirit, we propose the following decision problem,  $\mathrm{MEMB}_k$, for $k\mathrm{SDA}$.
Recall that a decision problem is identified with its associated language.

\ms
{\sc Membership $k$SDA Problem} ({\sc MEMB$_k$}):
\renewcommand{\labelitemi}{$\circ$}
\begin{itemize}\vs{-2}
  \setlength{\topsep}{-2mm}%
  \setlength{\itemsep}{0mm}%
  \setlength{\parskip}{0cm}%

\item {\sc Instance:} an encoding $\pair{M,x}$ of a depth-susceptible $k$-sda $M$ over alphabet $\Sigma$ and an input $x$.

\item {\sc Question:} does $M$ accept $x$?
\end{itemize}

A key to an introduction of $\mathrm{MEMB}_k$ is a ``generic'' scheme of encoding both a depth-susceptible $k$-sda $M$ and an input $x$ into a single string $\pair{M,x}$ over a fixed alphabet, which is independent of the choice of $M$ and $x$. In Section \ref{sec:membership}, we will explain such an encoding scheme in detail.

Let us recall the depth-susceptibility condition imposed on $k$-sda$_{2}$($\ell$) in Section \ref{sec:multi-head}, which is a restriction on the behavior of the $k$-sda$_{2}$($\ell$) while reading storage-tape symbols in $\Gamma^{(k-1)}\cup\Gamma^{(k)}$.
We remove this condition and introduce a \emph{depth-immune $k$-sda$_{2}$($\ell$)} in a way similar to the introduction of a depth-immune $k$-sda in Section \ref{sec:TMs}.
With this new model, we define $k\mathrm{SDA}_{2}(\ell)$ to be the family of   all languages recognized by depth-immune $k$-sda$_{2}$($\ell$)'s running in polynomial time.

We assert the following two statements.

\begin{theorem}\label{hardness-prop}
Let $k\geq2$.
\renewcommand{\labelitemi}{$\circ$}
\begin{enumerate}\vs{-2}
  \setlength{\topsep}{-2mm}%
  \setlength{\itemsep}{0mm}%
  \setlength{\parskip}{0cm}%

\item The language $\mathrm{MEMB}_k$ belongs to $k\mathrm{SDA}_{2}(4)$.

\item $\mathrm{MEMB}_k$ is $\dl$-m-hard for $k\mathrm{SDA}$ and thus for $\mathrm{LOG}k\mathrm{SDA}$.
\end{enumerate}
\end{theorem}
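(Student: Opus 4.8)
\textbf{Proof plan for Theorem \ref{hardness-prop}.}

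The plan is to handle the two assertions separately. For the first assertion---that $\mathrm{MEMB}_k$ lies in $k\mathrm{SDA}_2(4)$---I would build a depth-immune $k$-sda$_2(4)$ that, given a well-formed encoding $\pair{M,x}$, uses its four input-tape heads as follows: one head (the ``program head'') scans the transition table of $M$ to look up the applicable transition; one head (the ``input head'') tracks the position of $M$'s simulated input-tape head within the encoded $x$; and the remaining two heads serve as scratch heads for navigating between the transition-table region and the input region of $\pair{M,x}$, and for verifying that the encoded instance is syntactically valid. The simulator's own depth-$k$ storage tape directly simulates $M$'s storage tape: each simulated storage symbol $\gamma\in\Gamma^{(e)}$ of $M$ is encoded by a fixed-length block of symbols over the simulator's storage alphabet, with the depth index $e$ recorded so that the depth-$k$ requirement is respected step for step. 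The crucial point permitting depth-immunity of the simulator: when $M$'s storage-tape head sits on a symbol in $\Gamma^{(k-1)}\cup\Gamma^{(k)}$, the depth-susceptibility of $M$ itself forces $M$'s input-tape head to be stationary, so the simulator never needs to move its input heads in a forbidden situation \emph{for the simulated computation}; but the simulator must still move its own heads to look up transitions and check the encoding while its storage head rests on a near-frozen or frozen cell, which is exactly why the simulator must be declared depth-immune rather than depth-susceptible. Polynomial running time follows because $M$ itself runs in polynomial time on $x$ (it recognizes a language, hence halts), and each of $M$'s steps is simulated by a bounded-length head-walk through $\pair{M,x}$.

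For the second assertion---$\dl$-m-hardness of $\mathrm{MEMB}_k$ for $k\mathrm{SDA}$---I would argue directly: given any $L\in k\mathrm{SDA}$, fix a depth-susceptible $k$-sda $M_L$ recognizing $L$, and define the reduction $f(x)=\pair{M_L,x}$. Since $M_L$ is a fixed machine, producing its encoding is trivial, and appending the encoding of $x$ requires only reading $x$ once and transcribing it under the fixed encoding scheme; this is computable by an $\fl$-transducer (indeed in a very weak sense). Correctness is immediate from the definition of $\mathrm{MEMB}_k$: $x\in L \iff M_L$ accepts $x \iff \pair{M_L,x}\in\mathrm{MEMB}_k$. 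Hardness for $\mathrm{LOG}k\mathrm{SDA}$ then follows by transitivity of $\Lmreduces$: any $A\in\mathrm{LOG}k\mathrm{SDA}$ satisfies $A\Lmreduces B$ for some $B\in k\mathrm{SDA}$, and $B\Lmreduces \mathrm{MEMB}_k$ by what was just shown, and $\fl$ is closed under composition, so $A\Lmreduces\mathrm{MEMB}_k$.

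The main obstacle is the first assertion, specifically designing the encoding scheme so that four input-tape heads genuinely suffice. The delicate issues are: (a) laying out $\pair{M,x}$ so that the program region and the input region can be navigated with a constant number of heads---this is why a ``generic'' layout with explicit delimiters and a canonical ordering of transitions is needed, to be spelled out in Section \ref{sec:membership}; (b) encoding $M$'s storage symbols on the simulator's storage tape \emph{without} violating the depth-$k$ requirement, including the double-access bookkeeping at turns, which requires the block encoding to carry the current depth index and to advance it by exactly one (or two, at a turn) per visit; and (c) confirming that when the simulated machine $M$ reaches a symbol in $\Gamma^{(k-1)}$ or $\Gamma^{(k)}$, the only residual head movement of the simulator is over its \emph{own} navigation, never a simulated movement of $M$'s input head---this is precisely the content of $M$'s depth-susceptibility and is what allows the whole construction to go through while the simulator operates in the depth-immune regime. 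Handling malformed encodings (rejecting them outright after a one-pass syntactic check using the scratch heads) is routine but must be mentioned so that $\mathrm{MEMB}_k$ is well-defined as a language.
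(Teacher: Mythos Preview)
Your treatment of part~(2) is correct and essentially identical to the paper's: fix a depth-susceptible $k$-sda $M_L$ for $L$, set $f(x)=\pair{M_L,x}$, observe that $f\in\fl$ since $M_L$ is fixed, and invoke transitivity of $\Lmreduces$ for the extension to $\mathrm{LOG}k\mathrm{SDA}$.

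For part~(1) your high-level plan---build a depth-immune $k$-sda$_2(4)$ universal simulator---matches the paper, which isolates this as Lemma~\ref{universal-simulator}. But your storage simulation has a genuine gap that your obstacle~(b) names without solving. To simulate one access by $M$ to a storage cell, the universal machine $U$ must first \emph{read} the block encoding the current symbol $\gamma$ (it cannot hold $\gamma$ in finite control, since $\Gamma_M$ is unbounded) and then \emph{write} the encoding of the new symbol. Each of these is a traversal of the block on $U$'s own depth-$k$ storage tape, and each traversal raises the depth of those cells by one. A plain ``fixed-length block'' therefore burns two depth levels per simulated access and freezes after roughly $k/2$ accesses while $M$'s cell is still live; merely ``recording the depth index'' does nothing to prevent this. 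The paper's fix uses two coupled ideas absent from your plan: (i)~each block is split into \emph{two} sections, one holding a ``core'' $(\gamma\#)^{\theta}$ and the other holding intermediate blanks $B_e^{\alpha}$; a simulated access consumes the core section (depth $e\to e{+}1$) and writes the new core into the \emph{other} section (also $e\to e{+}1$), with the roles swapping at the next access, so one simulated step of $M$ costs exactly one depth level of $U$; and (ii)~the core carries $\theta=|Q_M|\,|\Gamma_M|$ redundant copies of $\gamma\#$, so that $U$ can identify $\gamma$ by successive destructive comparisons against the list $G$ on the input tape---each failed candidate destroys one copy---without ever storing $\gamma$ internally. This is exactly why the transition encodings in Section~\ref{sec:membership} carry $(\gamma_{h_3}\#)^{\theta}$ rather than a single copy.

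A secondary point: the paper's four heads are not ``program / input / two scratch'' but four \emph{pointers} into the lists $P$, $S$, $X$, $G$; the heads on $S$ and $G$ act as external memory for $M$'s current state and current storage symbol, neither of which fits in $U$'s finite control. Your two scratch heads could be repurposed for this, but the plan should say so. Finally, ``it recognizes a language, hence halts'' does not by itself yield polynomial running time for the simulator; that requires a separate argument.
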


To prove Theorem \ref{hardness-prop}(1), it suffices in essence to verify the following lemma. For convenience, a $k$-sda$_2$($\ell$) $U$ is called a \emph{universal simulator} for depth-susceptible $k$-sda's if, for any depth-susceptible $k$-sda $M$ and any input $x$ given to $M$, (i) $U$ takes an input of the form $\pair{M,x}$ and (ii) if $M$ halts on $x$, then $U$ enters the same type (accepting or rejecting) of halting states as $M$ does; otherwise, $U$ does not halt.

\begin{lemma}\label{universal-simulator}
For each $k\geq2$, there exists a special depth-immune $k$-sda$_{2}$($4$) that works as a universal simulator for depth-susceptible $k$-sda's.
\end{lemma}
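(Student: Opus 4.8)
The plan is to build $U$ as an explicit depth-immune $k$-sda$_2(4)$ whose input is an encoding $\pair{M,x}$ and whose storage tape faithfully mirrors the storage tape of the simulated machine $M$, symbol for symbol, while the input-tape heads of $U$ are used partly to parse the fixed-format description of $M$ and partly to track $M$'s single (one-way) input head over the encoded copy of $x$. First I would fix a reasonable encoding scheme for $\pair{M,x}$ (deferring the precise bit-level details to Section~\ref{sec:membership} as promised): the transition table of $M$ is written as a list of tuples $(q,\sigma,\gamma,p,\xi,d_1,d_2)$ over a constant alphabet (using a positional/binary code for the inner states of $M$ and for the storage symbols of $M$, since $M$ has arbitrarily many of each), followed by a separator and then the literal input string $x$, possibly padded with endmarkers. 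The key observation making the whole construction possible is that $U$ does \emph{not} need a storage alphabet large enough to hold $M$'s storage symbols directly inside a single cell; instead $U$ devotes a \emph{block} of cells per simulated storage cell, storing the binary code of the current symbol of $M$ there together with a one-bit depth tag so that $U$ can recover $dv(\gamma)$. Because $M$'s storage alphabets are mutually disjoint by level, the block contents themselves already encode the depth, so the depth-$k$ requirement on $U$ can be met by a straightforward accounting: each visit of $M$'s head to one of its cells is simulated by a bounded left-to-right-to-left sweep of $U$'s head across the corresponding block, and the number of times $U$ rewrites any individual cell of that block is bounded by a constant times the number of visits $M$ makes — which is at most $k$. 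I would choose the block layout (e.g.\ unary-separated bit positions, or a fixed width determined by the code length, which $U$ learns by scanning the description once) so that this constant is exactly absorbed into the $k$ levels of $U$'s own storage alphabet; this is the step where one must be careful, and it is the main obstacle (see below).

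The head budget of four is allocated as follows. Head~1 is the ``program counter'': to simulate one step of $M$, $U$ knows $M$'s current inner state (held in $U$'s finite control — note $M$'s state count is fixed once $\pair{M,x}$ is read, hence a constant for the purposes of the simulating machine family, but to be safe we keep its \emph{code} on a head rather than in the control), the currently scanned input symbol of $M$, and the currently scanned storage symbol of $M$ (read off the current block by a sweep), and it walks head~1 through the transition list to find the unique matching tuple, then reads off $(p,\xi,d_1,d_2)$. Head~2 stays parked on the copy of $x$ inside the encoding and advances exactly when $M$'s (one-way) input head advances, i.e.\ by $d_1\in\{0,+1\}$; since $M$ is one-way this head never needs to move left. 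Head~3 is a scratch/counter head used during the transition lookup to match the binary codes of states and symbols (comparing the code currently ``remembered'' against the code appearing in a tuple) and to help $U$ navigate its own storage blocks — e.g.\ to count off block boundaries. Head~4 is a second scratch head, needed because comparing two binary strings that appear at different places in the description, or measuring block widths, genuinely requires two synchronized pointers. After the matching tuple is located, $U$ overwrites $M$'s scanned storage cell: it sweeps its head across the current block, rewriting the stored code to the code of $\xi$ and bumping the depth tag, then moves its storage head to the adjacent block in direction $d_2$. When $M$'s transition would drive a cell past level $k$, i.e.\ writes a symbol of $\Gamma^{(k)}=\{\triangleright,B\}$, $U$ records $B$ in that block; since $U$ is allowed to be \emph{depth-immune}, $U$ may keep re-reading such blocks (to detect boundaries, etc.) with no constraint, and it may also keep its input-tape heads moving while over level-$(k-1)$ or level-$k$ blocks — this is precisely why depth-immunity of $U$ is essential and why we cannot hope for $U$ to be depth-susceptible. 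Acceptance/rejection: when the located tuple shows $M$ has entered an accepting (resp.\ rejecting) state of $M$ — which $U$ can detect from the encoding of $M$'s accept/reject sets, stored in the description and checked via heads~3--4 — $U$ halts in its own accepting (resp.\ rejecting) state; if $M$ runs forever, so does $U$. Polynomial running time is clear: each step of $M$ costs $U$ only $O(|\pair{M,x}|)$ work for the table lookup plus $O(\text{block width})$ for the storage update, and $M$ itself, being a polynomial-time depth-susceptible $k$-sda, takes polynomially many steps on $x$; hence $U$ runs in time polynomial in $|\pair{M,x}|$.

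The main obstacle — and the only genuinely delicate point — is making the depth bookkeeping on $U$'s own storage tape exactly right: every elementary move of $M$'s storage head (including a ``double access'' at a turn) is simulated by $U$ with a short back-and-forth pass over one block, and during that pass $U$ must overwrite the block's cells, so a single visit of $M$ to one of its cells translates into several rewrites of several of $U$'s cells. To keep $U$ within its depth-$k$ budget I would design the block so that each cell of a block is rewritten at most once per visit of $M$ to the corresponding cell: lay out the block as a sequence of slots each of which is touched exactly once on the way out and possibly once more on the way back (which is itself the ``turn'' doubling and is allowed), and set up $U$'s storage alphabet levels $\Gamma_U^{(0)},\dots,\Gamma_U^{(k)}$ so that the level of a slot after $j$ visits of $M$ is $j$ (with turns counted twice, exactly as in $M$'s own depth-$k$ requirement). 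Because $M$ obeys the depth-$k$ requirement, $M$'s cell is frozen after $k$ accesses, so the corresponding block of $U$ is likewise at level $k$ and never touched destructively again — any further passes $U$ makes over it are read-only boundary scans, permissible since $U$ is depth-immune. I expect the write-up of Lemma~\ref{universal-simulator} to consist almost entirely of pinning down this block format and the induced alphabet levels, together with the routine verification that four heads suffice for the table-driven step simulation; everything else is bookkeeping. Once Lemma~\ref{universal-simulator} is in hand, Theorem~\ref{hardness-prop}(1) follows because $U$ on input $\pair{M,x}$ recognizes exactly $\mathrm{MEMB}_k$, and $U$ is a depth-immune $k$-sda$_2(4)$ running in polynomial time, so $\mathrm{MEMB}_k\in k\mathrm{SDA}_2(4)$.
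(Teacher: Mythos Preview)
Your overall architecture --- blocks on $U$'s storage tape mirroring $M$'s cells, four input heads for program-counter, input-position, and bookkeeping --- matches the paper's strategy. But there is a genuine gap at the heart of the step simulation: you never explain how $U$ can match the current storage symbol of $M$ against the transition table without exhausting its depth-$k$ budget on the block. You say the block holds ``the binary code of the current symbol'' at ``a fixed width determined by the code length,'' and that $U$ reads it ``by a sweep'' and then walks head~1 through the transition list. But that code has length $\Theta(\log|\Gamma_M|)$, unbounded as $M$ varies, so it cannot sit in $U$'s finite control after the sweep --- and the sweep has already advanced every cell of the block one level. To compare against the next candidate tuple you would have to sweep again, and there are $|Q_M|\cdot|\check{\Sigma}_M|\cdot|\Gamma_M|$ candidates, far more than $k$. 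Your one-rewrite-per-visit accounting silently assumes the entire table search finishes within a single destructive read of the block; no mechanism for that is given. (You sense the analogous issue for states --- ``held in $U$'s finite control \dots\ but to be safe we keep its code on a head'' --- but the storage-symbol version is the real crux and goes unaddressed.)

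The paper closes this gap with two coupled ideas. First, the encoding $\pair{M,x}$ contains, besides the transition list $P$ and the input $X$, separate reference lists $S=p_1\#\cdots\# p_{m_1}\#$ and $G=\gamma_1\#\cdots\#\gamma_{m_2}\#$; heads~2 and~4 rest inside $S$ and $G$, \emph{pointing at} the current state and the current storage symbol, while head~3 tracks the input position in $X$. Second --- and this is the idea your proposal is missing --- each block on $U$'s storage holds not one but $\theta=|Q_M|\,|\Gamma_M|$ copies of the symbol code, as $(\gamma\#)^{\theta}$, paired with a companion section of intermediate blanks. To identify $\gamma$, $U$ consumes these copies one by one while head~4 walks through $G$, spending one copy per candidate comparison; $\theta$ copies suffice. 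Once head~4 points at $\gamma$ in $G$, the entire receptor search through $P$ is carried out on the input tape alone, with the storage head stationary; $U$ then copies $(\gamma_{new}\#)^{\theta}$ from the residue of the located transition into the companion section and repositions heads~2 and~3. The two-section block layout is exactly what makes the depth accounting come out: each simulated step of $M$ erases one section and fills the other, so the block's level tracks $M$'s cell depth precisely.
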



In Section \ref{sec:proofs-main}, we wish to prove Lemma  \ref{universal-simulator} and subsequently Theorem \ref{hardness-prop}. The proof of the lemma provides a detailed construction of the desired $k$-sda$_{2}$($4$) universal simulator, which takes an encoding $\pair{M,x}$ of a depth-susceptible $k$-sda $M$ and its input $x$ and then properly simulates $M$ on $x$ using only a depth-$k$ storage tape of the simulator.
Such a proper encoding scheme enables us to construct the desired universal simulator.
Theorem \ref{hardness-prop}(1) instantly follows from the existence of a universal simulator.

\subsection{A Desirable Encoding Scheme}\label{sec:membership}

Hereafter, we describe the desired encoding $\pair{M,x}$ of a depth-susceptible $k$-sda $M$ and an input $x$. We need to heed special attention to how to encode a pair of $M$ and $x$ into a single string $\pair{M,x}$ so that we can easily retrieve $M$ and $x$ from $\pair{M,x}$ using a depth-$k$ storage tape  for the purpose of the simulation of $M$ on $x$.
The desired encoding of $M$ and $x$ needs to keep all the information on the transitions of $M$ intertwined with all bits of $x$ in sequence.
However, since the storage usage of $k$-sda's are quite different from that of deterministic pushdown automata, our encoding scheme is therefore quite different from Sudborough's scheme.


An underlying idea of Sudborough's construction of his tape-hardest languages is the notion of \emph{cancelling pairs}. A similar idea will be used implicitly in the following construction.


Since a $k$-sda uses arbitrary sets $Q$, $\Sigma$, and $\Gamma$, we need to express them using only fixed alphabets independent of $M$ and $x$.
In the rest of this subsection, we assume that $M$ has the form $(Q_M,\Sigma_M, \{\Gamma^{(e)}_M\}_{e\in[0,k]_{\integer}}, \{\triangleright,\triangleleft \}, \delta,p_1,Q_{M,acc},Q_{M,rej})$ satisfying the following specific conditions: $Q_M=\{p_1,p_2,\ldots,p_{m_1}\}$, $\check{\Sigma}_{M} = \Sigma_{M}\cup\{{\triangleright,\triangleleft}\} = \{\sigma_1,\sigma_2,\ldots,\sigma_c\}$, and
$\Gamma_{M} =\bigcup_{e\in[0,k]_{\integer}}\Gamma^{(e)}_{M} = \{\gamma_1,\gamma_2,\ldots,\gamma_{m_2}\}$ with $\Gamma^{(0)}_{M}=\{\Box\}$ and $\Gamma^{(k)}_{M}=\{\triangleright,B\}$.
Without loss of generality, we further assume that $Q_{M,acc} = \{p_2\}$ and $Q_{M,rej}=\{p_3\}$.
The transition function $\delta$ thus maps $Q_M\times \check{\Sigma}_{M}\times \Gamma_{M}$ to $Q_M\times \Gamma_M\times D_1\times D_2$.
For simplicity, we assume that $c$, $m_1$, and $m_2$ satisfy that $c=2^{\ceilings{\log{c}}}$, $m_1=2^{\ceilings{\log{m_1}}}$, and $m_2=2^{\ceilings{\log{m_2}}}$. Under this assumption, all the elements in $\check{\Sigma}_{M}$, $Q_M$,  and $\Gamma_{M}$ are precisely expressed as the corresponding elements in $\{a_0,a_1\}^{\ceilings{\log{c}}}$, $\{a_0,a_1\}^{\ceilings{\log{m_1}}}$, and $\{a_0,a_1\}^{\ceilings{\log{m_2}}}$ (using their lexicographic order), respectively.

In what follows, let $e,e'\in[0,k]_{\integer}$, $h_1,h_3\in[m_1]$, $h_2,h_4\in[m_2]$, $h_5\in[c]$, $\gamma_{h_1}\in\Gamma^{(e)}_{M}$, $\gamma_{h_3}\in\Gamma^{(e')}_{M}$, and $d_1\in\{0,+1\}$.
Notice that the input tape is read by $M$ only once from left to right.
Generally, a transition of $M$ has one of the following 7 forms.

\s

(1) $\delta(p_{h_2},\sigma_{h_5},\gamma_{h_1}) = (p_{h_4},\gamma_{h_3},d_1,+1)$ (moving to the right) if $e<k$ is even and $e'= \min\{e+1,k\}$.

(2) $\delta(p_{h_2},\sigma_{h_5},\gamma_{h_1}) = (p_{h_4},\gamma_{h_3},d_1,-1)$ (left turn) if $e<k$ is even and $e'= \min\{e+2,k\}$.

(3) $\delta(p_{h_2},\sigma_{h_5},\gamma_{h_1}) = (p_{h_4},\gamma_{h_3},d_1,-1)$ (moving to the left) if $e<k$ is odd and $e'=\min\{e+1,k\}$.

(4) $\delta(p_{h_2},\sigma_{h_5},\gamma_{h_1}) = (p_{h_4},\gamma_{h_3},d_1,+1)$ (right turn) if $e<k$ is odd and $e'=\min\{e+2,k\}$.

(5) $\delta(p_{h_2},\sigma_{h_5},\triangleright) = (p_{h_4},\triangleright,d_1,+1)$.

(6) $\delta(p_{h_2},\sigma_{h_5},\gamma_{h_1}) = (p_{h_4},B,0,d_2)$ if $d_2\neq0$ and either $e=k$ or $e=k-1$.

(7) $\delta(p_{h_2},\sigma_{h_5},\gamma_{h_1}) = (p_{h_4},\gamma_{h_1},d_1,0)$ (storage-stationary move).

\s

We further set $m=(m_1,m_2)$ and  $\theta =m_1m_2$ ($=|Q_M||\Gamma_M|$).
Moreover, we fix two bijections mapping $[m_1]\times [m_2]$ to $[\theta]$ and mapping $[m_1]\times [m_2]\times [m_1]\times [m_2]$ to $[\theta^2]$.
For the desired universal simulator, we define a distinguished symbol ``$\#$'' and we define $\tilde{\Sigma}$ and $\tilde{\Gamma}^{(e)}$, for every  $e\in[0,k]_{\integer}$, to satisfy $\tilde{\Sigma} =\{a_0,a_1,0,1,\#\}$,  $\tilde{\Gamma}^{(0)}=\{\Box\}$, $\tilde{\Gamma}^{(1)}=\{a_0,a_1,0,1,\#\}$, and $\tilde{\Gamma}^{(k)}=\{\triangleright,B\}$.
Finally, we define $\tilde{\Gamma}$ to be $\bigcup_{e\in[0,k]_{\integer}}\tilde{\Gamma}^{(e)}$.

We encode an input $x=x_1x_2\cdots x_n$ of length $n$ symbol by symbol as follows. Notice that $x_0=\triangleright$ and $x_{n+1}=\triangleleft$.
Let $X_{l} = x_l\#$ for any position $l\in[0,n+1]_{\integer}$, where $x_l$ is viewed here as a string over $\tilde{\Sigma}$.
These strings $X_l$ will be later combined with an encoding of transitions of $M$.

We then encode the transitions of the forms (1)--(7) in the following fashion. Since all information stored in a storage tape must be modified as a tape head traverses through the tape, we need to prepare multiple copies of all the transition rules.

Let $h=(h_{1},h_{2},h_3,h_4)$, $h'=(h_2,h_4)$, and $h''=(h_1,h_2,h_4)$, where $h_1,h_3\in[m_2]$ and $h_2,h_4\in[m_1]$. Moreover, let $h_5\in[c]$ and $d_1\in\{0,+1\}$.
By treating $h$ as an element of $[m_1]\times[m_2]\times[m_1]\times[m_2]$, we identify it with a number in $[\theta^2]$. Similarly, $h'$ is identified with a number in $[m_1^2]$.
Abusing the notations, we further assume that $h\in[\theta^2]$ and $h'\in[m_1^2]$.
We encode the transitions (1)--(7) into six types of strings $\tilde{u}_{h,e}^{(+)}$, $\tilde{u}_{h,e}^{(-)}$, $\tilde{u}_{h,e}^{(lt)}$,  $\tilde{u}_{h,e}^{(rt)}$, $\tilde{u}_{h,e}^{(0)}$, and $\tilde{u}_{h,e}^{(d)}$, which are defined below, where the superscripts ``$lt$'' and ``$rt$'' respectively refer to ``left turn'' and ``right turn.''
Those strings are later referred to as \emph{encoded transitions}.
Hereafter, for convenience, $p_n$, $\sigma_n$, and $\gamma_n$ are assumed to be already translated.
The items (1$'$)--(7$'$) below directly correspond to the transitions (1)--(7). For readability, nonetheless, we omit all the supplemental conditions stated in (1)--(7).
We set $\tilde{d}= 100,000,001$ if $d=-1,0,+1$, respectively.

\s

(1$'$) $\tilde{u}^{(+)}_{h,0} \equiv \sigma_{h_5}\# \gamma_{h_{1}}\# p_{h_2}\# 001 \#      (\gamma_{h_3}\#)^{\theta}\# p_{h_4}\# 1^{d_1+1}0^{1-d_1}\#$.

(2$'$) $\tilde{u}^{(lt)}_{h,0} \equiv \sigma_{h_5}\# \gamma_{h_{1}} \# p_{h_2} \# 100 \# (\gamma_{h_3}\#)^{\theta} \# p_{h_4} \# 1^{d_1+1}0^{1-d_1}\#$.

(3$'$) $\tilde{u}^{(-)}_{h,1} \equiv \sigma_{h_5}\# \gamma_{h_{1}} \# p_{h_2} \# 100 \#
(\gamma_{h_3}\#)^{\theta}\# p_{h_4} \# 1^{d_1+1}0^{1-d_1}\#$.

(4$'$) $\tilde{u}^{(rt)}_{h,1} \equiv \sigma_{h_5}\# \gamma_{h_{1}} \# p_{h_2} \# 001 \# (\gamma_{h_3}\#)^{\theta} \#  p_{h_4} \# 1^{d_1+1}0^{1-d_1}\#$.

(5$'$) $\tilde{u}^{(+)}_{h',2} \equiv \sigma_{h_5}\# \:{\triangleright} \:\# p_{h_2} \# 001 \# \:
{\triangleright} \: \# p_{h_4} \# 1^{d_1+1}0^{1-d_1} \#$.

(6$'$) $\tilde{u}^{(d_2)}_{h',2} \equiv \sigma_{h_5} \#
\gamma_{h_1} \# p_{h_2} \# \tilde{d_2} \# B^{\alpha} \#
p_{h_4} \# 10 \#$.

(7$'$) $\tilde{u}^{(0)}_{h',3} \equiv \sigma_{h_5}\# \gamma_{h_1} \# p_{h_2} \#  000 \# (\gamma_{h_1}\#)^{\theta}
\# p_{h_4} \# 1^{d_1+1}0^{1-d_1} \#$.

\s

Let us define $P^{(+)}$ as follows.
For any $j\in\{h,h',h''\}$ and any $e\in\{0,1,2,3\}$, the segment $\sigma_{h_5}\# \gamma_{h_1} \# p_{h_2} \#$ (as well as $\sigma_{h_5}\# {\triangleright}\, \# p_{h_2} \#$) of $\tilde{u}^{(+)}_{j,e}$ is called a \emph{receptor} of  $\tilde{u}^{(+)}_{j,e}$ and the segment $001 \# (\gamma_{h_3}\#)^{\theta}\# p_{h_4} \# 1^{d_1+1}0^{1-d_1}\#$ of $\tilde{u}^{(+)}_{j,e}$ is called a \emph{residue} of  $\tilde{u}^{(+)}_{j,e}$.
We set $\tilde{u}^{(+)}_{0}$ to be $\tilde{u}^{(+)}_{1,0} \# \tilde{u}^{(+)}_{2,0} \# \cdots \tilde{u}^{(+)}_{\theta^2,0}\#$ and  $\tilde{u}^{(+)}_2$ to be $\tilde{u}^{(+)}_{1,2} \# \tilde{u}^{(+)}_{2,2} \# \cdots \tilde{u}^{(+)}_{m_1^2,2}\#$.
Collectively, we set $P^{(+)}$ to be $\tilde{u}^{(+)}_0 \# \tilde{u}^{(+)}_2 \#$. The other three strings $P^{(lt)}$, $P^{(-)}$, and $P^{(rt)}$ are defined similarly. Combining those four strings, we define $P$ to be $P^{(+)} \# P^{(lt)}\# P^{(-)}\# P^{(rt)}\#$.

Let $S$ denote the string $p_1\# p_2\# \cdots p_{m_1}\#$ and let $G$ be $\gamma_1\# \gamma_2\# \cdots \gamma_{m_2}\#$. Furthermore, we set $X$ to be $X_0 X_1 \cdots X_{n+1}\#$.
Finally, the encoding $\pair{M,x}$ of $M$ and $x$ is defined as the string   $X\# S \# G \# P \#$.
Note that the length of $\pair{M,x}$ is bounded by $O((|Q_M||\Gamma_M||\check{\Sigma}_M||x|)^4)$. It is not difficult to see by the definition that the encoding $\pair{M,x}$ is uniquely determined from $M$ and $x$.

\subsection{Proofs of Theorem \ref{hardness-prop} and Lemma  \ref{universal-simulator}}\label{sec:proofs-main}

Our goal of this subsection is to provide the proofs of Theorem \ref{hardness-prop} and Lemma  \ref{universal-simulator}. For this purpose, we first present, given an arbitrary encoded string $\pair{M,x}$, how to simulate $M$ on $x$ precisely using a depth-$k$ storage tape of a universal simulator.

We first introduce necessary terminology.
A \emph{storage-tape content} refers to a sequence of symbols written on the storage tape from the start cell to the leftmost tape cell containing $\Box$, say, cell $t$.
For our convenience, we call a string $\xi$ of the form $0^{b}\sigma_0\sigma_1\cdots \sigma_{l-1}\hat{\sigma}_l
\sigma_{l+1}\cdots \sigma_{t}$ a \emph{storage configuration} of $M$ if $M$ is scanning cell $b$ of the input tape, the storage-tape head  of $M$ is scanning cell $l$, $\sigma_0$ equals $\triangleright$,   $\hat{\sigma}_{l}$ is of the form $(1^h,\sigma_l)$, and the string $\sigma_0\cdots \sigma_{l-1}\sigma_{l}\sigma_{l+1} \cdots \sigma_{t}$ is a storage-tape content of $M$.

Here is a key idea behind the following proof of Lemma \ref{universal-simulator}. A storage tape is used to keep track of the information on the current surface configuration. To simulate the next move, we go through all encoded transitions on an input tape by way of removing any ``cancelling pair'' to locate the target transition to be taken at the next move. If we successfully delete the old surface configuration, we write a new surface configuration into an ``appropriately chosen'' new section of the storage tape. This is needed because our storage tape is of depth $k$, and thus we cannot place a new symbol at an arbitrary location of the storage tape.


\begin{figure}[t]
\centering
\includegraphics*[height=4.2cm]{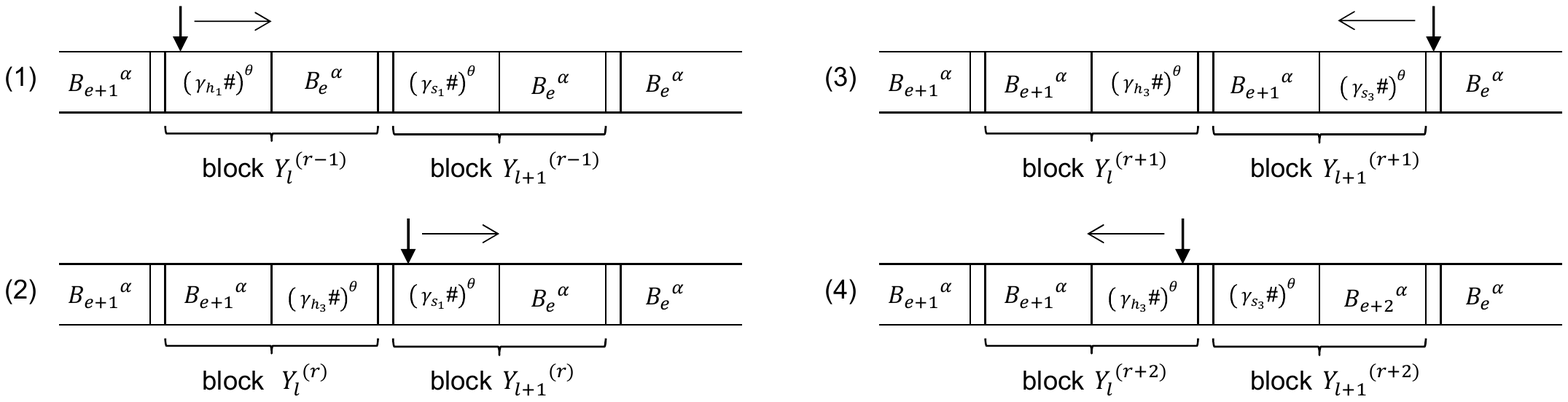}
\caption{Blocks on the storage tape and the tape head positions at steps $r-1$, $r$, and $r+1$, where $h_1,h_3,s_1,s_3\in[m_2]$ and $e+2<k$.
The tape head moves to the right in (1)--(2) and to the left in (3)--(4) by making a left turn in (3).}\label{fig:simulation-sda}
\end{figure}


\begin{proofof}{Lemma \ref{universal-simulator}}\label{sec:case-equal-2}
Let $k\geq2$. We intend to construct a depth-immune $k$-sda$_2$($4$) universal simulator working for all depth-susceptible $k$-sda's. Let $M = (Q_M,\Sigma_{M},\{\Gamma^{(e)}_M\}_{e\in[0,k]_{\integer}}, \{\triangleright, \triangleleft \}, \delta, p_1, Q_{M,acc}, Q_{M,rej})$ denote any depth-susceptible $k$-sda and let $x=x_1x_2\cdots x_n$ be any input of length $n$ given to $M$. Additionally, we set $x_0=\:\triangleright$ and $x_{n+1}=\:\triangleleft$.
Let us recall the notations, $\tilde{\Sigma}$, $\tilde{\Gamma}^{(e)}$, $\tilde{\Gamma}$, $c$, $m_1$, $m_2$, and $\theta$, defined in Section \ref{sec:membership}, associated with $M$. Let $\alpha=(\ceilings{\log{m_2}}+1)\theta$.
Assume that $\pair{M,x}$ has the form
$X\# S \# G \# P \#$ with $X$, $S$, $G$, and $P$ given in Section \ref{sec:membership}.

Into our storage alphabet, we include $k-1$ different symbols $B_1,B_2,\ldots,B_{k-1}$ satisfying $dv(B_i)=i$ for any $i\in[k-1]$ to express ``intermediate'' blank and we reserve $B$ for the frozen blank symbol.
For the ease of the description of later simulation, whenever we read $B_i$ for each $i\in[k-1]$, we automatically replace it with $B_{i+1}$, where $B_k$ is understood as $B$.

To represent the content of the depth-$k$ storage tape of $M$, we use a ``block'' of two ``sections''. Each section is a string of the form  $(\sigma\#)^{\theta}$ or $B_i^{\alpha}$ with $\sigma\in\tilde{\Gamma}$ and $i\in[k]$, and an entire block has the form $(\sigma\#)^{\theta}B_{2i+1}^{\alpha}$, $B_{2j+2}^{\alpha}(\sigma\#)^{\theta}$, or $B^{2\alpha}$ with $i<\frac{k-1}{2}$ and $j<\frac{k-2}{2}$.
Notice that each block has length exactly $2\alpha$ since $|\sigma\#|=\ceilings{\log{m_2}}+1$.
The string  $(\sigma\#)^{\theta}$ is called the \emph{core} of this block if any, and $\sigma$ is called the \emph{core value}.
The \emph{depth} of a block is the depth of its core.
For simplicity, we say that a block is \emph{currently accessed} if the storage-tape head is stationed in this block.
Similarly, $\sigma_h$, $p_h$, and $\gamma_h$ are said to be  \emph{currently accessed} if input-tape heads are reading them.
For technical reason, we further allow a block to
be $\Box^{2\alpha}$ and $\triangleright$. The tape content can be expressed as a series $Y_0 B Y_1 B \cdots B Y_{t}$ of blocks separated by $B$ with $Y_0 =  \: {\triangleright}$, where cell $t$ is the leftmost initially-blank cell. We denote this series by $Y$.
This is in part illustrated in Figure~\ref{fig:simulation-sda}.
We next decode this sequence $Y$ into a storage configuration $y_0y_1\cdots y_{t}$ as follows.
If $Y_l$ is $\triangleright$ and $\Box^{2\alpha}$, then we simply set $y_l=\, \triangleright$ and $y_l=\Box$, respectively.
If $Y_l$ is frozen blank (i.e., $Y_l=B^{2\alpha}$), then we set $y_l=B$.
Assume that $Y_l\notin\{\Box,\triangleright\}$ and the core value of $Y_l$ is $\sigma$. If we currently access $Y_l$ and $p_h$, then we set $y_l=(1^h,\sigma)$, which indicates that $M$ is in inner state $p_h$ and $\sigma$ appears in the cell of $M$'s storage-tape head location $l$.  In contrast, if $Y_l$ equals  $(\sigma\#)^{\theta}$ but $Y_l$ is not currently accessed, then we set $y_l=\sigma$.

In what follows, we want to demonstrate by induction on the number $r$ of steps that $\pair{M,x}$ correctly ``generates'' the $r$th storage  configuration of $M$ on $x$.
To express $Y$ at step $r$, we use the notation of $Y^{(r)}$ for clarity.  Similarly, we write  $y^{(r)}_l$ for $y_l$ at step $r$.
Initially, at time $t=1$, we produce $Y^{(0)} = Y_0^{(0)} B Y_1^{(0)}$ on the storage tape, where $Y^{(0)}_0 = \: {\triangleright}$ and $Y^{(0)}_1 =\Box^{2\alpha}$.
Our goal is to verify that the following statement (*) is true.

\begin{quote}
(*) For each index $r\geq1$, if $M$ scans $x_s$ and $\gamma$ in inner state $p$ and makes a transition at time $r$, then we read $X_s$ in $X$, $p$ in $S$, and $\gamma$ in $G$, find the corresponding encoded transition, simulate the $r$th step of $M$ by executing this encoded transition, and produce $Y^{(r)} = Y^{(r)}_0 B Y^{(r)}_1 B \cdots Y^{(r)}_{t}$ on the depth-$k$ storage tape, where  cell $t$ is the leftmost initially-blank cell. The sequence $y = y_0^{(r)}y_1^{(r)}\cdots y_t^{(r)}$ decoded from $Y^{(r)}$ then matches the storage configuration of $M$ on $x$ at step $r$.
\end{quote}

Meanwhile, we assume that (*) is true. After processing  $X_{n+1}\#$, since the storage configuration of $M$ must contain a halting state, we check whether this halting state is indeed $p_2$, which indicates that $M$ is in an accepting state. If so, we accept the input $\pair{M,x}$; otherwise, we reject it. Hereafter, we focus on proving (*) by induction on $r\in\nat^{+}$.

We wish to construct the desired $k$-sda$_2$($4$) universal simulator, say, $U$. For convenience, we call four input-tape heads of $U$ by heads 1--4,  where head 1 reads encoded transitions in the list $P$, head 2 ``points'' the encoding of the current inner state $p$ in the list $S$, head 3 accesses $X_s$ to retrieve $x_s$, and head 4 points the encoding of the current storage symbol $\gamma$ in $G$. During the following simulation, we remember the depth of the neighboring block from which the storage-tape head comes to the currently accessed block.

For readability, we purposely allow $U$ to change any symbol not in $\Gamma^{(k)}$ into $B$ in a single step.

\s

(Step $r=1$) Recall that $X_0 =\, {\triangleright}\,\#$.  At step $1$, since $M$ applies a transition of the form $\delta(p_1,\triangleright,\triangleright) = (p_{h_4},\triangleright,d_1,+1)$ given in the item (5) in Section \ref{sec:membership}, its storage-tape head moves from cell $0$ to cell $1$, and thus the storage configuration of $M$ on $x$ at step $1$ is $\triangleright\, (1^{h_4},\Box)$.
In scanning $\triangleright$ on the storage tape and $p_1$ in the list $S$, we search $P$ for
$\tilde{u}^{(+)}_{h',2}$ ($= \:{\triangleright} \# \:{\triangleright} \# p_{1} \# 001 \# \triangleright\! \# p_{h_4} \# 1^{d_1+1}0^{1-d_1} \#$)
with $h'=(1,h_4)$.
We move the storage-tape head to the right to access the first encountered $\Box$.
From the current tape content, it follows that  $Y^{(1)}_{0} = \triangleright$, $Y^{(1)}_1 = \Box^{2\alpha}$, and $Y^{(1)}_l = Y^{(0)}_{l}$ for any other indices $l$.
Thus, we obtain $y^{(1)}_0=\triangleright$ and $y^{(1)}_1= (1^{h_4},\Box)$.  Clearly, the resulting sequence $y^{(1)} =  y^{(1)}_0y^{(1)}_1$ matches the storage configuration of $M$ on $x$ at step $1$.

\s

(Step $r\geq2$) Let us consider the $r$th step of $M$.
Assume by induction  hypothesis that (*) is true for $Y^{(r-1)}$
at step $r-1$.
Take an arbitrary number $l\in[0,n+1]_{\integer}$
and assume that $Y^{(r-1)}_l$ is
of the form  $(\gamma_{h_1}\#)^{\theta}B_{2i+1}^{\alpha}$, $B_{2j+2}^{\alpha}(\gamma_{h_1}\#)^{\theta}$, or $B^{2\alpha}$ with $i<\frac{k-1}{2}$ and $j<\frac{k-2}{2}$.
If $Y_l^{(r-1)}$ is not currently accessed by $U$, then $Y_l^{(r)}$ equals $Y_l^{(r-1)}$.
In the following argument, we therefore assume otherwise.
Assume also that head 2 resides over the string $p_{h_2}$ in $S$, indicating that $M$ is in inner state $p_{h_2}$. Assume further that head 3 is reading $X_s$ that contains $x_s$, where $s\in[0,n+1]_{\integer}$. This means that  $M$'s tape head rests on cell $s$, which holds the input symbol $x_s$. At step $r$, $M$ makes one of the transitions (1)--(7) given in Section \ref{sec:membership}.
In what follows, we separately deal with those transitions.


(1)
Assume that $M$ applies a transition of the form
$\delta(p_{h_2},\sigma_{h_5},\gamma_{h_1}) = (p_{h_4},\gamma_{h_3},d_1,+1)$ with $\sigma_{h_5} = x_s$, $\gamma_{h_1}\in \Gamma^{(e)}$,    and $\gamma_{h_3}\in \Gamma^{(\min\{e+1,k\})}$ for an even number $e<k$.
In the currently accessed block, say, $Y^{(r-1)}_l$,
the storage-tape head is scanning $B$, $\Box$, or $\triangleright$.
We then discuss two cases (i) and (ii) separately.

(i) In the first case of $\min\{e+1,k\}<k$, we need to cope with the following subcases (a) and (b).

(a) Assuming that $\gamma_{h_1}\notin \{\triangleright,\Box\}$, we first find the location of the string $\gamma_{h_1}$ in $G$.
Since the currently accessed block $Y^{(r-1)}_l$ is of the form $(\gamma_{h_1}\#)^{\theta} B_e^{\alpha}$, we
move the storage-tape head rightward
through the first section $(\gamma_{h_1}\#)^{\theta}$.
We move head 4 through $G$ by picking substrings $\gamma\#$ one by one.
As we read one string of the form $\gamma_{h_1}\#$ in the section, we check whether $\gamma$ matches $\gamma_{h_1}$ by simultaneously comparing between $\gamma\#$ and $\gamma_{h_1}\#$ symbol by symbol. After each comparison, we mark $\gamma_{h_1}\#$ as ``being read'' by changing it into $B_{e+1}^{\ceilings{\log{c}}+1}$.
Even in the case of discovering any mismatch in the middle of  $\gamma_{h_1}\#$, we continue modifying the content of tape cells until we completely change this entire string $\gamma_{h_1}\#$ to  $B_{e+1}^{\ceilings{\log{c}}+1}$, and we then pick another substring and repeat the above procedure. This search is possible because we have an enough number of copies of the string $\gamma_{h_1}\#$ in $Y^{(r-1)}_l$, and thus we eventually find $\gamma_{h_1}$ in $G$.

Since heads 2--4 correctly point the locations of $x_s$, $p_{h_2}$, and $\gamma_{h_1}$, we can look for the correct encoded transition $\tilde{u}^{(+)}_{h,0}$ of the form $x_s\# \gamma_{h_{1}}\# p_{h_2} \# 001 \#      (\gamma_{h_3}\#)^{\theta}\# p_{h_4} \# 1^{d_1+1}0^{1-d_1}\#$ in $P$ by conducting the following search for
the receptor $x_s\# \gamma_{h_1} \# p_{h_2}\#$ of $\tilde{u}^{(+)}_{h,0}$.
In search of $\tilde{u}^{(+)}_{h,0}$, we first move head 1 to the first symbol of $P$
and, by moving head 1 through $P$, we pick encoded transitions, say, $\tilde{u}$ one by one.
We then examine that $(x_s,p_{h_2},\gamma_{h_1})$ truly appears in the receptor of $\tilde{u}$. If not, we continue picking another transition. This recursive process does not require any movement of $U$'s  storage-tape head.

Once we find $\tilde{u}^{(+)}_{h,0}$, by scanning its residue $001\# (\gamma_{h_3}\#)^{\theta}\# p_{h_4}\# 1^{d_1+1}0^{1-d_1}\#$, we overwrite the second section $B_e^{\alpha}$ of $Y_l^{(r-1)}$ by $(\gamma_{h_3}\#)^{\theta}$ symbol by symbol until reaching the frozen blank symbol $B$, which acts as a \emph{block separator}.
We then obtain the newly modified block $Y^{(r)}_{l}$, which has the form
$B_{e+1}^{\alpha} (\gamma_{h_3}\#)^{\theta}$.
Note that all the other blocks are intact.
We further move the storage-tape head rightward to the first non-$B$ symbol.
See Figure~\ref{fig:simulation-sda}(1)--(2) for an illustration
of storage-tape head moves.
Finally, we update the locations of heads 2 and 3 by moving head 2 to $p_{h_4}$, head 3 to $X_{s+d_1}$,
and head 4 to the first symbol of $P$.

From all the blocks at time $r$, we obtain $y_l^{(r)} = \gamma_{h_2}$,  $y^{(r)}_{l+1} = (1^{h_4},\xi)$ if $y^{(r-1)}_{l+1}=\xi$, and $y^{(r)}_j = y^{(r-1)}_j$ for all other indices $j$. By the definition, the resulting string $y^{(r)} = y_1^{(r)}y_2^{(r)}\cdots y_t^{(r)}$ matches the storage configuration of $M$ on $x$ at step $r$.

(b) Consider the next case of $\gamma_{h_1}=\Box$. Since $\gamma_{h_1}=\Box$,
we can search for $\gamma_{h_1}$ in $G$ by making only  storage-stationary moves. We then conduct the aforementioned receptor search to locate $\tilde{u}^{(+)}_{h,0}$ in $P$.
Since the currently accessed block $Y_l^{(r-1)}$ is $\Box^{2\alpha}$, we write  $B_1^{\alpha} (\gamma_{h_3}\#)^{\theta}$ into this block and move the tape head to the first encountered $\Box$.
We also move head 2 to $p_{h_4}$ and head 3 to $X_{s+d_1}$.

(ii) Consider the second case of $\min\{e+1,k\}=k$. Notice that $\gamma_{h_1}\neq \Box$. In this case, we write $B^{\alpha}$ instead of $(\gamma_{h_3}\#)^{\theta}$ in (i)(a), and $Y^{(r)}_l$ thus becomes $B^{2\alpha}$.

(2) Consider the case where $M$'s transition is of the form  $\delta(p_{h_2},\sigma_{h_5},\gamma_{h_1}) = (p_{h_4},\gamma_{h_3},d_1,-1)$ with $\sigma_{h_5}= x_s$, $\gamma_{h_1}\in \Gamma^{(e)}$, and $\gamma_{h_3}\in \Gamma^{(\min\{e+2,k\})}$ for an even number $e<k$. Notice that $\tilde{x}_s\neq \triangleright$.

(i)
Assume that  $\min\{e+2,k\}<k$ with $\sigma_{h_5}= x_s$.
Notice that the currently accessed block $Y^{(r-1)}_l$ is of the form $(\gamma_{h_1}\#)^{\theta} B_e^{\alpha}$. Firstly, we try to find $\gamma_{h_1}$ in $G$ by simultaneously moving head 4 and the storage-tape head from left to right over the first section $(\gamma_{h_1}\#)^{\theta}$ of  $Y^{(r-1)}_l$. After this search, this section $(\gamma_{h_1}\#)^{\theta}$  becomes $B_{e+1}^{\alpha}$.
We then search $P$ for
$\tilde{u}^{(lt)}_{h,0}$ ($\equiv  \sigma_{h_5}\# \gamma_{h_{1}} \# p_{h_2} \# 100 \# (\gamma_{h_3}\#)^{\theta} \# p_{h_4} \# 1^{d_1+1}0^{1-d_1}\#$) by the aforementioned receptor search. Once we find $\tilde{u}^{(lt)}_{h,0}$, we locate the residue $100 \# (\gamma_{h_3}\#)^{\theta} \# p_{h_4} \# 1^{d_1+1}0^{1-d_1}\#$.

Next, we overwrite the second section of $Y_l^{(r-1)}$ by $\hat{B}^{\alpha}$, where $\hat{B}$ is a fresh storage symbol.
When we reach a block separator, we make a left turn and then change $\hat{B}^{\alpha}$ to $B_{e+2}^{\alpha}$ from right to left. We continue overwriting the next section $B_{e+1}^{\alpha}$ by $(\gamma_{h_3}\#)^{\theta}$ until reaching a block separator. In the end, we step to the left.
Since $Y_{l-1}^{(r)} = Y_{l-1}^{(r-1)}$ and $Y^{(r)}_l =  (\gamma_{h_3}\#)^{\theta} B_{e+2}^{\alpha}$, we obtain $y^{(r)}_{l} = \gamma_{h_3}$ and $y^{(r)}_{l-1} = (1^{h_4},\xi)$ if $y^{(r-1)}_{l-1}=\xi$. All the other $y^{(r-1)}_j$ are updated to be $y^{(r)}_j$. The resulting sequence $y^{(r)}$ clearly matches the storage configuration of $M$ on $x$ at step $r$. Figure~\ref{fig:simulation-sda}(3)--(4) illustrate this process.

(ii) In the case of $\min\{e+2,k\}=k$, since $\gamma_{h_3}$ is $B$, we  write $B^{\alpha}$ instead of $(\gamma_{h_3}\#)^{\theta}$ in (i). As a result,   $Y^{(r)}_{l}$ becomes $B^{2\alpha}$.

(3) This case is symmetric to (1). Since $M$'s storage-tape head earlier came from the right, $U$'s storage-tape head starts at the rightmost symbol of the second section of $Y_l^{(r-1)}$. During the search for $\gamma_{h_1}$ in $G$, we need to move the storage-tape head leftward. After finding $\tilde{u}^{(-)}_{h,0}$ by the receptor search, we overwrite the first section of $Y_l^{(r-1)}$ by $(\gamma_{h_3}\#)^{\theta}$ from right to left.

(4) This case is handled symmetrically to (2) in a way similar to (3).

(5) In the case of $\triangleright$, we look for $\tilde{u}^{(+)}_{h',2}$ by conducting the aforementioned search for its receptor $x_s\#\,  \triangleright\,\#p_{h_2}\#$. Since $dv(\triangleright)=k$, we do not need to overwrite $\triangleright$ on the storage tape.

(6) Consider the case where $M$'s transition is of the form $\delta(p_{h_2},\sigma_{h_5},\gamma_{h_1})= (p_{h_4},B,0,d_2)$ with $e\in\{k-1,k\}$ and  $d_2\in\{-1,+1\}$. We discuss two subcases of $e=k-1$ and $e=k$ separately.

(i) Assume that $e=k-1$. In the case where $e$ is even, $M$'s storage-tape head earlier came from the left. Note that the other case where $e$ is odd is symmetric.
By scanning the first section of $Y^{(r-1)}_l$, we look for $\gamma_{h_1}\#$ in $G$ as
we overwrite this section by $B^{\alpha}$ until we encounter the second section $B_{k-1}^{\alpha}$.
Once we find $\gamma_{h_1}\#$ in $G$, we conduct the receptor search for $\tilde{u}^{(d_2)}_{h',2}$ and then retrieve its residue  to discover the value of $d_2$.

Since we remember the depth of the left adjacent block, let $g$ denote this depth.
Consider the first case of $g<k$.
We change the second section $B_{k-1}^{\alpha}$ by $B^{\alpha}$ and reach a  block separator.
When $d_2=-1$, we make a left turn and move leftward to the first non-$B$  symbol. In the case of $g=k$, on the contrary,
if $d_2=+1$, then we continue overwriting the section $B_{k-1}^{\alpha}$ by $B^{\alpha}$ until reaching a block separator. We then step to the next block.
Finally, when $d_2=-1$, based on the fact that the left adjacent cell is frozen blank, without moving the storage-tape head, we can determine whether, at the next step $r+1$, $M$ makes a right turn or moves further to the left.

(a) Assume that $M$ make a right turn at step $r+1$. We first overwrite $B_{k-1}^{\alpha}$ by $B^{\alpha}$ until reaching a block separator. We update the location of head 2 from $p_{h_2}$ to $p_{h_4}$, update the depth of the neighboring block to $k$, and move to the first symbol of the next block.

(b) If $M$ moves further to the left at step $r+1$, then we overwrite $B_{k-1}^{\alpha}$ by $B^{\alpha}$ until reaching a block separator. We then make a left turn, move head 2 to $p_{h_4}$ from $p_{h_2}$, and move the storage-tape head leftward to the first non-$B$ symbol. Since $M$ is depth-susceptible, $M$ does not change its behavior while reading $B$s, and thus we can smoothly pass through all blank blocks without any further information.

(ii) On the contrary, if $e=k$, then $h_2=h_4$ follows. If $g=k$, then $d_2$ must be $+1$. Since $M$ does not change its inner state and head direction, we move rightward until encountering the first non-$B$ symbol.
In the case of $g<k$, it is possible to look for $\tilde{u}^{(d_2)}_{h',2}$ by conducting the receptor search and discover $d_2$. If $d_2=+1$, then we move rightward to the first non-$B$ symbol. In contrast, when $d_2=-1$, we skip the scanning of $Y^{(r-1)}_l$ and make a left turn to the first non-$B$ symbol.

(7) Assume that $M$ makes a storage-stationary move with $\gamma_{h_1}\in\Gamma^{(e)}$. We first look for
$\gamma_{h_1}$ in $G$ by reading $(\gamma_{h_1}\#)^{\theta}$ in $Y^{(r-1)}_l$ either from left to right or from right to left, depending on the value of $e$. After reading it, it is overwritten by $B_{e+1}^{\alpha}$.
We then search for the receptor of the form $x_s\# \gamma_{h_1} \# p_{h_2}\#$ to locate $\tilde{u}^{(0)}_{h',3}$ in $P$.
After finding $\tilde{u}^{(0)}_{h',3}$, we retrieve the residue of the form $000\# (\gamma_{h_1}\#)^{\theta} \# p_{h_4}\# 1^{d_1+1}0^{1-d_1}\#$.
Since $M$'s storage-tape head does not move, we stop $U$'s storage-tape head in the middle of the current block just after finishing the replacement of $(\gamma_{h_1}\#)^{\theta}$ by $B_{e+1}^{\alpha}$.
Note that, if $M$'s input-tape head moves, then $\gamma_{h_1}\notin \Gamma^{(k-1)}\cup \Gamma^{(k)}$ follows because, otherwise, we obtain $d_1=0$, a contradiction. When $M$'s input-tape head does not move, nonetheless, $\gamma_{h_1}\neq B$ follows because, otherwise, $M$ cannot halt.
Note that, as long as $M$ makes storage-stationary moves, we reuse the obtained information on $\gamma_{h_1}$ without further moving $U$'s storage-tape head. When $M$ eventually moves its storage-tape head either to the right or to the left, $M$ must take one of the transitions (1)--(6). We then simulate the transition of $M$ in a way explained in (1)--(6) except for the first process of finding $\gamma_{h_1}$ in $G$ since this process has already been done.

\s

Finally, we note that the constructed universal simulator $U$ is indeed a depth-immune $k$-sda$_{2}$($3$). This completes the proof.
\end{proofof}

To close Section \ref{sec:hardest-language}, we still need to prove Theorem \ref{hardness-prop}.

\begin{proofof}{Theorem \ref{hardness-prop}}
(1) This directly comes from Theorem \ref{auxiliary-characterize} and the depth-immune $k$-sda$_{2}$(4) universal simulator for all depth-susceptible $k$-sda's, guaranteed by Lemma \ref{universal-simulator}.

(2) Let $M=(Q,\Sigma,\{\Gamma^{(e)}\}_{e\in[0,k]_{\integer}}, \{\triangleright, \triangleleft \}, \delta,q_0,Q_{acc},Q_{rej})$ be any depth-susceptible $k$-sda.
We want to show that $L(M)$ is $\dl$-m-reducible to $\mathrm{MEMB}_k$.
Take any input $x$ and consider the encoding $\pair{M,x}$ of $M$ and $x$, where $\pair{M,x}$ has the form
$X\# S\# G\# P\#$ as defined in Section \ref{sec:membership}.
Since $Q$, $\Sigma$, and $\Gamma$ are all fixed finite sets independent of $x$, the construction of $\pair{M,x}$ from $x$ requires polynomial time using only logarithmic space. We define $f_M(x) =\pair{M,x}$ for any $x\in\Sigma^*$.
By the definition of $\mathrm{MEMB}_k$, it follows that $M$ accepts $x$ iff $f_M(x)$ belongs to $\mathrm{MEMB}_k$. Therefore, we conclude that $L(M)$ is $\dl$-m-reducible to $\mathrm{MEMB}_k$ via $f_M$.
Since $\mathrm{LOG}k\mathrm{SDA}$ is closed under $\dl$-m-reductions, $\mathrm{MEMB}_k$ is also $\dl$-m-hard for $\mathrm{LOG}k\mathrm{SDA}$.
\end{proofof}


\section{A Complexity Upper Bound of $k$SDA$_{imm}$}\label{sec:upper-bounds}

We have discussed the mathematical model of depth-susceptible $k$-sda's in Sections \ref{sec:auxiliary}--\ref{sec:hardest-language}. We here turn our interest to another model, depth-immune $k$-sda's, and discuss the computational complexity of $k\mathrm{SDA}_{imm}$. Concerning the complexity of $\dcfl$, Cook \cite{Coo79} earlier  demonstrated that $\dcfl$ is included in $\mathrm{SC}^2$.
In what follows, for any $k\geq2$, we intend to present a non-trivial $\mathrm{SC}^{k}$-upper bound on the complexity of $k\mathrm{SDA}_{imm}$, namely, $k\mathrm{SDA}_{imm} \subseteq \mathrm{SC}^k$. Since $k\mathrm{SDA}_{imm}$ properly includes $\dcfl$, this upper bound of $k\mathrm{SDA}_{imm}$ significantly  extends Cook's result in \cite{Coo79}.

\begin{theorem}\label{upper-bound}
For any integer $k\geq2$, $k\mathrm{SDA}_{imm} \subseteq \mathrm{SC}^k$. Thus, $\omega\mathrm{SDA}_{imm} \subseteq \mathrm{SC}$ follows.
\end{theorem}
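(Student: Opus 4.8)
The plan is to adapt S.\ Cook's surface-configuration method --- the technique behind the inclusion $\dcfl\subseteq\mathrm{SC}^2$ recalled at the start of Section~\ref{sec:upper-bounds} --- to depth-$k$ storage tapes, the extra factor $\log^{k-2}{n}$ in the space bound being forced by the extra depth. The first step is to show that a depth-immune $k$-sda $M$ recognizing a language (and hence halting on every input) runs in polynomial time, and in particular that on inputs of length $n$ the non-blank part of its storage tape never exceeds $\mathrm{poly}(n)$ cells. The key observation is that a \emph{surface configuration} --- the inner state, the one-way input-head position, the storage-head position, and the storage symbols in the scanned cell and its two neighbours --- records only $O(\log{n})$ bits, so at most $\mathrm{poly}(n)$ of them exist. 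Call a step \emph{productive} if it advances the input head or rewrites a not-yet-frozen storage cell; by the depth-$k$ requirement there are at most $n+1+k\ell$ productive steps, where $\ell$ is the storage length, and between two consecutive productive steps $M$ merely performs a finite-state walk over a maximal block of frozen cells, which must be short --- otherwise a surface configuration repeats inside that walk and $M$ loops, contradicting that it halts. Summing these bounds (and noting $\ell$ is bounded by the running time) pins both quantities down to $\mathrm{poly}(n)$; if a sharper bound is already available from the $k$-lda literature recalled in Section~\ref{sec:LOGDCFL-beyond}, I would simply invoke it.

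Next I would set up \emph{realizable traversals}, exploiting a structural fact about depth-$k$ storage tapes: at every instant, for each depth level $d$ the set of cells of depth $\ge d$ is an interval, and these intervals are nested, so the tape content is organised into $O(k)$ nested depth-bands. For an interval $[a,b]$ of storage cells carrying a fixed content and a surface configuration in which the storage head enters the interval across a boundary, the traversal relation records the sequence of surface configurations in which the head later leaves the interval. Unlike the pushdown case --- where the head touches only the top of the stack, so a traversal is a single in/out pair --- the depth-$k$ storage head may re-cross a boundary, so a traversal is a \emph{crossing sequence}; the depth-$k$ requirement is exactly what keeps the part of such a sequence that actually modifies the tape bounded, the remaining crossings being finite-state passages over frozen cells.

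I would then evaluate these traversals by a Cook-style divide-and-conquer: to obtain the traversal of $[a,b]$, split at $c=\floors{(a+b)/2}$ and compose the traversals of $[a,c]$ and $[c+1,b]$ by ping-ponging surface configurations across the boundary at $c$. Polynomial running time (despite $\mathrm{poly}(n)$ branching) comes, as in Cook's argument, from evaluating the recursion lazily, so that the surface configurations arising at a fixed recursion level and a fixed storage level are generated in order and reused rather than recomputed, rather than from materialising any traversal table. The recursion on interval length has depth $O(\log{n})$ by the first step, so it remains to bound the information stored at each recursion frame --- the splitting position, the in-progress crossing data, and, because of the nested depth-bands, a bounded ``trail'' of where the computation currently sits within the higher bands --- by $O(\log^{k-1}{n})$ bits, which yields $O(\log^{k}{n})$ space overall. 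This space accounting, together with the correct handling of the two-way storage head (the ``double visit'' turn rule and the finite-state passages over frozen regions that punctuate each crossing sequence) while keeping the crossing data that short, is where I expect the real difficulty to lie; the base case $k=2$ reduces to Cook's $O(\log^2{n})$-space procedure. Once $k\mathrm{SDA}_{imm}\subseteq\mathrm{SC}^k$ is proved for every $k\ge2$, the union statement follows immediately, since $\omega\mathrm{SDA}_{imm}=\bigcup_{k\ge2}k\mathrm{SDA}_{imm}\subseteq\bigcup_{k\ge2}\mathrm{SC}^k=\mathrm{SC}$.
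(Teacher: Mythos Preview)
Your plan diverges from the paper's in a substantive way: you propose a Cook-style divide-and-conquer on \emph{storage intervals} with crossing sequences, whereas the paper follows von~Braunm\"uhl--Cook--Mehlhorn--Verbeek~\cite{BCMV83} and works with \emph{time sections}. Concretely, the paper groups the computation into $d$-sections (each a union of $2^d$ consecutive sweep segments), keeps only the left/right \emph{representatives} of the relevant sections in a ``contingency list'' of $O(\log n)$ markers, and then nests such lists into a depth-$(k-1)$ \emph{contingency tree}: to reconstruct the symbol under the head you need the left/right cut, to reconstruct \emph{that} marker's local history you need its own contingency list, and so on, with the depth-$k$ requirement guaranteeing the recursion bottoms out after $k-1$ levels. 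This is exactly where the $O(\log^{k-1}n)$ bits per frame come from (Lemma~\ref{marker-number}), and the extra $\log n$ is the recursion depth of Subroutine~$\AAA$.

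Your proposal, by contrast, has a genuine gap precisely at the point you flag as ``where I expect the real difficulty to lie.'' In the depth-\emph{immune} model the storage head may cross a fixed boundary polynomially many times: once the two adjacent cells are frozen, nothing prevents repeated back-and-forth passages that each change the inner state and input-head position, so a crossing sequence at one boundary can have length $\Theta(|Q|\cdot n)$ rather than $O(k)$. Your remark that only the ``modifying'' crossings are bounded by $k$ is correct, but the remaining crossings are not mere finite-state noise that can be summarised in $O(1)$ bits --- each carries a distinct (state, input-position) pair. Lazy evaluation helps with \emph{time}, not with the \emph{space} needed to record where in that long sequence the recursion currently sits. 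The nested depth-bands observation is appealing, but you never show how it compresses the per-frame data to $O(\log^{k-1}n)$ bits; absent that, the argument does not yield the claimed bound. (A smaller issue: your polynomial-time argument is circular --- you bound productive steps by $n+1+k\ell$ and $\ell$ by the number of productive steps, which gives no bound on $\ell$; a separate argument is needed there, and the paper simply assumes the polynomial-time bound.)
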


Since $\mathrm{SC}^k$ is closed under $\dl$-m-reductions, Theorem \ref{upper-bound} instantly yields the following corollary.

\begin{corollary}
For any $k\geq2$, $\mathrm{LOG}k\mathrm{SDA}_{imm} \subseteq \mathrm{SC}^k$.
\end{corollary}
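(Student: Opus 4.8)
The plan is to read the corollary off Theorem~\ref{upper-bound} using the standard fact that $\mathrm{SC}^k$ is closed under $\dl$-m-reductions. Since Theorem~\ref{upper-bound} already supplies the substantial content ($k\mathrm{SDA}_{imm}\subseteq\mathrm{SC}^k$), the only thing left to verify is this closure property together with the routine composition it requires; I would present the proof in exactly that order.

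For the closure, suppose $A\Lmreduces B$ via a reduction $f\in\fl$ and $B\in\mathrm{SC}^k$, say $B$ is decided by a DTM $M_B$ running in time $n^{c}$ and space $c\log^{k} n$ for a constant $c$, and suppose $|f(x)|\le |x|^{d}$ for a constant $d$ and all sufficiently long $x$. The naive ``compute $f(x)$, then run $M_B$ on it'' does not work, because $f(x)$ has polynomial length and cannot be stored within a polylogarithmic space budget; this is the one point that needs care. The remedy is the usual one: run $M_B$ on the \emph{virtual} input $f(x)$. I would keep $M_B$'s work-tape contents and head positions stored explicitly, and, whenever $M_B$ requests (through its index tape) the $i$-th symbol of its input, recompute that symbol by restarting the $\dl$-transducer for $f$ on $x$ and counting emitted output symbols until the $i$-th one appears. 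Here $i$ is held on the index tape, of length $O(\log|f(x)|)=O(\log n)$; one such recomputation costs $|x|^{O(1)}$ time and $O(\log n)$ extra space; $M_B$ itself makes at most $|f(x)|^{c}=|x|^{O(1)}$ steps and uses space $c\log^{k}|f(x)|=c\log^{k}(|x|^{d})=O(\log^{k} n)$; since $k\ge1$ the work-tape term dominates the $O(\log n)$ auxiliary terms. Hence the composite machine decides $A$ in polynomial time and $O(\log^{k} n)$ space, so $A\in\mathrm{SC}^{k}$.

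Putting the pieces together, if $L\in\mathrm{LOG}k\mathrm{SDA}_{imm}$ then by definition $L\Lmreduces B$ for some $B\in k\mathrm{SDA}_{imm}$; Theorem~\ref{upper-bound} gives $B\in\mathrm{SC}^{k}$, and the closure property just established gives $L\in\mathrm{SC}^{k}$. For the corollary itself the only (mild) obstacle is the ``do not materialize $f(x)$'' point above; the genuinely delicate work lives one level down, inside Theorem~\ref{upper-bound}, namely the $\mathrm{SC}^{k}$-simulation of a depth-immune $k$-sda: first argue that such a machine halts in polynomially many steps (each storage cell is rewritten at most $k$ times, which caps the number of storage-head reversals over any region and hence the running time), then simulate it by a Cook-style divide-and-conquer that exploits the fact that a depth-$k$ storage tape decomposes into a $(k{-}1)$-level nested hierarchy of ``frozen'' regions generalizing the two-level push/pop structure of a stack, each level contributing one further logarithmic factor of work space on top of a logarithmic-depth time-halving recursion, for $O(\log^{k} n)$ total at polynomial time. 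For the present corollary that simulation may be taken as a black box.
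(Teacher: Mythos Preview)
Your proof is correct and takes exactly the paper's approach: the paper derives the corollary in a single sentence from Theorem~\ref{upper-bound} by invoking the closure of $\mathrm{SC}^k$ under $\dl$-m-reductions, which you spell out in more detail than the paper itself does. Your closing remarks speculating about the internals of Theorem~\ref{upper-bound} are extraneous to the corollary (as you yourself acknowledge by calling it a black box), but harmless.
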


Another immediate consequence of Theorem \ref{upper-bound} is a complexity upper bound of Hibbard's \emph{deterministic $k$-limited automata} ($k$-lda's, for short) because $k$-lda's (with the blank-skipping property \cite{Yam19}) can be easily simulated by depth-immune $k$-sda's by pretending that all input symbols are written on a storage tape.
Notice that, in the past literature, no upper bound except for $\cfl$ has been shown for Hibbard's language families \cite{Hib67}.
Thus, this is the first time to show non-trivial upper bounds for those families.

\begin{corollary}
For any $k\geq 2$, all languages recognized by Hibbard's $k$-lda's are in $\mathrm{SC}^k$.
\end{corollary}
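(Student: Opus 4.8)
The plan is to derive the corollary directly from Theorem \ref{upper-bound} by showing that every language recognized by a Hibbard $k$-lda lies in $k\mathrm{SDA}_{imm}$; since $k\mathrm{SDA}_{imm}\subseteq\mathrm{SC}^k$, this suffices. First I would invoke the normal form of \cite{Yam19}: we may assume the given $k$-lda $H$ satisfies the blank-skipping property, so that once a tape cell has been visited $k$ times it is rewritten into a blank symbol and $H$ leaves its inner state unchanged while its head passes over such a cell. The task is then to build a depth-immune $k$-sda $M$ with $L(M)=L(H)$ whose two-way depth-$k$ storage tape plays the role of $H$'s single input/work tape, i.e.\ we ``pretend'' that the input of $H$ sits on the storage tape of $M$.

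The one subtlety is that $H$'s tape \emph{already} holds the input, whereas $M$'s storage tape starts blank and $M$ may read its own input tape only once, from left to right. The way around this is to feed input symbols in \emph{lazily}. Observe that, because $H$'s head moves by at most one cell per step, at every moment the set of cells it has ever visited is an initial segment $[0,R]$ of the tape, and a cell is visited for the first time exactly when the head steps right out of the current rightmost-visited cell. Thus $M$ can keep its read-once input head synchronized with $H$'s head: whenever the simulated head of $H$ first enters a fresh cell $i$, $M$ reads the next symbol $x_i$ off its input tape and, in that same move, writes onto storage cell $i$ a level-$1$ symbol encoding ``$x_i$ after $H$'s first visit'' (or a level-$2$ symbol if that first contact is a turn). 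The accounting then works out exactly: $M$'s access taking cell $i$ from $\Gamma^{(0)}$ to $\Gamma^{(1)}$ simulates $H$'s \emph{first} visit, and thereafter $H$'s $j$-th visit is simulated by the $k$-sda access that takes cell $i$ from $\Gamma^{(j-1)}$ to $\Gamma^{(j)}$, a turn costing two depth levels on both sides precisely as prescribed by the depth-$k$ requirement; so $k$ visits of $H$ match $k$ accesses of $M$ with no loss. (A naive ``copy the whole input first, then simulate'' would burn one depth level on the copy pass and only capture $(k-1)$-lda's, which is why the first visit must be fused with the load.) For the rest, $M$ mirrors $H$'s head motions and cell rewrites verbatim; the left endmarker $\triangleright$ is already present and frozen at cell $0$ of the storage tape, and the right endmarker is materialized on first contact and kept ``meaning $\triangleleft$'' through all depth levels since $H$ never rewrites it. When a cell of $H$ becomes blank after its $k$-th visit, the corresponding storage cell reaches $\Gamma^{(k)}=\{\triangleright,B\}$, and by blank-skipping $H$ crosses it without changing state, exactly as a $k$-sda behaves on a frozen-blank cell; since $M$ is built as a \emph{depth-immune} $k$-sda, no susceptibility constraint interferes with simulating $H$ near exhausted cells.

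Having $L(H)=L(M)$ with $M$ a depth-immune $k$-sda, we conclude $L(H)\in k\mathrm{SDA}_{imm}$, and Theorem \ref{upper-bound} gives $L(H)\in\mathrm{SC}^k$, which, together with the closure of $\mathrm{SC}^k$ under $\dl$-m-reductions, also handles the $\dl$-m-closures. I expect the only genuine work to be the bookkeeping: verifying that $M$'s read-once input head stays in step with $H$'s head (via the initial-segment observation) and that the depth/visit counters remain aligned across turns and at the two endmarkers. Everything else is a routine transcription of $H$'s transition rules into those of $M$.
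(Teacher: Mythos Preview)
Your proposal is correct and follows essentially the same route as the paper: invoke the blank-skipping normal form of \cite{Yam19}, simulate the $k$-lda by a depth-immune $k$-sda that uses its storage tape as the $k$-lda's single tape (loading input symbols lazily on first contact so that the read-once input head stays synchronized and no depth level is wasted), and then apply Theorem~\ref{upper-bound}. The paper compresses all of this into the one-line remark that $k$-lda's ``can be easily simulated by depth-immune $k$-sda's by pretending that all input symbols are written on a storage tape''; your write-up simply unpacks that sentence, and the bookkeeping you flag (initial-segment property of visited cells, turn accounting, endmarker handling) is exactly the content of that simulation.
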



To verify Theorem \ref{upper-bound}, we attempt to employ a \emph{divide-and-conquer argument} to simulate the behaviors of each depth-immune $k$-sda in polynomial time using only $O(\log^k{n})$ space. Our simulation procedure is  based on \cite{BCMV83} but expanded significantly to cope with more complex moves of depth-immune $k$-sda's.

\subsection{Markers and Contingency Trees}\label{sec:contingency-tree}

We first need to lay out a basic framework to describe the desired  procedures simulating any depth-immune $k$-sda's.

Let $M=(Q,\Sigma,\{\Gamma^{(e)}\}_{e\in[0,k]_{\integer}}, \{{\triangleright,\triangleleft}\}, \delta, q_0,Q_{acc},Q_{rej})$ denote any depth-immune $k$-sda.
We fix an arbitrary input $x\in\Sigma^*$ and set $n=|x|$. Since $M$ halts in polynomial time, we choose an appropriate polynomial $p$ so that, for any string $x$, $p(|x|)$ upper-bounds the running time of $M$ on input $x$.
To simulate the behavior of $M$ on $x$, we introduce an important notion of ``marker''.
Any \emph{computation} of a depth-immune $k$-sda is characterized by a series of such markers.
A \emph{marker} $C$ is formally a quintuple $(q,l_1,l_2,\sigma,r,t)$ that indicates the following circumstances: at time $t$ with section time $r$ (which will be explained later), $M$ is in inner state $q$, its input-tape head is located at $l_1$, and the storage-tape head is at the $l_2$th cell, which contains symbol $\sigma$. To express the entries of $C$, we use the following specific notations: $state(C)=q$, $in\mbox{-}loc(C)=l_1$, $st\mbox{-}loc(C)=l_2$, $symb(C)=\sigma$,  $sectime(C)=r$, and $time(C)=t$.
Let $\MM_x$ denote the set of all markers of $M$ on $x$.
To emphasize ``$l_2$'', in particular, we often call $C$ an \emph{$l_2$-marker} if $st\mbox{-}loc(C)=l_2$. Similarly, we call $C$ a \emph{$t$-marker} if $time(C)=t$.

Assume that $C_t$ is a market of the form $(q,l_1,l_2,\sigma,r,t)$ at time $t$ and $M$ executes a transition of the from $\delta(q,x_{(l_1)},\sigma)=(p,\tau,d_1,d_2)$ with $d_1\in\{0,+1\}$ and $d_2\in\{-1,0,+1\}$. We then set $next\mbox{-}state(C_t)=p$, $next\mbox{-}loc(C_t)=l_2+d_2$,
and $next\mbox{-}symb(C_t)=\tau$.
To obtain the next marker $C_{t+1}$ (at time $t+1$) from $C_t$, we need to know the current content of cell $l_2+d_2$, say, $\xi$.
To obtain this symbol $\xi$, we first compute
the most recent  $(l_2+d_2)$-marker $C'$ with $time(C')<t$ and then set $\xi= next\mbox{-}symb(C')$.  The quintuple $(p,l_1+d_1,l_2+d_2,\xi,r+|d_2|,t+1)$ thus becomes the desired marker $C_{t+1}$.


\begin{figure}[t]
\centering
\includegraphics*[height=5.3cm]{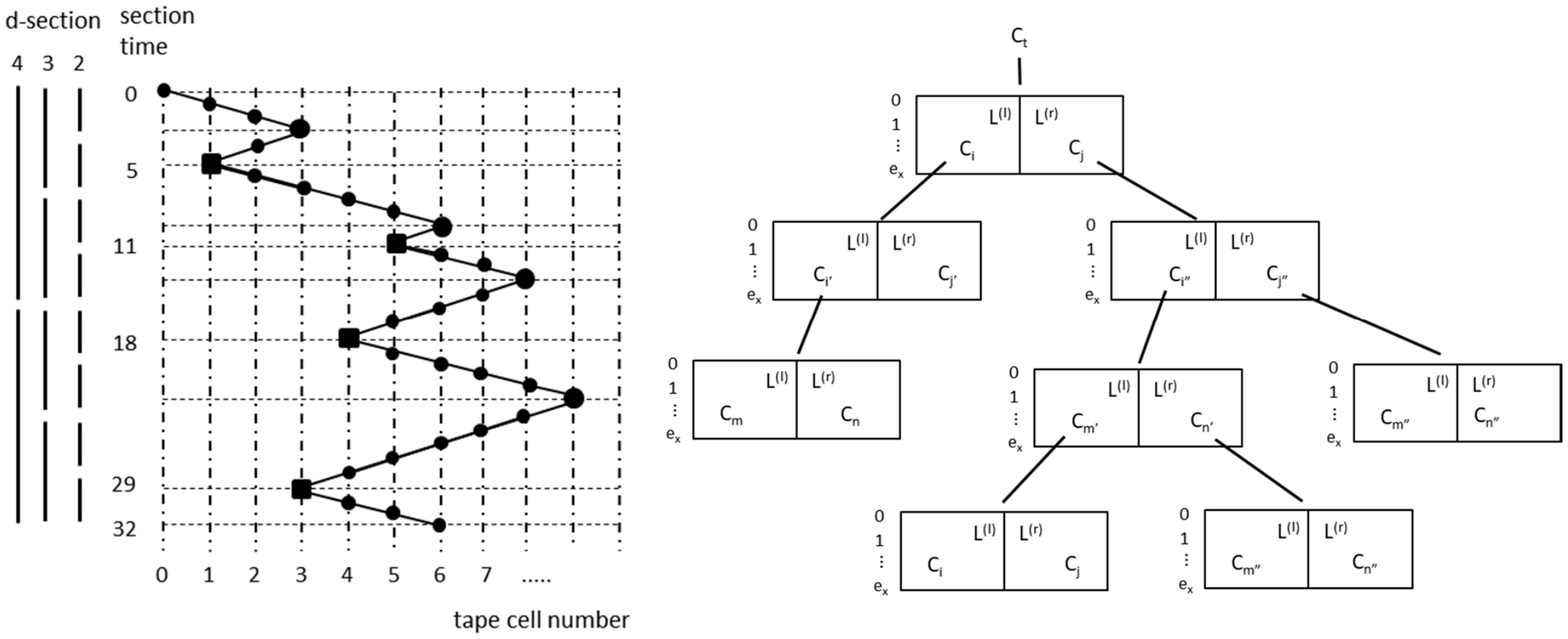}
\caption{[left] A history of consecutive moves of a storage-tape head for $k=3$. The leftmost three vertical dashed lines indicate $d$-sections for $d=3,4,5$. The other vertical dashed lines indicate the tape cell numbers from $0$ to $10$ and the horizontal dotted lines show section time from $0$ to $32$.  All storage-stationary-moves are suppressed into black circles and boxes.
[right] A contingency tree, in which each node (except for the root) is a contingency list linked to a marker in the parent node.}\label{fig:storage-tape-content}
\end{figure}


We then introduce additional critical notions. Let $C=(q,l_1,l_2,\sigma,r,t)$ and $C'=(q',l'_1,l'_2,\sigma',r',t')$ be two arbitrary markers of $M$ on $x$. We say that $C'$ is \emph{left-visible} from $C$ if
(i) $l'_2 < l_2$ and $t'<t$ and
(ii) there is no other marker $\tilde{C}$ satisfying both $st\mbox{-}loc(\tilde{C}) \leq   l'_2$ and $t'+1\leq time(\tilde{C})<t$.
Moreover, $C'$  is the \emph{left-cut} of $C$ if $C'$ is the most recent left-visible marker from $C$; that is, $C'$ is left-visible from $C$ and $t'$ is the largest number satisfying $t' < t$.
If the storage-tape head moves to the left from cell $l_2$ at time $t$, then the left cut of $C_t$ provides the latest content of cell $l_2-1$. This makes it possible to determine the renewed content of cell $l_2-1$ by applying $\delta$ directly. In symmetry, we can define the notions of \emph{right-visibility} and \emph{right-cut}.

A \emph{$0$-section} consists of all markers indicating either (a) a right/left turn (i.e., leftmost/rightmost point) or (b) a non storage-stationary move to the right/left followed by a (possibly empty) series of consecutive storage-stationary moves.
For any $d\in\nat$, a \emph{$(d+1)$-section} is the union of two consecutive $d$-sections. The \emph{section time} of a marker $C$ is the total number of $0$-sections before $C$ (not including the $0$-section containing $C$).

Given a set of markers, a marker $C$ is said to be the \emph{leftmost marker} (resp., the \emph{rightmost marker}) if $st\mbox{-}loc(C)$ is the smallest (resp., the largest) among the markers in the given set.
The leftmost (resp., rightmost) marker in each $d$-section $S$ is called the \emph{left-representative} (resp., \emph{right-representative}) of $S$.
Given a marker $C$, a section $S$ is called \emph{current} for $C$ if $S$ contains $C$, $S$ is \emph{completed} for $C$ if $C$ appears after the end of $S$ according to time, and $S$ is \emph{last-left-good} for $C$ if  $S$ is the latest completed section whose left-representative is left-visible from $C$. In a similar way, we define the notion of ``last-right-goodness''.


Next, we wish to introduce the notion of contingency tree.
Given any string $x$, we conveniently set $e_x=\ceilings{\log|x|}$.
Let $C_t=(q,l_1,l_2,\sigma,r,t)$ denote an arbitrary marker of $M$ on $x$.
To explain a contingency tree, we first define a \emph{contingency list at time $t$}, which consists of the sets described below. Let $d\in[0,e_x]_{\integer}$.

\ms
\n{\bf [Definition of Contingency Lists]}

\renewcommand{\labelitemi}{$\circ$}
\begin{enumerate}\vs{-2}
  \setlength{\topsep}{-2mm}%
  \setlength{\itemsep}{1mm}%
  \setlength{\parskip}{0cm}%

\item[(a)] $L_{last}^{(l)}(d,l_2,t)$, which consists of all markers $C$ satisfying the following requirement: there exist a $d$-section $S$ and a $(d+1)$-section $S'$ such that (i) $C$ is the left-representative of $S$, (iii) $S$ is enclosed in $S'$, and (iv) $S'$ is the last-left-good section for $C_t$.

\item[(b)] $L_{cur}^{(l)}(d,l_2,t)$, which consists of all markers $C$ satisfying the following requirement: there exist a $d$-section $S$ and a $(d+1)$-section $S'$ for which (i) $C$ is the left-representative of $S$, (ii) $S$ is enclosed in $S'$, (iii) $S'$ is current for $C_t$, and (iv) $C$ is left-visible from $C_t$.

\item[(c)] $L_{last}^{(r)}(d,l_2,t)$ and $L_{cur}^{(r)}(d,l_2,t)$, which are defined similarly to (a) and (b) by simply replacing ``left'' with ``right''.
\end{enumerate}


\begin{figure}[t]
\centering
\includegraphics*[height=5.0cm]{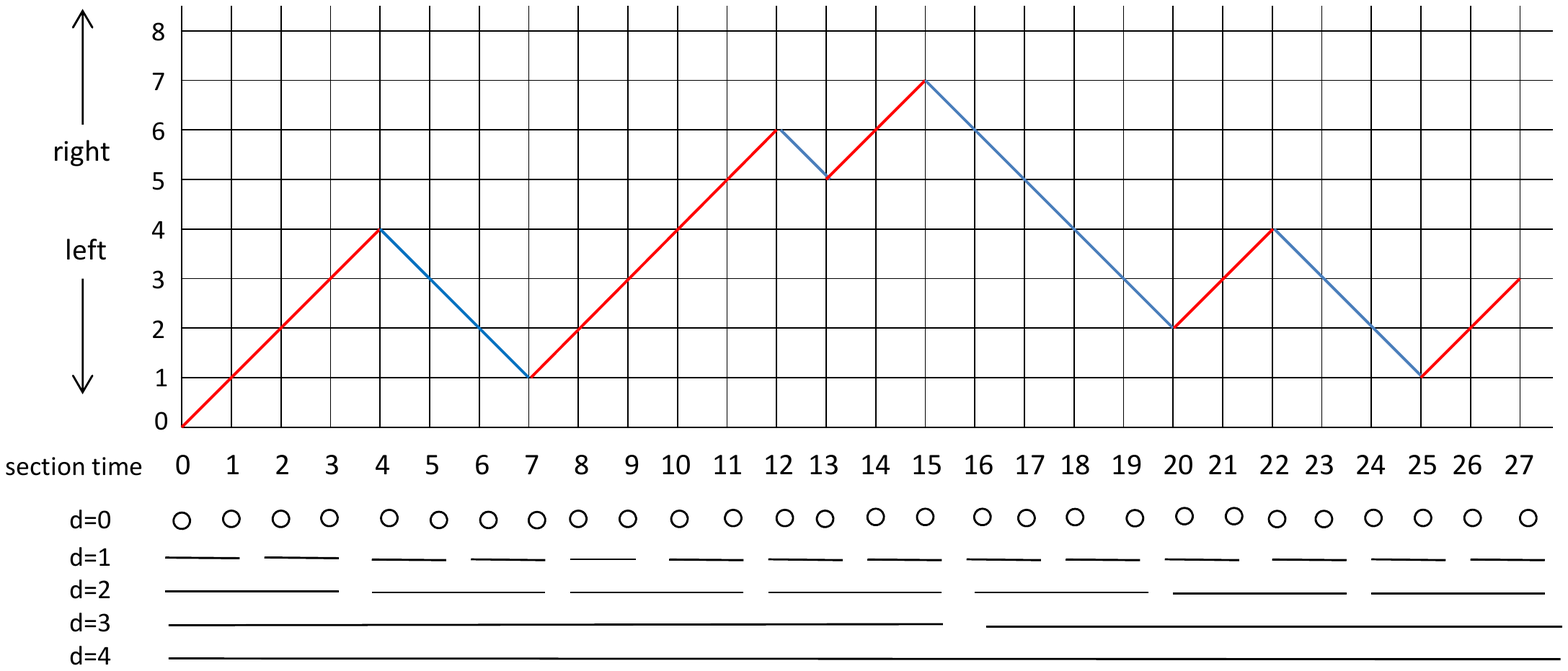}
\caption{A storage content history. Here, we assume that an underlying $k$-sda makes no storage-stationary-move. Thus, runtime matches section time.}\label{fig:storage-hisotry}
\end{figure}


Notice that, by the definition, the left-cut (resp., the right-cut) of  $C_t$ is the latest (i.e., the most recent) marker in $L^{(l)}(0,l_2,t)$ (resp., $L^{(r)}(0,l_2,t)$).
For every index $d\in[0,e_x]_{\integer}$ and any $a\in\{l,r\}$, let $L^{(a)}(d,l_2,t) = L^{(a)}_{last}(d,l_2,t)\cup L^{(a)}_{cur}(d,l_2,t)$ and set
$L(e_1,e_2,l_2,t) = (\bigcup_{d\in[0,e_1]_{\integer}} L^{(l)}(d,l_2,t)) \cup (\bigcup_{d\in[0,e_2]_{\integer}} L^{(r)}(d,l_2,t))$ for any pair $e_1,e_2\in[0,e_x]_{\integer}$.
We always assume that all markers in $L(e_1,e_2,l_2,t)$ are enumerated according to time.
This contingency list $L(e_1,e_2,l_2,t)$ is said to be \emph{linked to} $ C_t$.
A \emph{$2$-contingency tree at time $t$} is a two-node tree whose root is $C_t$ and its child is a contingency list linked to $C_t$.  A \emph{$(k+1)$-contingency tree at time $t$} is a leveled, rooted tree whose top 2 levels form a $2$-contingency tree at time $t$ and each marker (except for the root $C_t$) appearing in this $2$-contingency tree is the  root of another $k$-contingency tree at time $t'$ with $t'<t$.  Figure~\ref{fig:storage-tape-content} illustrates such a contingency tree. In particular, we treat $C_0$ as a special marker that links itself to a unique \emph{empty contingency list}. To describe a $k$-contingency tree at time $t$, we use the notation $\DD(k,e_1,e_2,l_2,t)$. The \emph{principal contingency list} of such a tree refers to the child of the root $C_t$ of the tree.

\begin{lemma}\label{marker-number}
For any pair $e_1,e_2\in[0,e_x]_{\integer}$, the
total number of markers stored inside $\DD(k,e_1,e_2,l_2,t)$ is $O(\log^{k-2}n)$, where $n=|x|$.
\end{lemma}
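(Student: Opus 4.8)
The plan is to induct on $k$ and bound, at each level of the tree, how many markers a contingency list can contain and how many child subtrees it spawns. First I would establish the base estimate: a single contingency list $L(e_1,e_2,l_2,t)$ is assembled from the four families $L^{(a)}_{last}(d,l_2,t)$ and $L^{(a)}_{cur}(d,l_2,t)$ with $a\in\{l,r\}$ and $d$ ranging over $[0,e_x]_{\integer}$, so there are $O(e_x)=O(\log n)$ such families. Each family contributes at most one marker: for fixed $d$ and fixed side, a $d$-section is a uniquely determined interval of section time, and its left- (resp. right-) representative is the unique leftmost (resp. rightmost) marker in that section; the ``last-good'' and ``current'' qualifiers pin down exactly which $(d+1)$-section $S'$ is relevant to $C_t$, hence exactly which enclosed $d$-section $S$ and which single representative $C$ we pick. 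Thus $|L(e_1,e_2,l_2,t)| = O(\log n)$, and this is the number of markers appearing at any one non-root node of the contingency tree, as well as an upper bound on the branching factor at that node.

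Next I would set up the recursion across levels. A $(k+1)$-contingency tree at time $t$ consists of the root $C_t$, its principal contingency list (one node containing $O(\log n)$ markers), and — attached to each of those $O(\log n)$ markers — a $k$-contingency tree rooted at an earlier time. Writing $N(k)$ for the worst-case total number of markers stored inside a $k$-contingency tree, this gives the recurrence $N(k+1) = O(\log n) + O(\log n)\cdot N(k)$, i.e. $N(k+1) = O(\log n)\cdot(N(k)+1)$. For the base case I would take $N(2) = O(1)$: a $2$-contingency tree is just the root $C_t$ together with a single principal contingency list — but wait, that list already has $O(\log n)$ markers, so the honest base is $N(2)=O(\log n)$. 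Unwinding the recurrence from there yields $N(k) = O(\log^{k-2} n)\cdot O(\log n)$? Here I need to be careful: the statement claims $O(\log^{k-2}n)$, so the accounting must be arranged so that the principal list of the top-level tree is \emph{not} counted (it sits at the ``$+1$'' slot), or equivalently the recursion should be indexed so that $N(k)$ counts markers strictly below the principal list plus the principal list contributing the branching. The clean way is: let $T(j)$ be the number of markers in a $j$-contingency tree \emph{excluding} its root; then $T(2)=O(\log n)=O(\log^{2-1}n)$ and $T(k+1)=|{\rm principal\ list}| + \sum_{\text{markers in list}} T(k) = O(\log n) + O(\log n)\cdot O(\log^{k-1}n)$. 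That gives $T(k)=O(\log^{k-1}n)$, which is off by one from the claim; so I would instead observe that the intended counting in the paper treats each contingency list as contributing its markers only once as ``level content'' and the geometric blow-up starts one level later because the principal list of a $k$-tree is the \emph{only} list when $k=2$ and contributes $O(1)$ \emph{distinct section-indices per fixed} $(d,\text{side})$ — re-examining, the right normalization making the bound $O(\log^{k-2}n)$ is $T(2)=O(1)$ (a $2$-contingency tree's principal list, once $l_2,t$ and the section structure are fixed, has boundedly many markers because $e_1,e_2$ are treated as fixed small parameters at the bottom recursion), and then $T(k+1)=O(\log n)\cdot(T(k)+O(1))$, yielding $T(k)=O(\log^{k-2}n)$.

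The main obstacle, and the step I would spend the most care on, is pinning down exactly this counting bookkeeping: namely, why at the recursion's bottom level the relevant contingency list has only $O(1)$ markers rather than $O(\log n)$. I expect this comes from the fact that when one drills down $k-2$ levels, the parameters $e_1,e_2$ that index the range of $d$ have been successively decremented (each level of the contingency tree corresponds to resolving one more ``scale'' $d$ of the storage-head history), so at the deepest level only a constant number of sections $d$ remain in play. I would make this precise by showing that the $d$ appearing in a marker at depth $i$ of the contingency tree satisfies $d \le e_x - i$ or similar, so the summation $\sum_{d}$ at depth $i$ has only $O(e_x - i)$ terms, and the product over all depths telescopes to $O(\log^{k-2}n)$ rather than $O(\log^{k-1}n)$. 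Verifying that the section-decrement invariant genuinely holds under the definitions of $L^{(a)}_{last}$ and $L^{(a)}_{cur}$ — in particular that a left-representative of a $d$-section, when it becomes the root of a child $k$-contingency tree, only links to contingency lists built from $d'$-sections with $d' < d$ — is the crux, and I would prove it by the same argument that shows a $(d+1)$-section is the union of two consecutive $d$-sections, so that ``going inside'' a section strictly lowers the scale. Once this invariant is in hand, the total count follows by the telescoping product, and a short separate check confirms that the number of \emph{nodes} (as opposed to markers) in the tree is also $O(\log^{k-2}n)$, so nothing is undercounted.
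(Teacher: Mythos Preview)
Your opening move is right and matches the paper: each contingency list has $O(\log n)$ markers because there are $O(e_x)$ index values $d$, and for each fixed $d$ and each of the four tags $(a,b)\in\{l,r\}\times\{\mathit{last},\mathit{cur}\}$ the set $L^{(a)}_{b}(d,l_2,t)$ contains at most two markers (a $(d{+}1)$-section encloses exactly two $d$-sections, each with a unique representative). From there the paper's argument is purely structural: a $k$-contingency tree has $k-1$ levels of contingency-list nodes, the second level holds exactly one such node, and each node spawns at most $O(\log n)$ children; summing the geometric series over levels gives $O(\log^{k-2}n)$ list nodes. That is the whole proof.

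Your proposal goes off the rails when you try to reconcile the exponent by positing a ``section-decrement invariant'' under which the range of $d$ shrinks as one descends the tree. The paper neither states nor needs any such invariant; the parameters $e_1,e_2$ attached to a child subtree are not guaranteed to decrease in the definition of a $(k{+}1)$-contingency tree, and your sketch gives no mechanism by which they would. The discrepancy you noticed is real but has a mundane explanation: the paper's count of $O(\log^{k-2}n)$ is really a count of contingency-list \emph{nodes}, and the phrase ``total number of markers'' in the lemma statement is being used loosely (as confirmed by how the lemma is invoked in the proof of the main theorem, where it is multiplied by the $O(\log n)$ cost of one list to get $O(\log^{k-1}n)$ bits total). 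So drop the decrement machinery; the level-and-branching count you already set up is the entire argument.
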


\begin{proof}
Let $n=|x|$. Each contingency list inside $\DD(k,e_1,e_2,l_2,t)$ has only $O(\log{n})$ markers because $0\leq e_1,e_2\leq e_x=\ceilings{\log{n}}$.
Each node  in $\DD(k,e_1,e_2,l_2,t)$ except for the root is a contingency list and there are $k-1$ levels of such nodes in the tree. Since the second level of the tree has only one node, $\DD(k,e_1,e_2,l_2,t)$ must contain $O(\log^{k-2}n)$ markers.
\end{proof}

\subsection{Description of the Main Subroutine}\label{sec:description}

We will describe in Section \ref{sec:complexity-analysis} how to simulate a  polynomial-time depth-immune $k$-sda  using only $O(\log^k{n})$ space.
The core of this simulation procedure is the following subroutine, Subroutine $\AAA$, which manages  $k$-contingency trees to recover the past behaviors of a depth-immune $k$-sda for determining its next moves.

\ms
\n{\sc Subroutine $\AAA$:}
\renewcommand{\labelitemi}{$\circ$}
\begin{itemize}\vs{-2}
  \setlength{\topsep}{-2mm}%
  \setlength{\itemsep}{1mm}%
  \setlength{\parskip}{0cm}%

\item[] An input is of the form $(C_t,dv,e_1,e_2,,\DD(dv,e_1,e_2,l_2,t))$ with $C_t=(q,l_1,l_2,\sigma,r,t)\in \MM_x$, $dv\in[0,k]_{\integer}$, and $e_1,e_2\in[0,e_x]_{\integer}$.  If $dv\geq dv(C_{t+1})$ holds for the $(t+1)$-marker $C_{t+1} = (q',l'_1,l'_2,\sigma',r',t+1)$, then $\AAA$ returns  $(C_{t+1},dv,e_1,e_2,\DD(dv,e_1,e_2,l'_2,t+1))$  by making a (possible) series of recursive calls to itself. Otherwise, it may return anything.
\end{itemize}


Hereafter, we explain how to implement Subroutine $\AAA$. We assume that all markers in each contingency list are automatically enumerated according to time.


\begin{figure}[t]
\centering
\includegraphics*[height=5.6cm]{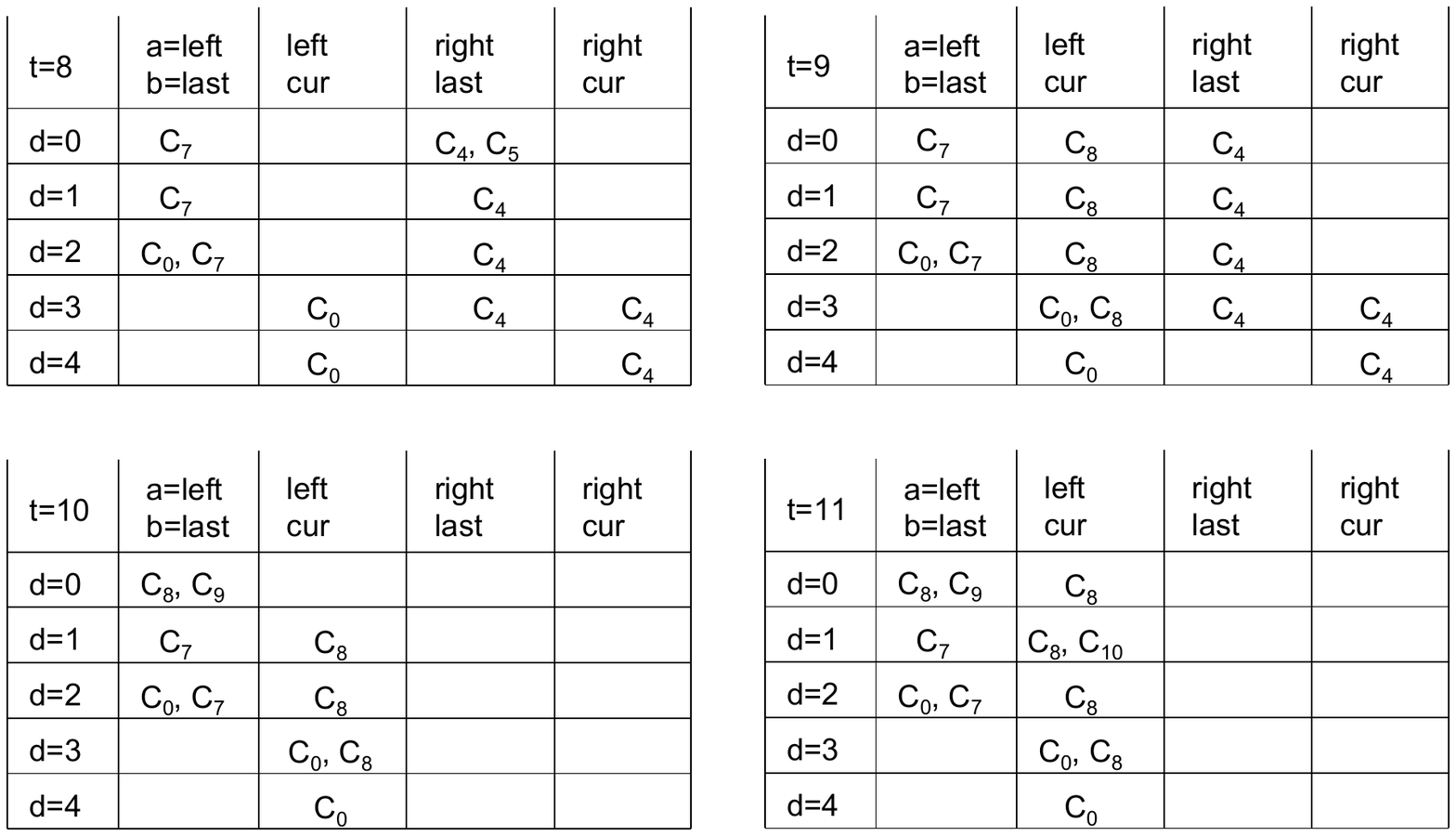}
\caption{The content of the principal contingency list at time $t=8,9,10,11$. The notation $C_i$ denotes a marker at (section) time $i$ given by Figure~\ref{fig:storage-hisotry}. The parameters $a$ and $b$ satisfy  $a\in\{l(eft),r(ight)\}$ and $b\in\{last,cur\}$ in $L^{(a)}_{b}(d,l_2,t)$.}\label{fig:storage-tree-table-01}
\end{figure}


\begin{figure}[t]
\centering
\includegraphics*[height=5.6cm]{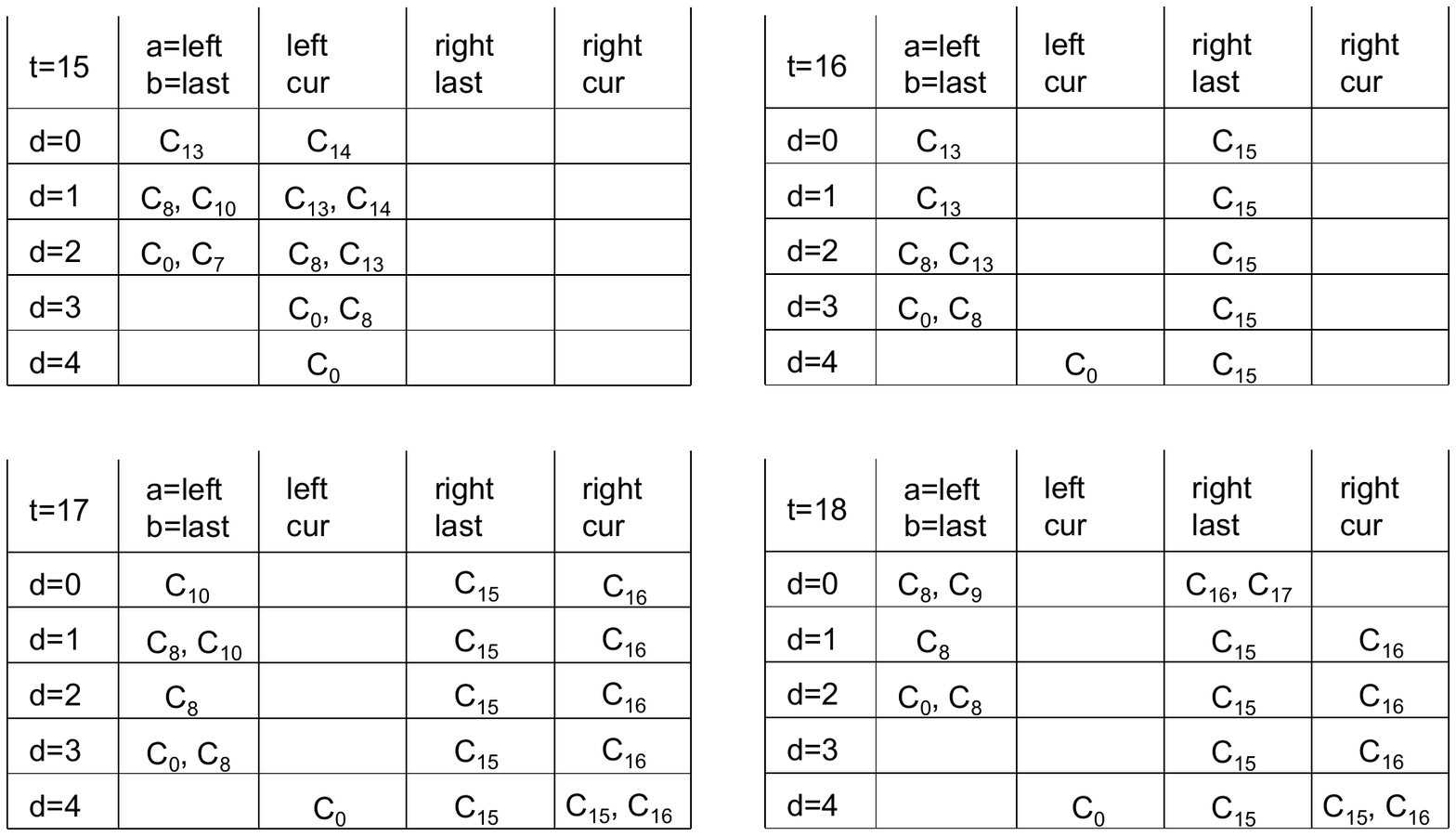}
\caption{The content of the principal contingency list at (section) time $t=15,16,17,18$. The parameters $a$ and $b$ are the same as in Figure~\ref{fig:storage-tree-table-01}.}\label{fig:storage-tree-table-02}
\end{figure}


\ms
\n{\bf [Implementation of Subroutine $\AAA$]}
\renewcommand{\labelitemi}{$\circ$}
\begin{enumerate}\vs{-1}
  \setlength{\topsep}{-2mm}%
  \setlength{\itemsep}{1mm}%
  \setlength{\parskip}{0cm}%

\item Consider the case where the storage-tape head makes a stationary move, that is, $l'_2=l_2$. Define $r'=r$.
    Since we do not need to alter the contingency tree,
    we automatically set $L^{(l)}(d,l'_2,t+1) = L^{(l)}(d,l_2,t)$ and $L^{(r)}(d',l'_2,t+1) = L^{(r)}(d',l_2,t)$ for any $d\in[0,e_1]_{\integer}$ and $d'\in[0,e_2]_{\integer}$ in the principal contingency list of the contingency tree  $\DD(dv,e_1,e_2,l'_2,t+1)$.
    We further compute $C_{t+1} = (q',l'_1,l'_2,\sigma',r',t+1)$ from $C_t$ directly by applying $\delta$.

\item Consider the case where the storage-tape head is moving to the right, that is, $l'_2=l_2+1$. Define $r'=r+1$ and compute $C_{t+1}$ and $\DD(dv,e_1,e_2,l'_2,t+1)$ as follows.

\begin{enumerate}
  \setlength{\topsep}{-2mm}%
  \setlength{\itemsep}{1mm}%
  \setlength{\parskip}{0cm}%

\item[(a)] [list $L^{(l)}(d,l'_2,t+1)$ for any $d\in[0,e_1]_{\integer}$] Consider the principal contingency list $L(e_1,e_2,l_2,t)$ of $\DD(dv,e_1,e_2,l_2,t)$. We compute $L(e_1,e_2,l'_2,t+1)$ in the following way.
    Take the maximum numbers $d_0\in [0,e_1]_{\integer}$ and $d_2\in[0,e_1]_{\integer}$ satisfying that $r'\equiv 0$ ($\mathrm{mod}\:2^{d_0}$) and $r\equiv0$ ($\mathrm{mod}\:2^{d_2}$). For each index $i\in[0,\max\{d_0,d_2\}-1]_{\integer}$, we inductively define $A_i$ by setting $A_0=L^{(l)}_{cur}(0,l_2,t)\cup\{C_t\}$ and $A_{i} = L^{(l)}_{cur}(i-1,l_2,t)\cup\{C'_{i-1}\}$, where $C'_{i-1}$ is the oldest marker in $A_{i-1}$.

\begin{enumerate}
  \setlength{\topsep}{-2mm}%
  \setlength{\itemsep}{1mm}%
  \setlength{\parskip}{0cm}%

\item[(i)] Assume that $r'$ is even. Note that $d_2=0$. For each $i\in[0,d_0-1]_{\integer}$, we set $L^{(l)}_{cur}(i,l'_2,t+1) = \setempty$, $L^{(l)}_{last}(i,l'_2,t+1) = A_i$, $L^{(l)}_{cur}(d_0,l'_2,t+1) = A_{d_0}$, and $L^{(l)}_{last}(d_0,l'_2,t+1) = L^{(l)}_{last}(d_0,l_2,t)$.
    For any $i\in[d_0+1,e_1]_{\integer}$ and any subscript $b\in\{cur,last\}$, we further define $L^{(l)}_b(i,l'_2,t+1) = L^{(l)}_b(i,l_2,t)$. In the example of $t=9$ in Figure~\ref{fig:storage-tree-table-01}, we have $r'=10$ and $d_0=1$.

\item[(ii)] If $r'$ is odd, then we set $L^{(l)}_{cur}(i,l'_2,t+1) = A_i$ for any $i\in[0,d_2]_{\integer}$.
    For all the other indices $d$ and all subscripts $b\in\{cur,last\}$, we set $L^{(l)}_{b}(d,l'_2,t+1) = L^{(l)}_{b}(d,l_2,t)$.
    We also update the others without changing the content. In the example of $t=8$ in Figure~\ref{fig:storage-tree-table-01}, we have $r'=9$, $d_0=0$, and $d_2=3$.

\item[(iii)] For all the remaining ``nodes'' in the contingency tree $\DD(dv,e_1,e_2,l_2,t)$, we also update them in accordance with the above changes.
\end{enumerate}

\item[(b)] [list $L^{(r)}(d',l'_2,t+1)$ for any  $d'\in[0,e_2]_{\integer}$]  This process is obtained straightforwardly from Item 3(a) by replacing ``$l$'' with ``$r$'', ``left'' with ``right'',  ``$d'$'' with ``$d$'', and ``$e_1$'' with ``$e_2$''.

\item[(c)] [marker $C_{t+1}$] To obtain the marker $C_{t+1} = (q',l'_1,l'_2,\sigma',r',t+1)$ from $C_t$, it suffices to compute the left-cut (if any).

\begin{enumerate}
  \setlength{\topsep}{-2mm}%
  \setlength{\itemsep}{1mm}%
  \setlength{\parskip}{0cm}%

\item[(i)] Assume that $dv\geq0$. If there exists a left-cut $C''$ of $C_t$ at time $<t$ either stored inside $\DD(dv,e_1,e_2,l_2,t)$ or obtained by Item 2(b), then the desired marker $C_{t+1}$ is directly calculated from $C_t$ and $symb(C'')$ by applying $\delta$.
    If there is no right-representative at location $\geq l'_2$ at time $<t$ inside $\DD(dv,e_1,e_2,l_2,t)$, then we know that the storage-tape head has never visited cell $l'_2$, and thus cell $l'_2$ must contain $\Box$. By setting  $\sigma'= \Box$, we compute the $l'_2$-marker $C_{t+1}$ from $C_t$ alone  by $\delta$.

\item[(ii)] If $dv<0$, then cell $l'_2$ must be already frozen. Let  $\sigma'= B$ and produce $C_{t+1}$ from $C_t$ alone by applying $\delta$.
\end{enumerate}
\end{enumerate}

\item Let us consider the case where the storage-tape head is moving to the left, that is, $l'_2=l_2 -1$. We set $r'=r+1$ and compute $C_{t+1}$ and   $\DD(dv,e_1,e_2,l'_2,t+1)$ in the following way.

\begin{enumerate}
  \setlength{\topsep}{-2mm}%
  \setlength{\itemsep}{1mm}%
  \setlength{\parskip}{0cm}%

\item[(a)] [list $L^{(l)}(d,l'_2,t+1)$ for any $d\in[0,e_1]_{\integer}$] Consider the principal contingency list $L(e_1,e_2,l_2,t)$ of $\DD(dv,e_1,e_2,l_2,t)$ and compute $L(e_1,e_2,l'_2,t+1)$ by taking the following procedure. Take the maximum number $d_0$ in $[0,e_1]_{\integer}$ satisfying $r'\equiv 0$ ($\mathrm{mod}\:2^{d_0}$) and define $d_1$ to be the minimal number in $[0,e_1]_{\integer}$ for which $|L^{(l)}(d_1,l_2,t)|\geq2$ holds. In case that no such $d_1$ exists, we automatically set $d_1=e_1$.

\begin{enumerate}
  \setlength{\topsep}{-2mm}%
  \setlength{\itemsep}{1mm}%
  \setlength{\parskip}{0cm}%

\item[(i)] For each index $d\in[0,d_1]_{\integer}$, we reset $L^{(l)}(d,l_2,t)$ to be $L^{(l)}(d,l_2,t)-\{C'\}$, where $C'$ is the newest marker in $L^{(l)}(d,l_2,t)$. The contingency tree $\DD(dv,e_1,e_2,l_2,t)$ is then updated.
    If $d_1=0$, then we define $L^{(l)}(d,l'_2,t+1) = L^{(l)}(d,l_2,t)$ for any $d\in[0,e_1]_{\integer}$. In the example of $t=18$ in Figure~\ref{fig:storage-tree-table-02}, we have $r'=19$ and $d_0=d_1=0$.
    Assume that $d_1\neq0$.
    If the left-cut of $C_{t+1}$ still remains in $\DD(dv,e_1,e_2,l_2,t)$, then we set $C''$ to be this left-cut. This case is depicted as the example of $t=15$ in Figure~\ref{fig:storage-tree-table-02} with $r'=16$, $d_0=4$, and $d_1=0$ and we obtain $C''=C_{13}$. In the case where $\DD(dv,e_1,e_2,l_2,t)$ contains no left-cut of $C_{t+1}$, we choose the latest marker $C'$ from $L^{(l)}(d_1,l_2,t)$  and define $l'=st\mbox{-}loc(C')$ and $t'=time(C')$. Since $d_1>d_0$, $C'$ is the left-representative of the completed $d_1$-section. See the example of $t=16$ in Figure~\ref{fig:storage-tree-table-02} with  $r'=17$, $d_0=0$, $d_1=2$, and $C'=C_8$. We then compute the left-cut $C''$ of $C_{t+1}$ that appears in a certain last-left-good $d'$-section with $d'<d_1$ satisfying both  $st\mbox{-}loc(C'')=l'_2$ and $t'< time(C'') <t$. This is done by recursively calling $\AAA$ as follows.
    Starting from $C'$ with $\DD(dv',d_1-1,e_2,l',t')$ with $dv'=k$, we inductively generate a pair of $\tilde{C}$ and $\DD(dv'-1,d_1-1,e_2,\tilde{l},\tilde{t})$ with $\tilde{t}=time(\tilde{C})$ and $\tilde{l}=st\mbox{-}loc(\tilde{C})$, and run $\AAA(\tilde{C},dv'-1,d_1-1,e_2, \DD(dv'-1,d_1-1,e_2,\tilde{l},\tilde{t}))$ to obtain the next marker $\tilde{C}'$ and its contingency tree $\DD(dv'-1,d_1-1,e_2,\tilde{l}',\tilde{t}+1)$ with $\tilde{l}'=st\mbox{-}loc(\tilde{C}')$ until we reach $C''$ with $dv' \geq dv(C'')$.
    Note that the final recursive call must be of the form $\AAA(\tilde{C},dv-1,d_1-1,e_2, \DD(dv-1,d_1-1,e_2,l'_2-1,\tilde{t}))$ with $l'_2-1=st\mbox{-}loc(\tilde{C})$. In the example of $t=16$ in Figure~\ref{fig:storage-tree-table-02}, $C''$ is $C_{10}$.

\item[(ii)] We inductively define $B_0=L^{(l)}_{cur}(0,l'_2,t')\cup\{C''\}$ and $B_{i+1} = L^{(l)}_{cur}(i,l'_2,t')\cup\{\bar{C}_i\}$ for any index $i\in[0,d_1-1]_{\integer}$, where $\bar{C}_i$ is the oldest marker in $B_i$. We define $L^{(l)}_{last}(i,l'_2,t+1) = B_i$ and $L^{(l)}_{cur}(i,l'_2,t+1) = \setempty$ for all  $i\in[0,d_1-1]_{\integer}$ and we then set $L^{(l)}_{cur}(d_1,l'_2,t+1) = \setempty$ and $L^{(l)}_{last}(d_1,l'_2,t+1) = L^{(l)}_{last}(d_1,l_2,t)\cup B_{d_1}$. In the example of $t=16$ in Figure~\ref{fig:storage-tree-table-02}, we obtain  $t'=10$ because of $C''=C_{10}$.

\item[(iii)] In the special case of $d_0>d_1$, we additionally define $E_0=\setempty$ and $E_{i+1} = L^{(l)}_{cur}(i,l_2,t)\cup\{\bar{C}'_i\}$ for any index $i\in[d_1+1,d_0-1]_{\integer}$, where $\bar{C}'_i$ is the oldest marker in $E_i$. We further set $L^{(l)}_{last}(i,l'_2,t+1) = E_i$,  $L^{(l)}_{cur}(i,l'_2,t+1) = \setempty$ for any $i\in[d_1+1,d_0-1]_{\integer}$, and  $L^{(l)}_{cur}(d_0,l'_2,t+1) = E_{d_0}$.
    We update all the others without changing the content of the old contingency lists $L^{(l)}(d,l_2,t)$. This process is exemplified in the case of $t=15$ in Figure~\ref{fig:storage-tree-table-02}, where $r'=16$, $d_0=4$, and $d_1=0$.

\item[(iv)] For all the other ``nodes'' in $\DD(dv,e_1,e_2,l_2,t)$, we update them to match the aforementioned changes.
\end{enumerate}

\item[(b)] [list $L^{(r)}(d',l'_2,t+1)$ for any  $d'\in[0,e_2]_{\integer}$] This process is obtained from Item  2(a) by replacing ``$l$'' with ``$r$'', ``$d$'' with ``$d'$'', and ``$e_1$'' with ``$e_2$''.

\item[(c)] [marker $C_{t+1}$] The desired $C_{t+1}$ is calculated in the following way.

\begin{enumerate}
  \setlength{\topsep}{-2mm}%
  \setlength{\itemsep}{1mm}%
  \setlength{\parskip}{0cm}%

\item[(i)] Assume that $dv\geq1$. If there is a right-cut $C''$ of $C_t$ at time $<t$, which is either stored inside $\DD(dv,e_1,e_2,l_2,t)$ or obtained by Item 3(a), then the desired marker $C_{t+1}$ is directly calculated from $C_t$ and $symb(C'')$ by applying $\delta$.
    Assume otherwise. If there is a left-representative $C_{rep}$ for $C_t$ at time $t_0<t$ and no left turn exists at location $\geq l'_2$ at time $<t_0$, then  we start with this left-representative $C_{rep}$ and compute a series of marker $\tilde{C}$ one by one (by incrementing time) from $C_{rep}$ using $\Box$ as inputs \emph{with no recursive call} until we obtain the $l'_2$-marker $C'$. We compute the desired $l'_2$-marker $C_{t+1}$ from $C_t$ and $symb(C')$  by applying $\delta$.

\item[(ii)] If $dv\leq 0$, then cell $l'_2$ must be already frozen blank $B$. In this case, since we do not need the left-cut of $C_t$, we set $\sigma'= B$ and computer $C_{t+1}$ from $C_t$ alone by applying $\delta$.
\end{enumerate}
\end{enumerate}
\end{enumerate}


\subsection{Complexity Analyses of the Procedures}\label{sec:complexity-analysis}

We have already given the description of Subroutine $\AAA$ in Section \ref{sec:description}.
What remains undone is to prove that the subroutine is correct and runs in polynomial time using only $O(\log^k{n})$ space.

In what follows, the \emph{depth of recursion} with respect to inputs of length $n$ refers to the maximum number of recursive calls invoked during any computation starting with any input of length $n$.

\begin{lemma}\label{correctness-proof}
Let $C_t$ and $C_{t+1}$ be two markers at times $t$ and $t+1$, respectively. Let $e_1,e_2\in[0,e_x]_{\integer}$ and let $n$ denote the input length.
If $dv\geq dv(C_{t+1})$, then $\AAA(C_t,dv,e_1,e_2,\DD(dv,e_1,e_2,l_2,t))$ correctly returns $(C_{t+1},dv,e_1,e_2,\DD(dv,e_1,e_2,l'_2,t+1))$ and  the depth of recursion is $O(\log{n})$.
\end{lemma}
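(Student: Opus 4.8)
The plan is to prove the two assertions of the lemma --- that $\AAA$ returns the genuine next marker together with its correctly updated contingency tree, and that the depth of recursion is $O(\log n)$ --- by a single induction on the third parameter $e_1$ of the invocation (for a fixed input length $n$), after first observing that every nested recursive call of $\AAA$ carries a strictly smaller value of that parameter.

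First I would settle the recursion-depth bound. Inspecting the implementation, the only recursive invocations of $\AAA$ occur in the deep-recovery branch, Item~3(a)(i); the in-loop steps $\tilde C\mapsto\tilde C'$ there are sequential siblings rather than nested calls, and the auxiliary marker replays elsewhere in the implementation are performed explicitly with no recursive call. In Item~3(a)(i) the precondition $d_1\neq 0$ is in force while $d_1\le e_1$, so each such recursive call has third argument $d_1-1$, strictly below the current $e_1$. Since the outermost call has $e_1\le e_x=\ceilings{\log n}$ and this parameter strictly decreases along any chain of nested calls, the depth of recursion is at most $e_x+1=O(\log n)$; this bound also makes the induction on $e_1$ legitimate.

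For correctness I would argue by induction on $e_1$. The base case covers every invocation that makes no recursive call: a storage-stationary move (Item~1); a left or right move whose required cut already sits inside $\DD(dv,e_1,e_2,l_2,t)$; and a move for which cell $l'_2$ has provably never been visited (so its content is $\Box$) or is known to be frozen (content $B$, forced by the precondition $dv\ge dv(C_{t+1})$). In each case one checks directly that (a) the symbol $\sigma'$ placed in cell $l'_2$ equals the true content of that cell at time $t+1$, whence $C_{t+1}$, produced from $C_t$ and $\sigma'$ via $\delta$, is the genuine $(t+1)$-marker; and (b) the manipulations of $L^{(l)}_{last}, L^{(l)}_{cur}, L^{(r)}_{last}, L^{(r)}_{cur}$ in Items~2(a)--(b) and 3(a)--(b) reproduce, clause by clause, the Definition of Contingency Lists --- for a rightward move ($r'=r+1$) the reshuffle promotes the ``current'' representatives of levels up to $d_0$ to ``last'' status and, when $r'$ is odd, refreshes the ``current'' lists, where $d_0$ (resp.\ $d_2$) is the largest power of $2$ dividing $r'$ (resp.\ $r$); for a leftward move it completes the $d$-sections up to the minimal index $d_1$ with $|L^{(l)}(d_1,l_2,t)|\ge 2$ and pops their newest markers. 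For the inductive step --- the deep-recovery branch of Item~3(a)(i) --- I would invoke the induction hypothesis on each sub-call (admissible since its third parameter is $d_1-1<e_1$), after checking that the loop terminates at the correct left-cut $C''$ of $C_{t+1}$: by the section structure $C''$ lies inside the completed $d_1$-section with left-representative $C'$, hence at a time in the finite window $t'<time(C'')<t$, so finitely many iterations reach a marker with $dv'\ge dv(C'')$ and $st\mbox{-}loc(\tilde C)=l'_2-1$ at the last step; one must also verify that the precondition propagates, i.e.\ $dv'-1\ge dv(\tilde C')$ throughout, which holds because $dv'$ is decremented in lockstep with the replayed section. Assembling the sets $B_i$ (and the $E_i$ when $d_0>d_1$) from the recovered $C''$ then gives exactly $\DD(dv,e_1,e_2,l'_2,t+1)$.

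I expect the genuine difficulty to be clause (b) in the leftward-move case. The contingency tree is deliberately \emph{lossy}: a left-cut needed to advance $C_t$ may already have been discarded from $\DD(dv,e_1,e_2,l_2,t)$, and the correctness of Item~3(a)(i) rests on an invariant guaranteeing that whatever was discarded can be regenerated faithfully from the left-representative of the pertinent completed $d_1$-section together with its depth-$(d_1-1)$ subtree. Pinning this invariant down and showing it is preserved by all three move types --- and, crucially, by the recursive replay itself, which must maintain a scaled-down version of it with parameters $d_1-1$ and $e_2$ --- is where essentially all the bookkeeping lies. Lemma~\ref{marker-number} guarantees that this regeneration stays within the $O(\log^k n)$ space budget, but the correctness argument proper amounts to matching the implementation against the Definition of Contingency Lists line by line, with the leftward-move recursion being the one genuinely delicate piece.
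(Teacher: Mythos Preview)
Your overall plan---prove the recursion-depth bound first and then argue correctness by induction along that bound---matches the paper's approach in spirit, and your identification of Item~3(a)(i) as the delicate case is right. There is, however, a genuine gap in your claim that ``the only recursive invocations of $\AAA$ occur in the deep-recovery branch, Item~3(a)(i).'' Item~2(b) is defined by symmetry with Item~3(a): when the storage head moves to the \emph{right}, the list $L^{(r)}$ may lose its newest marker, and the right-cut of $C_{t+1}$ may have been discarded, forcing a symmetric deep-recovery recursion. The paper itself says explicitly that Items~2(b), 2(c), 3(a), and 3(c) all require recursive calls to $\AAA$. In the symmetric call coming from Item~2(b) the third argument stays at $e_1$ while the \emph{fourth} argument drops from $e_2$ to $d'_1-1$ (with $d'_1\le e_2$). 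Consequently a chain of nested calls can alternate between left-recoveries (shrinking $e_1$) and right-recoveries (shrinking $e_2$), and your induction on $e_1$ alone does not terminate such a chain.

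The fix is minor but necessary: induct on $e_1+e_2$ (or equivalently observe that both parameters are bounded by $e_x$ and at least one strictly decreases on every nested call), so that the depth bound becomes $2e_x+O(1)=O(\log n)$ and the induction hypothesis applies to recursive calls from either side. Once you make this change, the remainder of your sketch---checking the list reshuffles against the Definition of Contingency Lists, and verifying that the replayed markers in the deep-recovery loop reach the genuine cut---is essentially what the paper does via its two internal claims; your framing of the leftward case as the ``genuinely delicate piece'' is accurate, but do not forget that the rightward case is its mirror image rather than trivial.
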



\begin{proof}
We first argue the correctness of Subroutine $\AAA$. We assume that  $C_t = (q,l_1,l_2,\sigma,r,t)$ and $\DD(dv, e_1,e_2,l_2,t)$ have already been correctly computed. We run $\AAA$ on an input of the form $(C_t,dv,e_1,e_2,\DD(dv,e_1,e_2,l_2,t))$ with $dc\geq dv(C_{t+1})$.
During the computation of $\AAA$, Items 2(b), 2(c), 3(a), and 3(c) require series of recursive calls to $\AAA$ itself. Notice that those recursive calls are possible because a contingency tree in, e.g., Item 3(a) contains $\DD(dv-1,d_1-1,e_2,\tilde{l},\tilde{t})$ linked to a left-representative $C'$, which is included in $\DD(dv,e_1,e_2,l_2,t)$. Let $l'_2$ be defined from $l_2$ by Items 1--3.

\begin{claim}
For any $a\in\{l,r\}$, if $L^{(a)}(d,l_2,t)$ is correct, then so is $L^{(a)}(d,l'_2,t+1)$.
\end{claim}

\begin{proof}
Assume that $L^{(a)}(d,l_2,t)$ is correctly computed. It suffices to check the following five items: 1, 2(a), 2(b), 3(a), and 3(b). Since Item 1 is obvious, we then consider Item 2(a), in which the tape head moves to the right. Recall the parameters used in the description of Item 2(a).
If $r'$ is odd, then a $d$-section never ends at time $t$. Note that $d_1$ equals the previous value of $d_0$ at time $t-1$. Thus, for any $d\leq d_2$, we move all entries of $d$-section into $(d+1)$-section. By contrast, if $r'$ is even, then every $d$-section ends at time $t$ for any $d\leq d_0$.
Since the current $0$-section changes, we set it to empty at time $t+1$ and
move all entries in the current sections to their last-left-good sections by shifting the oldest markers in $d$-sections to $(d+1)$-sections.
Hence, $L^{(l)}(d,l'_2,t+1)$ is correctly computed.

Next, let us consider Item 3(a). If $d_1=0$, since the latest marker in $L^{(l)}(d_1,l_2,t)$ is the left-cut, it should be deleted.
Next, assume that $d_1>0$. If the left-cut remains after the deletion of the newest marker, then we simply shift all entries from the current sections to their last-left-good sections. Assume otherwise. After updating $L^{(l)}(d_1,l_2,t)$, the newest marker is the left-cut and is removed. A series of recursive calls helps us find the left-cut $C''$ of $C_{t+1}$.
Up to $d_1-1$, we move all entries of the current sections to their last-left-good sections by first adding the left-cut $C''$ and shifting the oldest markers in $d$-sections to $(d+1)$-sections.
In the case of $d_0>d_1$, for all $d\leq d_0$, any $d$-section begins at section time $t+1$. We move all entries in last-left-good sections to the current sections by shifting the oldest markers in $d$-sections to $(d+1)$-sections. Thus, $L^{(l)}(d,l'_2,t+1)$ contains the correct markers.

Items 2(b) and 3(b) are handled symmetrically to Items 3(a) and 2(a), respectively.
\end{proof}

\begin{claim}
If $L^{(a)}(d,l'_2,t+1)$ and $C_t$ are correctly computed for any $a\in\{l,r\}$ and any $i\in[0,t]_{\integer}$, then $C_{t+1}$ is also correctly computed.
\end{claim}

\begin{proof}
Assume that $L^{(a)}(d,l'_2,t+1)$ and $C_i$ are correctly computed for any $a\in\{l,r\}$ and any $i\in[0,t]_{\integer}$.
To verify the correctness of $C_{t+1}$, it suffices to check three items: 1, 2(c),  and 3(c). The case of Item 1 is obvious. Let us consider Item 2(c)(i). A left-cut $C''$ of $C_t$ at time $<t$ (if any) contains the information on the previously written storage symbol at cell $l'_2$. Thus, we can easily obtain $C_{t+1}$ correctly from this information together with $C_t$ by applying $\delta$. On the contrary, if there is no such left-cut, then we need only $C_t$ to compute $C_{t+1}$.
In the case of Item 3(c)(i), if there is a right-cut of $C_t$ at time $<t$, then we can compute $C_{t+1}$ correctly from this right-cut and $C_t$.  Otherwise, we can find an $l'_2$-marker $C'$ and then compute $C_{t+1}$ directly.
The remaining cases of Items 2(c)(ii) and 3(c)(ii) are obvious.
\end{proof}

Next, we argue that the depth of recursion of $\AAA$ is $O(\log{n})$ by examining all the processes of $\AAA$ that invoke recursive calls.
Let us first consider Item 3 of $\AAA$.
To compute $\DD(dv,e'_1,e'_2,l',t')$ in Item 3(a), if there is no right-cut $C''$ for $C_t$, then we need to make a series of recursive calls of the form $\AAA(\tilde{C},dv',d_1-1,e_2, \DD(dv',d_1-1,e_2,\tilde{l},\tilde{t}))$. Note that each recursive call depends on the parameter values of $dv'$ and $d_1$. Note that $d_1$ is smaller than $e_1$.
To compute $C_{t+1}$ with $l'_2=st\mbox{-}loc(C_{t+1})$ in Item 3(c), we need to compute $m$ markers $C^{(1)},C^{(2)},\ldots, C^{(m)}$ for a certain number $m\in\nat^{+}$ with $l'_2=st\mbox{-}loc(C^{(i)})$ and $0\leq time(C^{(i+1)})<time(C^{(i)})<t$ for any $i\in[m]$. If $m>dv(C_{t+1})$, then $\AAA$ eventually acknowledges that the cell $l'_2$ has already become frozen blank and it correctly computes $C_{t+1}$. At that point, the recursion ends. Thus, $m$ must be at most $k$.

A similar argument is applicable to Items 2(a) and 2(c).
Overall, since $e_1,e_2\in[0,e_x]_{\integer}$ and $e_x=\ceilings{\log|x|}$, the depth of recursion must be $O(\log{n})$.
\end{proof}

Finally, we prove Theorem \ref{upper-bound}, which provides an $\mathrm{SC}^k$-upper bound on the computational complexity of $k\mathrm{SDA}_{imm}$ for each index $k\geq2$.

\vs{-3}
\begin{proofof}{Theorem \ref{upper-bound}}
Since $\mathrm{SC}^k$ is closed under $\dl$-m-reductions, we immediately obtain $\mathrm{LOG}k\mathrm{SDA}_{imm}\subseteq \mathrm{SC}^k$ from $k\mathrm{SDA}_{imm}\subseteq \mathrm{SC}^k$. Hence, our goal here is to prove that $k\mathrm{SDA}_{imm}\subseteq \mathrm{SC}^k$.
Take an arbitrary language $L$ in $k\mathrm{SDA}_{imm}$ and any depth-immune $k$-sda $M$ that recognizes $L$ in polynomial time.
Consider the following procedure that simulates $M$ step by step on input $x$. Initially, we prepare the initial marker $C_0=(q_0,0,0,\triangleright,0)$ and the unique $k$-contingency list $\DD(k,e_x,e_x,0,0)$ linked to $C_0$.

\ms
\n{\sc Simulation Procedure $\PP$ for $M$:}
\begin{itemize}\vs{-2}
  \setlength{\topsep}{-2mm}%
  \setlength{\itemsep}{1mm}%
  \setlength{\parskip}{0cm}%

\item[] On input $x\in\Sigma^*$, set $n=|x|$ and $e_x=\ceilings{\log{|x|}}$, prepare $C_0$ and $\DD(k,e_x,e_x,0,0)$. Set $I_0=(C_0,k,e_x,e_x,\DD(k,e_x,e_x,0,0))$. Recursively, we run Subroutine $\AAA$ on $I_i$ and produce $I_{i+1}$ until $M$ halts at time, say, $t$ with $t \leq p(n)$. Let $I_t=(C_t,k,e_x,e_x,\DD(k,e_x,e_x,l_2,t))$ denote the final outcome of $\AAA$. If $C_t$ contains an accepting state, then we accept $x$; otherwise, we reject $x$.
\end{itemize}

By Lemma \ref{correctness-proof}, if $I_i$ correctly represents the configuration of $M$ at time $i$, then Subroutine $\AAA$ correctly computes $I_{i+1}$. By the mathematical induction, we conclude that $I_t$ correctly represents the halting configuration of $M$. This implies that the simulation procedure $\PP$ correctly determines whether or not $x$ belongs to $L$.

To store a contingency tree requires $O(\log^{k-1}{n})$ bits by Lemma \ref{marker-number} since each contingency list in the tree needs $O(\log{n})$ bits to express.
Lemma \ref{correctness-proof} then concludes that we need only polynomial runtime and $O(\log^k{n})$ memory bits to perform the simulation procedure $\PP$. Therefore, $L$ belongs to $\mathrm{SC}^k$.
\end{proofof}

\section{A Brief Discussion and Future Directions}

As access-controlled extensions of deterministic pushdown automata, we have introduced a new machine model of \emph{one-way deterministic depth-$k$ storage automata} (or $k$-sda's). We have then studied the fundamental properties of languages recognized by such machines and those of languages that are logarithmic many-one reducible (or $\dl$-m-reducible) to them.
We have called by  $k\mathrm{SDA}$ (resp., $k\mathrm{SDA}_{imm}$) the language family induced by depth-susceptible $k$-sda's (resp., depth-immune $k$-sda's).
This machine model naturally expands Hibbard's model of scan limited automata \cite{Hib67}. The $\dl$-m-closure of $k\mathrm{SDA}$ (resp., $k\mathrm{SDA}_{imm}$) is succinctly denoted by $\mathrm{LOG}k\mathrm{SDA}$ (resp., $\mathrm{LOG}k\mathrm{SDA}_{imm}$).
In Section \ref{sec:auxiliary}, we have presented two machine characterizations of $\mathrm{LOG}k\mathrm{SDA}$.
In Section \ref{sec:hardest-language}, we have also constructed a generic $\dl$-m-hard problem for $\mathrm{LOG}k\mathrm{SDA}$.
In Section \ref{sec:upper-bounds}, we have shown that $k\mathrm{SDA}_{imm} \subseteq \mathrm{SC}^k$ for each $k\geq2$.

As future research directions, we list six important questions that have been neither discussed nor solved in this exposition.

\begin{enumerate}
  \setlength{\topsep}{-2mm}%
  \setlength{\itemsep}{1mm}%
  \setlength{\parskip}{0cm}%

\item{} [class separations] As for Hibbard's $k$-lda's, it is known that, for every $k\in\nat^{+}$, $(k+1)$-lda's are in general more powerful in computational power than $k$-lda's \cite{Hib67}. Due to the similarity between $k$-lda's and $k$-sda's regarding the use of ``access-controlled'' memory devices, we can anticipate that, in general, $(k+1)$-sda's may have more computational power than $k$-sda's. It is thus  imperative to prove the difference between $k\mathrm{SDA}$ and $(k+1)\mathrm{SDA}$ for any index $k\geq2$.

\item{} [better upper bounds] We have shown in Section \ref{sec:upper-bounds} that $k\mathrm{SDA}_{imm}\subseteq \mathrm{SC}^k$ for any index $k\geq2$. However, we do not know whether $\mathrm{SC}^k$ is the best possible upper bound that we can achieve. We may ask whether or not the computational complexity of $k\mathrm{SDA}_{imm}$ is much lower than $\mathrm{SC}^k$. For instance, is it true that $k\mathrm{SDA}_{imm}\subseteq \mathrm{SC}^{k-1}$?

\item{} [nondeterminism and randomization] In this exposition, we have defined only the deterministic storage automata. Other important variants include a nondeterministic analogue of $k$-sda's, called \emph{depth-$k$ nondeterministic storage automata} (or $k$-sna's), and a randomized (or probabilistic) analogue of $k$-sda's, called \emph{depth-$k$ probabilistic storage automata} (or $k$-spa's). If we write $k\mathrm{SNA}$ and $k\mathrm{SBPA}$ to respectively denote the family of all languages recognized by $k$-sna's and that of all languages recognized by bounded-error $k$-spa's, then obviously $k\mathrm{SNA}$ is a subclass of $\np$ and $k\mathrm{SBPA}$ is a subclass of $\bpp$. It is important to study the basic properties of these language families and relationships between them.

\item{} [structural properties] Another important subject is (structural) closure properties under language operations (such as \emph{union}, \emph{intersection}, and \emph{concatenation}).  Since we have not discussed in this exposition any closure property of $k\mathrm{SDA}$ (as well as $k\mathrm{SDA}_{imm}$), we expect that a study on the closure properties may reveal fruitful features of $k\mathrm{SDA}$.

\item{} [depth-immune model] As noted in Section \ref{sec:LOGDCFL-beyond}, $\mathrm{LOGDCFL}$ is known to be characterized by numerous models, even with no use of $\dl$-m-reductions. Similarly, we have given in Section \ref{sec:auxiliary} two machine characterizations of  $\mathrm{LOG}k\mathrm{SDA}$. Is there any ``natural'' model that can  capture the depth-immune counterpart, $\mathrm{LOG}k\mathrm{SDA}_{imm}$, with no use of $\dl$-m-reductions?

\item{} [natural hard problems] In Section \ref{sec:hardest-language}, for each index $k\geq2$, we have constructed a language, $\mathrm{MEMB}_k$, which is hard for $\mathrm{LOG}k\mathrm{SDA}$ under $\dl$-m-reductions.  This  language $\mathrm{MEMB}_k$ is generic but looks quite artificial. It is therefore desirable to find ``natural'' hard (or even complete)  problems for $\mathrm{LOG}k\mathrm{SDA}$.
\end{enumerate}



\let\oldbibliography\thebibliography
\renewcommand{\thebibliography}[1]{%
  \oldbibliography{#1}%
  \setlength{\itemsep}{0pt}%
}
\bibliographystyle{plain}

\begin{thebibliography}{91}
{\small

\bibitem{BCMV83}
B. von Braunm{\"u}hl, S. Cook, K. Mehlhorn, and R. Verbeek. The recognition of determinsitic CFLs in small time and space. Information and Control 56 (1983) 34--51.

\bibitem{Cha88}
T. Chan. Pushdown automata with reversal-bounded counters. J. Comput. System Sci. 37 (1988) 269--291.

\bibitem{Coo71}
S. A. Cook. Characterizations of pushdown machines in terms of time-bounded computers. J. ACM 18 (1971) 4--18.

\bibitem{Coo79}
S. A. Cook. Determinsitic CFLs are accepted simultaneously in polynomial time and log squared space.
In Proc. of the 11th Annual ACM Symposium on Theory of Computing
(STOC'79), pp. 338--345, 1979.

\bibitem{DR86}
P. Dymond and W. Ruzzo. Parallel RAMs with owned global memory and deterministic context-free language recognition. In
the Proc. of the 13th International Colloquium on Automata, Languages and Programming (ICALP'86), Lecture Notes in Computer Science, vol. 226, pp. 95--104, Springer, 1986.

\bibitem{FLR96}
H. Fernau, K. J. Lange, and K. Reinhardt. Advocating owership. In
the Proc. of the 16th Conference on Foundations of Software Technology and Theoretical Computer Science
(FST\&TCS'96), Lecture Notes in Computer Science, vol. 1180, pp.286--297, Springer, 1996.

\bibitem{Gal77}
Z. Galil. Some open problems in the theory of computation as questions about two-way determinsitic pushdown automaton languages. Math. Systems Theory 10 (1977) 211--228.

\bibitem{GG66}
S. Ginsburg and S. Greibach. Deterministic context free languages. Inf. Control 9 (1966) 620--648.

\bibitem{GGH67a}
S. Ginsburg, S. A. Greibach, and M. A. Harrison. Stack automata and compiling. J. ACM 14 (1967) 172--201.

\bibitem{GGH67b}
S. Ginsburg, S. A. Greibach, and M. A. Harrison. One-way stack automata.
J. ACM 14 (1967) 389--418.

\bibitem{Har72}
J. Hartmanis. On non-determinacy in simple computing devices. Acta Informatica 1 (1972) 336--344.

\bibitem{Hib67}
T. N. Hibbard. A generalization of context-free determinism.
{Information and Control} 11 (1967) 196--238.

\bibitem{Iba71}
O. H. Ibarra. Characterizations of some tapes and time complexity classes of Tuirng machines in terms of multihead and auxiliary stack automata. J. Comput. Sustem Sci. 5 (1971) 88--117.

\bibitem{LW73}
L. Y. Liu and P. Weiner. An infinite hierarchy of intersections of context-free langauges. Math. Systems Theory 7 (1973) 185--192.

\bibitem{KPW18}
M. Kutrib, G. Pighizzini, and M. Wendlandt. Descriptiopnal complexity of limited automata. Inform. Comput. 259 (2018) 259--276.

\bibitem{KW17}
M. Kutrib and M. Wendlandt. Reversible limited automata. Fundamenta Informaticae 155 (2017) 31--58.

\bibitem{Loh05}
M. Lohrey. Decidability and complexity in automatic monoids. Int. Found. Comput. Sci. 16 (2005) 707--722.

\bibitem{MRV99}
P. McKenzie, K. Reinhardt, and V. Vinay. Circuits and context-free languages, In the Proc. of the 5th Annual International Computing and Combinatorics Conference (COCOON'99), Lecture Notes in Computer Science, vol. 1627, pp. 194--203, 1999.

\bibitem{Med06}
A. Meduna. Deep pushdown automata. Acta Informatica 42 (2006) 541--552.

\bibitem{PP14}
G. Pighizzini and A. Pisoni. Limited automata and regular languages. Int. J. Found. Comput. Sci. 25 (2014) 897--916.

\bibitem{PP15}
G. Pighizzini and A. Pisoni. Limited automata and context-free languages. Fund. Inform. 136 (2015) 157--176.

\bibitem{PP19}
G. Pighizzini and L. Prigioniero. Limited automata and unary languages. Inform. Comput. 266 (2019) 60--74.

\bibitem{Sud78}
I. H. Sudborough. On the tape complexity of determinsitic context-free languages. J. ACM 25 (1978) 405--414.

\bibitem{TYL10}
K. Tadaki, T. Yamakami, and J. C. H. Lin. Theory of one-tape linear-time Turing machines. {Theor. Comput. Sci.} 411 (2010) 22--43.
An extended abstract appeared in the Proc. of the 30th SOFSEM Conference on Current Trends in Theory and Practice of Computer Science (SOFSEM 2004), Lecture Notes in Computer Science, Springer, vol.2932, pp.335--348, 2004.

\bibitem{Yam19}
T. Yamakami. Behavioral strengths and weaknesses of various models of limited automata.
In the Proc. of the 45th International Conference on Current Trends in Theory and Practice of Computer Science (SEFSEM 2019),
Lecture Notes in Computer Science, vol. 11376, pp. 519--530, Springer, 2019.
A complete and corrected version is available at arXiv:2111.05000, 2021.

\bibitem{Yam20}
T. Yamakami. Intersection and union hierarchies of deterministic context-free languages and pumping lemmas. In the Proceedings of the 15th International Conference on Language and Automata Theory and Applications (LATA 2020). Lecture Notes in Computer Science, vol. 12038, pp. 341--353, Springer, 2020. A complete and corrected version is available at arXiv:2112.09383, 2021.

\bibitem{Yu89}
S. Yu. A pumping lemma for determinsitic context-free languages. Inform. Process. Lett. 31 (1989) 47--51.

}
\end{thebibliography}


\end{document}